\documentclass[12pt]{extarticle}
\usepackage[margin=1in]{geometry}
\usepackage{setspace}
\usepackage{natbib}
\setcitestyle{authoryear,open={(},close={)}}
\usepackage{url}
\usepackage{graphicx} 
\usepackage{amsmath}
\usepackage{amssymb}
\usepackage{mathtools}
\usepackage{amsthm}
\usepackage{booktabs}
\usepackage{multirow}
\usepackage{dcolumn}
\usepackage[graphicx]{realboxes}
\usepackage{adjustbox}
\usepackage{pdflscape}
\usepackage{nicematrix}
\usepackage{tablefootnote}
\usepackage{lscape}
\usepackage{tabularray}
\usepackage{float}
\usepackage{graphicx}
\usepackage{longtable}
\usepackage{array}
\usepackage{codehigh}
\usepackage{caption}
\usepackage[normalem]{ulem}
\UseTblrLibrary{booktabs}
\UseTblrLibrary{siunitx}

\usepackage{tikz}
\usepackage{xcolor}
\usetikzlibrary{positioning, shapes.geometric, arrows, calc}
\definecolor{colorOne}{HTML}{2E7D32}   
\definecolor{colorTwo}{HTML}{A5D6A7}   
\definecolor{colorThree}{HTML}{FFFFFF} 

\usepackage{times}

\newcommand{\bbR}[0]{\mathbb{R}}

\newcommand{\Cov}{\operatorname{Cov}}
\newcommand{\Var}{\operatorname{Var}}
\newcommand{\E}{\mathbb{E}}

\newtheorem{theorem}{Theorem}[section]

\newtheorem{definition}{Definition}[section] 
\newtheorem{proposition}{Proposition}[section]

\title{Early Investment Snowballs for Low-Skill Children: \\ \large Average Marginal Returns and the Dynamics of Skill Formation}
\author{Liam Puknys \\ University of Maryland, College Park}
\date{October 16, 2026}

\begin{document}

\begin{singlespace}
    \maketitle
\end{singlespace}

\begin{abstract}
    \begin{singlespace}
    Program evaluations routinely find larger effects of investment in low-skill children, while structural models of skill formation typically find that skill and investment are complements. I reconcile these findings by introducing the Average Marginal Return (AMR), which averages the marginal return to an input over the distribution of other inputs, providing a structural analogue to the average treatment effects estimated in program evaluations. The AMR reveals why input scales and distributions matter for the economic interpretation of structural estimates, why reduced-form interaction coefficients need not identify technological complementarity, and why dynamic complementarity can be difficult to detect in reduced-form work. I estimate AMRs to parental investment using a dynamic factor model of cognitive and noncognitive skill formation for ages 5–14 in the Children of the NLSY79. I accommodate binary and Likert measures using ordered probits, showing that treating them as continuous substantially biases estimates, and introduce effective factors to extend the scale-invariant normalization of \citet{freyberger_2025} to models with multiple skills per period. Investment in early childhood generates larger long-run gains in cognitive skill than investment at later ages. Although skill and investment are complements, the AMR to early investment is initially flat across skill levels but becomes sharply decreasing in initial skill over time. By ages 13–14, children at the 10th percentile of early cognitive skill gain roughly twice as much from early investment as children at the 90th.
    \end{singlespace}
\end{abstract}

\begin{singlespace}
    \textbf{Conferences and Presentations:} I would like to thank participants at the following conferences and presentations for feedback.
    \begin{itemize}
        \item 10/10/2026, Washington University in St. Louis, Economics Graduate Student Conference: Orazio Attanasio, Martin Garcia-Vazquez, Jaehong Lim, Hyun Soo Suh
        \item 8/8/2026, Bureau of Labor Statistics: Allison Aughinbaugh
        \item 6/23/2026, University of Chicago, Center for the Economics of Human Development: James Heckman, Tamara McGavock
        \item 6/23/2026, Federal Reserve Bank of Chicago: Mackenzie Alston, Kristin Butcher, Camilo Garcia-Jimeno, Kelli Marquardt, Derek Neal, Aryan Safi
        \item 4/10/2026, Washington Area Meeting of Economic Scientists: Elise Jang, Garrett Jones, Emel Feliz-Ozbay, Erkut Ozbay
    \end{itemize}
    \textbf{Additional Acknowledgments:} I would like to thank my advisors, Judith Hellerstein and Sergio Urzua, for their mentorship and helpful comments in writing this paper. I would also like to thank Diane Schanzenbach, Eric Chyn, Petra Todd, Andrew Foster for comments when they visited UMD. Finally I'd like to thank Guido Kuersteiner, Daniel Reck, Nolan Pope, Daniel Agness, Ethan Kaplan, John Shea and Mark Wilkerson for further helpful comments.  I used Claude and ChatGPT for research assistance.
\end{singlespace}

\section{Introduction}

There is a puzzle at the heart of child development. Reduced form studies of program evaluations overwhelmingly find that investment in children has larger effects for low-skill and disadvantaged students \citep{bitler_hoynes_domina2014, morris_et_al_2018}. However, structural studies frequently find that child skill and investment in children are complements \citep{attanasio_meghir_nix_2020, cunha_heckman_schennach2010, agostinelli_wiswall_2025}.\footnote{\cite{agostinelli_wiswall_2025}'s negative complementarity in log skills still corresponds to complementarity in levels of skill.} Notably, structural studies of program evaluations like \cite{attanasio_cattan_meghir_rubio_codina2020} find that skill and investment are complements as well, so the divide seems to be a structural-reduced form divide rather than a program analysis-observational study divide. 

In this paper, I reconcile structural and reduced form findings by introducing a new parameter, the Average Marginal Return (AMR). I show that the AMR is different from parameters reported in prior structural studies. It is the marginal counterpart to the average treatment effects reported in reduced form program analyses. 

I show how the AMR can be used to shed light on a number of puzzles facing the child development literature. It provides insight into seemingly innocuous changes in input scales can reverse the economic implications of structural models, into how interaction coefficients in linear models combine complementarity with changes in marginal returns, and into why dynamic complementarity, a leading paradigm in child development, can be difficult to detect in reduced-form studies. 

I then the use dynamic factor framework from \cite{cunha_heckman_schennach2010} to estimate the AMR to investment for children of different ages and skills. To enable this, I model binary and Likert proxies with ordered probit models, showing that treating discrete proxies as continuous is a major source of bias, and introduce the effective factor framework to extend \cite{freyberger_2025}'s scale-invariance normalization scheme to the case where multiple factor outputs are being produced each period. 

I show that the AMR to investment in early childhood is roughly flat in the short term, despite investment and skill being strong complements, and goes to strongly downward sloping in the long term, so that by high school, children and the 10th percentile of the early childhood cognitive skill distribution benefit around twice as much from early investment as children at the 90th percentile of the early childhood cognitive skill distribution do. 

I also show that early investment in children is highly efficient and has higher long-run effects on cognitive skill than investment in later ages. This contrasts with the effects of investment in noncognitive skill, which are highest in adolescence.

\section{General Framework}\label{sec:framework}

Marginal returns in models of skill formation are usually reported as properties of the production function alone. But the policy-relevant marginal return to an input depends on both the technology and the joint distribution of inputs, and the two objects can disagree even in sign. This section defines both objects in a simple model and shows how they come apart.

To illustrate the point of this paper, I will begin with a simple model of human capital production and define several concepts up front that will make understanding the subject of this paper far easier. Suppose that children's human capital (which I use interchangeably with the child's skill here) $\theta$ at time $t$ takes one of three values, $\theta_t\in \{0.5, 1, 2\}$. The child's human capital in period $t+1$ is given by the deterministic CES production function
\begin{equation}\label{eq:basic_CES}
    \theta_{t+1} = \left(\theta_t^{-1} + I_t^{-1}\right)^{-1},
\end{equation}
where $I_t$ is investment in the child by the child's parents at time $t$. Suppose that $I_t$ also takes three values, so that $I_t\in \{0.2, 1, 4.5\}$. A structural economist correctly estimating this model will derive both the functional form and the parameters of Equation (\ref{eq:basic_CES}), as well as what I will call in this paper the \textbf{Beckerian Return Schedule}, which tells you for a given child, what the elasticity of skill next period is with respect to investment this period. Formally, the Beckerian Return Schedule is a mapping $f:\Theta_t \times \mathcal{I}_t\to \mathbb{R}$ where $f(\bar{\theta}, i)=\left.\varepsilon_{I}\right|_{\theta_t = \bar{\theta}, I_t=i}$, where $\varepsilon_I$ is the elasticity of skill in period $t+1$ with respect to investment in time $t$. In this simple model, the Beckerian Return Schedule is given by Figure \ref{fig:BRS_simple}.
\begin{figure}
    \centering
\begin{adjustbox}{max width=\linewidth,center}
\begin{tikzpicture}[
  >=stealth,
  gridline/.style={draw=black,line width=0.45pt},
  axisline/.style={->,draw=black,line width=0.6pt},
  tablevalue/.style={font=\small}
]
  \draw[gridline] (0,0) rectangle (4.5,2.1);
  \draw[gridline] (1.5,0) -- (1.5,2.1);
  \draw[gridline] (3.0,0) -- (3.0,2.1);
  \draw[gridline] (0,0.7) -- (4.5,0.7);
  \draw[gridline] (0,1.4) -- (4.5,1.4);

  \draw[axisline] (0,0) -- (4.9,0);
  \draw[axisline] (0,0) -- (0,2.55);

  \node at (-0.35,1.75) {$2$};
  \node at (-0.35,1.05) {$1$};
  \node at (-0.35,0.35) {$0.5$};
  \node at (0.75,-0.32) {$0.2$};
  \node at (2.25,-0.32) {$1$};
  \node at (3.75,-0.32) {$4.5$};
  \node at (0,2.78) {$\theta_t$};
  \node at (2.25,-0.78) {$I_t$};

  \node[tablevalue] at (0.75,1.75) {0.909};
  \node[tablevalue] at (2.25,1.75) {0.667};
  \node[tablevalue] at (3.75,1.75) {0.308};
  \node[tablevalue] at (0.75,1.05) {0.833};
  \node[tablevalue] at (2.25,1.05) {0.500};
  \node[tablevalue] at (3.75,1.05) {0.182};
  \node[tablevalue] at (0.75,0.35) {0.714};
  \node[tablevalue] at (2.25,0.35) {0.333};
  \node[tablevalue] at (3.75,0.35) {0.100};

  \node[font=\tiny,text=black!55] at (2.25,-1.12)
    {AI-assisted figure: OpenAI GPT-5, 2026-09-08.};
\end{tikzpicture}
\end{adjustbox}
    \caption{The Beckerian Return Schedule}
    \label{fig:BRS_simple}
\end{figure}
Notably, the Beckerian Return Schedule is defined independently of the joint distribution of $\theta_t$ and $I_t$. Let that joint distribution be given by Figure \ref{fig:joint_dist_simple}.
\begin{figure}
    \centering
\begin{adjustbox}{max width=\linewidth,center}
\begin{tikzpicture}[
  >=stealth,
  gridline/.style={draw=black,line width=0.45pt},
  axisline/.style={->,draw=black,line width=0.6pt},
  tablevalue/.style={font=\small}
]
  \draw[gridline] (0,0) rectangle (4.5,2.1);
  \draw[gridline] (1.5,0) -- (1.5,2.1);
  \draw[gridline] (3.0,0) -- (3.0,2.1);
  \draw[gridline] (0,0.7) -- (4.5,0.7);
  \draw[gridline] (0,1.4) -- (4.5,1.4);

  \draw[axisline] (0,0) -- (4.9,0);
  \draw[axisline] (0,0) -- (0,2.55);

  \node at (-0.35,1.75) {$2$};
  \node at (-0.35,1.05) {$1$};
  \node at (-0.35,0.35) {$0.5$};
  \node at (0.75,-0.32) {$0.2$};
  \node at (2.25,-0.32) {$1$};
  \node at (3.75,-0.32) {$4.5$};
  \node at (0,2.78) {$\theta_t$};
  \node at (2.25,-0.78) {$I_t$};

  \node[tablevalue] at (0.75,1.75) {0.01};
  \node[tablevalue] at (2.25,1.75) {0.06};
  \node[tablevalue] at (3.75,1.75) {0.18};
  \node[tablevalue] at (0.75,1.05) {0.06};
  \node[tablevalue] at (2.25,1.05) {0.38};
  \node[tablevalue] at (3.75,1.05) {0.06};
  \node[tablevalue] at (0.75,0.35) {0.18};
  \node[tablevalue] at (2.25,0.35) {0.06};
  \node[tablevalue] at (3.75,0.35) {0.01};

  \node[font=\tiny,text=black!55] at (2.25,-1.12)
    {AI-assisted figure: OpenAI GPT-5, 2026-09-08.};
\end{tikzpicture}
\end{adjustbox}
    \caption{Joint Distribution of $\theta_t$ and $I_t$}
    \label{fig:joint_dist_simple}
\end{figure}

The Beckerian Return Schedule is vital to understand if you are a parent, parenting a specific child.\footnote{For example, the parents I know who read Emily Oster's books on parenting in order do so to get more information on how to parent their own particular child \citep{oster2013, oster2019, oster2021}. They are trying to get information on the Beckerian Return Schedule, in particular, the cells that pertain to their child.} It is also important if you are an econometrician trying to model parental choices, as Gary Becker did in much of his seminal work \citep{becker_tomes1976, becker1991}.

However, the Beckerian Return Schedule alone is not very useful for policy makers. Policy makers frequently only see a few of the relevant variables that go into producing student human capital, and necessarily need to target groups of students, not individuals. For example, they may see test scores, which are proxies for $\theta_t$, but have very limited information on a child's home environment, $I_t$. 

For the sake of this simple model, suppose that policy makers can marginally increase investment in children and want to know which students will see the largest test score gains/human capital increases from the additional investment. However, the policy maker only sees test scores ($\theta_t$) and has no other information on the students. A real-world policy analog to this hypothetical is ``do I give my next dollar to the gifted and talented program or to the remedial program?'' Then the policy maker wants to know the \textbf{Average Marginal Return} (AMR) of investment for each value of $\theta_t$. The average marginal return at $\theta_t=\bar{\theta}$ is the integral of the elasticity of skill with respect to investment over the conditional distribution of other inputs given $\theta_t=\bar{\theta}$. In this problem the AMR is $\sum_{i} P(I_t=i|\theta_t=\bar{\theta}) \left.\varepsilon_I\right|_{\theta_t=\bar{\theta}, I_t=i}.$ Calculating it requires both the Beckerian Return Schedule in Figure \ref{fig:BRS_simple} and the joint distribution in Figure \ref{fig:joint_dist_simple}. 

Frequently, structural economists reporting marginal returns will calculate a slice of the Beckerian Return Schedule (a Beckerian Return Slice), holding all inputs other than $\theta_t$ at fixed level, like their means or medians, and plot the resulting slice of the Beckerian Return Schedule \citep{attanasio_cattan_meghir_rubio_codina2020, freyberger_2025, attanasio_raquel_giannola_nores2026}. This has caused some confusion: many structural studies find that skill and investment are complements \citep{cunha_heckman_schennach2010, agostinelli_wiswall_2025}, so plotted Beckerian Return Slices frequently show higher returns for higher skilled children even though reduced form studies overwhelmingly find higher returns for lower skilled children \citep{bitler_hoynes_domina2014, morris_et_al_2018}. Intertemporal dynamics like dynamic complementarity \citep{List2026, heckman_mosso2014} certainly play a role in bridging the gap between structural and reduced form studies here, and I will discuss them extensively in this piece. However, reduced form studies typically investigate heterogeneity in treatment effects via conditional average treatment effects, while structural studies usually plot slices of the Beckerian Return Schedule. These are not the same objects. Structural economists should report Average Marginal Returns instead of Beckerian Return Slices if they want to back out reduced-formed objects from their models.

A Beckerian Return Slice along $\theta_t$ can be quite different than the Average Marginal Returns along $\theta_t$, as Figure \ref{fig:bsr_v_amr_simple} shows.
\begin{figure}
    \centering
\begin{adjustbox}{max width=\linewidth,center}
\begin{tikzpicture}[
  >=stealth,
  gridline/.style={draw=black,line width=0.45pt},
  axisline/.style={->,draw=black,line width=0.6pt},
  tablevalue/.style={font=\small}
]
  \fill[yellow!35] (1.5,0) rectangle (3.0,2.1);
  \fill[red!25] (5.15,0) rectangle (6.65,2.1);

  \draw[gridline] (0,0) rectangle (4.5,2.1);
  \draw[gridline] (1.5,0) -- (1.5,2.1);
  \draw[gridline] (3.0,0) -- (3.0,2.1);
  \draw[gridline] (0,0.7) -- (4.5,0.7);
  \draw[gridline] (0,1.4) -- (4.5,1.4);

  \draw[axisline] (0,0) -- (4.9,0);
  \draw[axisline] (0,0) -- (0,2.55);

  \node at (-0.35,1.75) {$2$};
  \node at (-0.35,1.05) {$1$};
  \node at (-0.35,0.35) {$0.5$};
  \node at (0.75,-0.32) {$0.2$};
  \node at (2.25,-0.32) {$1$};
  \node at (3.75,-0.32) {$4.5$};
  \node at (0,2.78) {$\theta_t$};
  \node at (2.25,-0.78) {$I_t$};

  \node[tablevalue] at (0.75,1.75) {0.909};
  \node[tablevalue] at (2.25,1.75) {0.667};
  \node[tablevalue] at (3.75,1.75) {0.308};
  \node[tablevalue] at (0.75,1.05) {0.833};
  \node[tablevalue] at (2.25,1.05) {0.500};
  \node[tablevalue] at (3.75,1.05) {0.182};
  \node[tablevalue] at (0.75,0.35) {0.714};
  \node[tablevalue] at (2.25,0.35) {0.333};
  \node[tablevalue] at (3.75,0.35) {0.100};

  \draw[gridline] (5.15,0.7) -- (6.65,0.7);
  \draw[gridline] (5.15,1.4) -- (6.65,1.4);

  \node[tablevalue] at (5.90,1.75) {0.4179};
  \node[tablevalue] at (5.90,1.05) {0.5018};
  \node[tablevalue] at (5.90,0.35) {0.5983};
  \node[font=\bfseries] at (5.90,2.48) {AMR};

  \node[align=center,font=\small] (slice-label) at (2.25,3.35)
    {Beckerian Return Slice at Median $I_t$};
  \draw[axisline] (slice-label.south) -- (2.25,2.13);

  \node[font=\tiny,text=black!55] at (3.325,-1.12)
    {AI-assisted figure: OpenAI GPT-5, 2026-09-08.};
\end{tikzpicture}
\end{adjustbox}
    \caption{Beckerian Return Slice vs Average Marginal Returns}
    \label{fig:bsr_v_amr_simple}
\end{figure}
The Beckerian Return Slice at Median $I_t$, $1$, reflects the properties of the human capital production function from Equation (\ref{eq:basic_CES}). Skill stock and investment are complements, and so holding investment constant, more skilled students have higher marginal returns to investment (ranging from $\frac{1}{3}$ for the least skilled children to $\frac{2}{3}$ for the most skilled). However, the AMR by $\theta_t$ tells a very different story. The average return on investment is higher for low skilled children than it is for high skilled children, ranging from $0.60$ for the lowest skilled to $0.42$ for the most skilled. 

A structural econometrician who hands policy makers the Beckerian Return Slice and says ``marginal returns to investment are highest for high skilled kids'' will have deeply misled them. Even though all else equal low skilled children benefit less from investment than high skilled children do, \textit{not all else is equal}. The AMR takes the joint inequality in inputs into account. A Beckerian Return Slice does not. In this toy example, low skilled children receive so much less investment than high skilled children that on average, low skilled children actually benefit more from investment even though skill and investment are strong complements. Note that this is not a general result: in theory, the AMR can have the opposite slope as a BRS slice, the same slope, a steeper slope, or a flat slope.

\subsection{Definitions}\label{sec:framework:definitions}

The toy model above has one input, one output, two periods, and no shocks. The objects it illustrates are general, and this subsection formally defines them in general terms. Fix an output $\theta_{i,t+h}$ from a vector of outputs $\theta_{t+h}$ produced over a horizon of $h\geq 1$ periods. Let $Y=(Y_t,\dots,Y_{t+h-1})\in\mathcal{Y}$ collect every input over the horizon (in the toy model $Y=(Y_t)=(\ln\theta_t, \ln I_t)$). Let $\eta=(\eta_t,\dots,\eta_{t+h-1})\in\mathcal{E}$ collect a vector of shocks in each period. Intermediate outputs $\theta_{t+1},\dots,\theta_{t+h-1}$ are produced along the way and substituted out, so that
\begin{equation}\label{eq:def_g}
    \ln\theta_{i,t+h}=g_i(\ln Y,\eta),
\end{equation}
where $g_i$ is the $i$th output component of $g$, the composition of the period-by-period technologies.\footnote{Whether later inputs are held fixed or respond to earlier ones is a choice embedded in $g$. For example, if $g$ holds later investment fixed, the derivatives below are properties of the technology alone; if $g$ accounts for investment responses, they are policy elasticities that include behavioral responses. Section \ref{sec:early_childhood_cog:constr_invest_elast} states the choice used in the empirical work.} For a specific input $y_j$ (investment in period $t$, say), let
\begin{equation}\label{eq:def_elasticity}
    \varepsilon_j\equiv\frac{\partial\ln\theta_{t+h}}{\partial\ln y_j}=\frac{\partial g(\ln Y,\eta)}{\partial\ln y_j}
\end{equation}
denote the elasticity of $\theta_{t+h}$ with respect to $Y_j$. Because $Y$ and $\eta$ are random, $\varepsilon_j$ is a random variable; the definitions below differ only in what is done with that randomness. Throughout, $F$ denotes the joint distribution of $(Y,\eta)$ and a breve marks a value at which a variable is held fixed.

\begin{definition}[Beckerian Return at a Point]\label{def:br_point}
The Beckerian Return to input $Y_j$ at the point $(\breve{y},\breve{\eta})\in\mathcal{Y}\times\mathcal{E}$ is the elasticity evaluated there,
\[
    \left.\varepsilon_j\right|_{Y=\breve{y},\,\eta=\breve{\eta}}
    =\left.\frac{\partial g(\ln Y,\eta)}{\partial\ln Y_j}\right|_{Y=\breve{y},\,\eta=\breve{\eta}}.
\]
\end{definition}

\begin{definition}[Beckerian Return Schedule]\label{def:brs}
The Beckerian Return Schedule to input $Y_j$ is the mapping from points to Beckerian Returns,
\[
    \mathrm{BRS}_j:\mathcal{Y}\times\mathcal{E}\to\bbR,
    \qquad
    (\breve{y},\breve{\eta})\mapsto\left.\varepsilon_j\right|_{Y=\breve{y},\,\eta=\breve{\eta}}.
\]
\end{definition}

The schedule is a property of $g$ alone. It is defined without reference to $F$, and two populations with the same technology have the same schedule however differently their inputs are distributed. Because the schedule has as many arguments as there are inputs and shocks, it is rarely reported in full. What is frequently reported is a slice of the BRS.

\begin{definition}[Beckerian Return Slice]\label{def:br_slice}
Fix a reference value $\bar{y}_{-k}$ for every input other than $y_k$ and fix a reference value $\bar{\eta}$ for the shocks (in practice medians, means, or zero). The Beckerian Return Slice to input $y_j$ along input $y_k$ is the one-dimensional restriction of the schedule
\[
    \mathrm{BRS}_{j\mid k}(\breve{y}_k)
    =\left.\varepsilon_j\right|_{Y_k=\breve{y}_k,\,Y_{-k}=\bar{y}_{-k},\,\eta=\bar{\eta}},
\]
viewed as a function of $\breve{y}_k$.
\end{definition}

In Figure \ref{fig:bsr_v_amr_simple}, $j$ is investment in period $t$, $k$ is skill in period $t$, and $\bar{y}_{-k}$ is the median of investment. A slice fixes every other input at a single value and so, like the full schedule, ignores $F$.

\begin{definition}[Total Average Marginal Return]\label{def:amr_total}
The total Average Marginal Return to input $y_j$ is the elasticity averaged over the joint distribution of all inputs and shocks,
\[
    \mathrm{AMR}_j=\E[\varepsilon_j]
    =\int_{\mathcal{Y}\times\mathcal{E}}\frac{\partial g(\ln y,\eta)}{\partial\ln y_j}\,dF(y,\eta).
\]
\end{definition}

\begin{definition}[Conditional Average Marginal Return]\label{def:amr_cond}
The Average Marginal Return to input $y_j$ conditional on input $y_k$ taking the value $\breve{y}_k$ is the elasticity averaged over the conditional distribution of all other inputs and shocks given $y_k=\breve{y}_k$,
\[
    \mathrm{AMR}_{j\mid k}(\breve{y}_k)=\E\!\left[\varepsilon_j\,\middle|\,y_k=\breve{y}_k\right]
    =\int\frac{\partial g(\ln y,\eta)}{\partial\ln y_j}\bigg|_{y_k=\breve{y}_k}\,dF\!\left(y_{-k},\eta\,\middle|\,y_k=\breve{y}_k\right),
\]
viewed as a function of $\breve{y}_k$. The same definition applies with $Y_k=\breve{y}_k$ replaced by $Y_k\in B$ for a set $B$, which is how the empirical sections report AMRs within percentile bins of a conditioning input.
\end{definition}

\subsection{The AMR as a Treatment Effect}\label{sec:framework:treatment_effects}

The AMR can be interpreted as the marginal limit of an average treatment effect. Returning to the toy example, I depart from the notation used in the rest of the paper and instead adopt the standard notation of the treatment effects literature: let $Y=\ln\theta_{t+1}$, $D=\ln I_t$, and $Y(d)$ denote the potential outcome under $D=d$, holding the child's other inputs at their realized values, so that $\partial Y(d)/\partial d=\varepsilon_I(d)$. Consider a policy that raises every child's log investment by $\delta$. Treating the policy as a binary treatment, its conditional average treatment effect among children with $\theta_t=\bar{\theta}$ is $\mathrm{CATE}(\delta\mid\bar{\theta})=\E[Y(D+\delta)-Y(D)\mid\theta_t=\bar{\theta}]$. Taking the limit as $\delta$ approaches zero gives
\begin{equation}\label{eq:amr_cate}
    \mathrm{AMR}(\bar{\theta})
    =\E\!\left[\varepsilon_I\mid\theta_t=\bar{\theta}\right]
    =\lim_{\delta\to 0}\frac{\mathrm{CATE}(\delta\mid\bar{\theta})}{\delta}.
\end{equation}
The AMR is thus the conditional average treatment effect of a marginal, proportional increase in investment, per unit of that increase. Here, ``marginal'' refers to a small increase in investment for every child, not to the marginal individual of the marginal treatment effect \citep{heckman_vytlacil2005, heckman_urzua_vytlacil2006, carneiro_heckman_vytlacil2010}.

\section{Theoretical Applications of the AMR}\label{sec:applications}

To illustrate the usefulness of the AMR, I show how it can be used to help explain three puzzles in empirical work: why changing the scale of inputs can change the policy implications of a model \citep{freyberger_2025}, why adding higher order terms to a linear regression can change interaction coefficients, and why dynamic complementarity can be difficult to detect in reduced form work \citep{attanasio_2026}.

\subsection{Importance of Scale}\label{sec:applications:scale}

Scaling issues have plagued studies of child skill formation. For example, \cite{freyberger_2025} finds that the age five (Beckerian) marginal products and estimated impacts of income transfers reported in \cite{attanasio_meghir_nix_2020} flip sign after converting investment from a standardized scale to 100 INR. While dramatic, Freyberger's finding is not unique: the production function estimated in \cite{attanasio_cattan_meghir_rubio_codina2020} changed from perfect substitutes to Cobb-Douglas after they rescaled their factors between early analysis and publication. The distinction between the AMR and BRS helps explain why seemingly innocuous changes in input normalizations can have dramatic impacts on the economic implications of structural models.\footnote{This section helps build intuition for the technical conditions laid out in \cite{freyberger_2025}.}

Assume that skill and investment in period $t$ are jointly log-normally distributed and that the child skill production function is CES:
\begin{equation}
    \theta_{t+1}=\left(\gamma_\theta \theta_t^\rho + \gamma_I I^\rho \right)^{1/\rho}\eta_t,
\end{equation}
where $\rho$ is the substitution parameter, $\eta$ is an idiosyncratic error term that is iid and with $E[\ln \eta_t] = 0$, and the $\gamma$'s are positive constants. This is the setting of \cite{cunha_heckman_schennach2010}, the seminal paper that introduced dynamic factor analysis. Then, the derivative of the AMR with respect to $\ln\theta_t$ is
\begin{equation}
    \frac{d}{d\ln\theta_t}\E[\varepsilon_I|\theta_t]= -\rho (1-\beta_t)\E[\varepsilon_I\varepsilon_\theta|\theta_t],
\end{equation}
where $\beta_t=\frac{\operatorname{Cov}(\ln\theta_t, \ln I_t)}{\operatorname{Var}(\ln\theta_t)}$, and the derivative of the BRS with respect to $\ln\theta_t$ is
\begin{equation}
    \frac{d^2 \ln\theta_{t+1}}{d\ln\theta_t d\ln I_t} = -\rho \varepsilon_I\varepsilon_\theta.
\end{equation}
Note that the slope of the AMR and the BRS have different signs if and only if $1-\beta\leq 0$, that is, if
\begin{equation}\label{eq:key_condition}
    \operatorname{Cov}(\ln\theta_t, \ln I_t) \geq \operatorname{Var}(\ln\theta_t).
\end{equation}

Condition \ref{eq:key_condition} is not scale invariant. It can never be met if the variance of both log factors is normalized to 1. It can also be forced to hold by rescaling either log skill or log investment. If the factors are improperly scaled, the model cannot properly fit both the AMR and BRS. It will be torn between the two, potentially distorting $\rho$ and the $\gamma$. \cite{freyberger_2025} shows how dramatic an impact this improper scaling can have on real data.

Condition \ref{eq:key_condition} along with the distinction between the AMR and the BRS helps explain why seemingly innocuous changes to input scales in nonlinear models can cause dramatic changes in the estimates and economic implications of those models. I discuss in Section \ref{sec:factor_scales} what the correct scale to use in my empirical application is, drawing heavily on \cite{freyberger_2025}.

\subsection{Bias in Linear Model Interaction Terms}\label{sec:applications:translog}

In this section, I show that distinction between the AMR and BRS helps clarify why adding higher order terms can change estimates of complementarity in a linear models with interaction effects \citep{deming2017, agostinelli_wiswall_2025, agostinelli_saharkhir_wiswall2025}. 

Applied researchers frequently use a linear model with interaction effects to estimate complementarity in an underlying production function they are approximating. In this section, I use the notation and true production function from Equation (\ref{eq:basic_CES}). It is common to approximate a production function $g$ using an expression of the form
\begin{equation}\label{eq:translog_like}
    \ln\theta_{t+1} = \beta_0^L + \beta_\theta^L \ln\theta_t + \beta_I^L \ln I_t + \beta_{\theta, I}^L\ln\theta_t\ln I_t.
\end{equation}
This log-linear interaction effects specification (LLI) is similar to a true translog specification (given by equation \ref{eq:translog_true}), except that it drops the quadratic terms that are vital for capturing changes returns to each input and make translog a second-order approximation of any well-behaved production function.
\begin{equation}\label{eq:translog_true}
    \ln\theta_{t+1} = \beta_0^T + \beta_\theta^T \ln\theta_t + \beta_I^T \ln I_t + \beta_{\theta, I}^T\ln\theta_t\ln I_t + \beta_{II}^T\ln I_t^2 + \beta_{\theta \theta}^T \ln\theta_t^2
\end{equation}

Under the LLI approximation given by equation (\ref{eq:translog_like}), the Beckerian Return at a point and the AMR calculated at the same value of $\theta_t$ are both given by $$\varepsilon_I|_{\theta_t=\bar\theta, I_t=i}=\beta_I+\beta_{\theta,I}\ln\bar{\theta}=\E[\varepsilon_I| \theta_t=\bar{\theta}].$$ The LLI approximation conflates the AMR and the BRS, which leads to a predictably signable bias in $\beta_{\theta I}^L$ relative to the true translog $\beta_{\theta I}^T$:
\begin{equation}\label{eq:LLI_bias}
    \beta_{\theta I}^{\text{L}} = \beta_{\theta I}^{\text{T}} + \frac{\E\left[\widetilde{(\ln\theta_t\ln I_t)}\,\widetilde{(\ln\theta_t)^2}\right]}{\E\left[\widetilde{(\ln\theta_t \ln I_t)}^{\,2}\right]} \beta^{\text{T}}_{\theta\theta} + \frac{\E\left[\widetilde{(\ln\theta_t\ln I_t)}\,\widetilde{(\ln I_t)^2}\right]}{\E\left[\widetilde{(\ln\theta_t \ln I_t)}^{\,2}\right]} \beta^{\text{T}}_{II},
\end{equation}
where $\widetilde{x}$ denotes the residuals of projecting $\ln\theta_t \ln I_t$, $(\ln\theta_t)^2$, and $(\ln I_t)^2$ onto $(1, \ln \theta_t, \ln I_t)$. A formal proof and statement of Equation (\ref{eq:LLI_bias}) can be found in the Appendix in Proposition \ref{prop:translog_bias}.

Equation (\ref{eq:LLI_bias}) illustrates how misspecification can cause the BRS and AMR to bias each other. The joint distribution and changes in marginal returns interact with each other and together contaminate the interaction coefficient.

In practice, reduced-form researchers can often increase both the size and statistical significance of their coefficients while more accurately capturing complementarities just by adding quadratic terms to their regressions. For example, \cite{deming2017} investigates the interactions between social and cognitive skills in determining wages on the NLSY79. Adding quadratic terms to Column 6 of his Table I causes his interaction effect between Cognitive and Social skills to increase by 40\% from 0.010 to 0.014 and to go from a p-value of .08 to a p-value of .03. 

In contrast, under the true translog function given by equation (\ref{eq:translog_true}), the Beckerian schedule of returns for an individual with skill stock $\theta_t$ and investment $I_t$ are given by $$\varepsilon_I|_{\theta_t, I_t} = \beta_I+\beta_{\theta,I}\ln\theta + 2\beta_{II}\ln I_t,$$ which is generally not equal to the expected marginal returns to investment for an individual with skill stock $\theta_t$, $$\E[\varepsilon_I|\theta_t]=\beta_I+\beta_{\theta,I}\ln\theta + 2\beta_{II}\left[\E[\ln I_t] + \frac{\Cov(\ln I_t, \ln \theta_t)}{\Var(\ln\theta_t)}(\ln\theta_t-\E[\ln\theta_t]) \right].$$ As a result, the derivative of the expected marginal returns to investment with respect to skill stock is given by $$\frac{d}{d\ln\theta_t}\E\left[\varepsilon_I|\theta_t\right] = \beta_{\theta, I} + 2\beta_{II}\frac{\Cov(\ln I_t,\ln \theta_t)}{\Var \ln \theta_t}.$$ This produces a situation very similar to the CES case, where a complementary translog function ($\beta_{\theta I}>0$) can still have $\E[\varepsilon_I|\theta_t]$ be decreasing in $\theta_t$ if
\begin{equation}\label{eq:translog_cond}
    -2\beta_{II}\frac{\Cov(\ln I_t, \ln \theta_t)}{\Var(\ln\theta_t)}>\beta_{\theta, I}.
\end{equation}
Condition \ref{eq:translog_cond} is the counterpart of Condition \ref{eq:key_condition} for the true translog case. In a translog expansion of a CES production function, $\beta_{\theta,I}=-2\beta_{II}$, so the two conditions are equivalent if the true data generating process is CES \citep{Kmenta1967}. As discussed earlier, no such condition exists in LLI specifications, and their estimated interaction effects conflate curvature with interaction as a result. In short, LLI specifications conflate AMR and the BRS, and so fail to approximate either when they are substantially different. Reduced-form researchers can often both reduce bias and increase their coefficients' size and significance by switching to full translog approximations.

\subsection{A Static Numerical Example}\label{sec:applications:num_ex}

In this section I walk through a numerical example to help illustrate the theoretical arguments I make in Sections \ref{sec:applications:scale} and \ref{sec:applications:translog}. Suppose that $\theta_{t+1}$ is linked deterministically to $\theta_t$ and $I_t$ through equation (\ref{eq:basic_CES}), so that the true data generating process is CES with strong complementarity. Further suppose that 
\begin{equation}\label{eq:num_ex_dist}
    \begin{pmatrix}
            \ln \theta_t \\ \ln I_t
        \end{pmatrix} \sim \mathcal{N}\left(0, \begin{bmatrix}
            1 & 2 \\ 2 & 5
        \end{bmatrix}\right).
\end{equation}
Then, if $\theta_t = 1$, then $\E[I_t|\theta_t=1] = \sqrt{e}$ and $\E[\varepsilon_I|\theta_t=2] = \E\!\left[\frac{I_t^{-1}}{I_t^{-1}+1^{-1}} \;\middle|\; \theta_t=1\right]=\frac{1}{2}$. If $\theta = 2$, then $\E[I_t|\theta_t=1]=4\sqrt{e}$ and $\E[\varepsilon_I|\theta_t=2] = \E\!\left[\frac{I_t^{-1}}{I_t^{-1}+2^{-1}} \;\middle|\; \theta_t=2\right]\approx 0.36$, which is less than the elasticity when $\theta = 1$. A doubling in skill is associated with a quadrupling in investment, so each unit of investment has less skill to complement it as skill increases. The complementarity causes the expected marginal returns to investment to fall as skill increases. 

\begin{figure}
    \centering
    \includegraphics[width=0.5\linewidth]{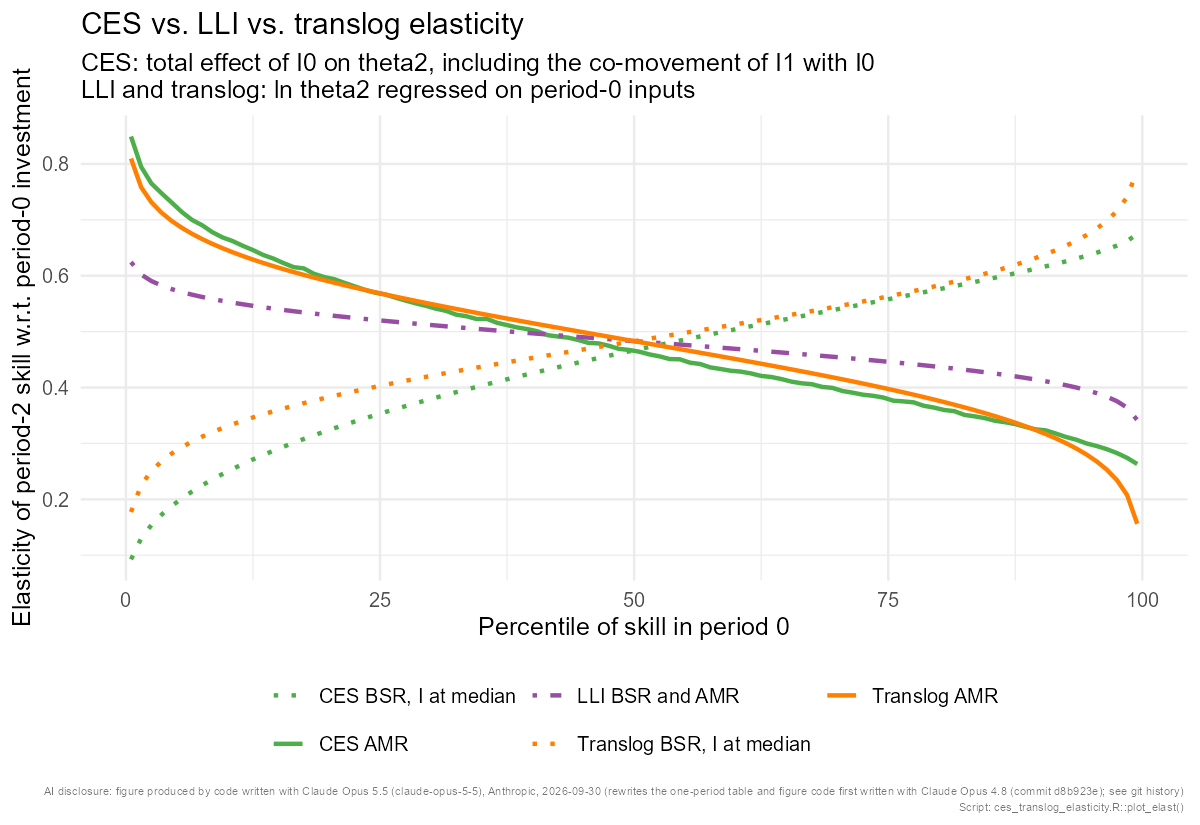}
    \caption{Plotting $\varepsilon_I$ under different modeling assumptions.}
    \label{fig:translog_comparison}
\end{figure}

It is also instructive to see how well the LLI and translog models discussed in \ref{sec:applications:translog} perform under this data generating process. To test this, I generate one million draws from the distribution in Equation (\ref{eq:num_ex_dist}) and then plot both the AMR, $\E[\varepsilon_I|\theta_t=\bar{\theta}]$, and the BRS given by $\varepsilon_I|_{I_t=\operatorname{Median}{I}=1,\theta_t=\bar{\theta}}$ for the correctly specified CES, the LLI approximation, and the translog approximation in Figure \ref{fig:translog_comparison}. Note that while the translog BRS and AMR match the true CES BRS and AMR quite closely, the LLI approximates neither. This is despite the LLI approximation predicting $\ln\theta_{t+1}$ quite closely: Table \ref{tab:translog_coefs} shows that the LLI approximation has an $R^2$ of 0.987. Just because a LLI can closely approximate the output of a production function does not mean that it recovers either the AMR or the BRS well.  

\begin{table}[htbp]
\centering
\caption{Translog Approximation of a CES Production Function}
\label{tab:translog_coefs}
\begin{tabular}{lcc}
\toprule
 & LLI & Translog \\
\midrule
\quad Intercept ($\beta_0$) & -0.781 & -0.721 \\
\quad $\ln\theta_t$ ($\beta_\theta$) & 0.499 & 0.500 \\
\quad $\ln I_t$ ($\beta_I$) & 0.500 & 0.500 \\
\quad $\ln\theta_t \times \ln I_t$ ($\beta_{\theta I}$) & -0.060 & 0.182 \\
\quad $(\ln\theta_t)^2$ ($\beta_{\theta\theta}$) & --- & -0.091 \\
\quad $(\ln I_t)^2$ ($\beta_{II}$) & --- & -0.091 \\
\midrule
$R^2$ & 0.987 & 1.000 \\
$N$ & \multicolumn{2}{c}{1{,}000{,}000} \\
\bottomrule
\end{tabular}
\begin{minipage}{0.95\textwidth}
\vspace{0.5em}\footnotesize
\textit{Notes:} Coefficients from OLS translog regressions on 1{,}000{,}000 simulated draws from a known two-input CES technology ($\rho=-1$), where $\theta_{t+1} = (\theta_t^{-1} + I_t^{-1})^{-1}$. The Log-Linear Interaction (LLI) omits the quadratic terms $(\ln\theta_t)^2$ and $(\ln I_t)^2$; the translog includes all five second-order terms. Because the data-generating process is deterministic, standard errors are negligible ($<0.001$) and are omitted: the residual reflects functional-form approximation error, not sampling noise. --- indicates a term absent from that specification.
\end{minipage}
\end{table}

Notably, $\beta_{\theta, I}$ is positive in the true translog case, accurately reflecting the complementary nature of the CES data generating process, while $\beta_{\theta, I}$ is negative in the LLI specification, just as the bias from Equation (\ref{eq:LLI_bias}) predicts. An economist who interpreted the negative interaction term in the LLI case to mean that skill and investment are substitutes would be deeply mistaken.

\subsection{Why Dynamic Complementarity is Hard to Detect}\label{sec:applications:dc}

I now turn to how the AMR plays out over multiple periods, focusing on dynamic complementarity, a leading theoretical framework in child development \citep{heckman_mosso2014, List2026}. Dynamic complementarity is the idea that early investment raises the return to later investment: if skill and investment are complements, an early intervention raises the productivity of subsequent investments by increasing skill. Prior studies have tested for dynamic complementarity by comparing the effects of later interventions on children who did and did not receive earlier ones \citep{johnson_jackson2019, rossin-slater_wust2020, carneiro_et_al2025, meghir_et_al2023}. In this section, I use the AMR to show that such tests can fail to detect even strong dynamic complementarities. The reason is that dynamic complementarity affects the returns to the entire schedule of later investments, while these tests capture only the marginal return to an additional intervention. Because that marginal return depends on the joint distribution of skill and investment, an early intervention can substantially raise the returns to inframarginal investments while leaving the marginal return unchanged or even reducing it.

This is an important contribution because while dynamic complementarity is a compelling theory, empirical evidence on it is mixed. \cite{johnson_jackson2019} show that the benefits of the Head Start preschool program were large when followed by access to better funded schools and \cite{heckman_tian_zhang_zhou2026DC} shows that children who have been in an early childhood program for longer conditional on age have higher first time passing rates for new tests, both of which are consistent with dynamic complementarity. However, \cite{rossin-slater_wust2020} finds evidence of strong dynamic substitutability between and infant health and a preschool program and \cite{carneiro_et_al2025} finds that the impact of classroom quality at different ages appears to be additive. Some studies are inconclusive. For example, \cite{meghir_et_al2023} finds positive but statistically insignificant interactions between an early stimulation and an enhanced preschool program.\footnote{See \cite{attanasio_2026} for a more in-depth discussion of evidence of dynamic complementarity in applied research.}

The AMR can help explain why evidence for dynamic complementarity is mixed. To see this, we can extend the model in Section \ref{sec:applications:num_ex} to three periods, with each period being governed by the CES transition from Equation (\ref{eq:basic_CES}) and making the joint distribution of skill and investment
\begin{equation}\label{eq:cs_ex_dist}
    \begin{pmatrix}
            \ln \theta_t \\ \ln I_t \\ \ln I_{t+1}
        \end{pmatrix} \sim \mathcal{N}\left(\begin{bmatrix}
            0 \\ 0 \\ \mu_3
        \end{bmatrix}, \begin{bmatrix}
            1 & 2 & .8 \\ 2 & 5 & 2 \\ .8 & 2 & 5
        \end{bmatrix}\right),.
\end{equation}
Where $\mu_3\in \{0,2\}$. This investment environment is based off a stylized version of the US school system. In early childhood ($t$), children are entirely dependent on their parents for investment, so skill and investments are highly correlated. In later childhood ($t+1$) children are in school. School has two effects. First, it raises investment for all children (captured when $\mu_3=2$). Secondly, it decreases the correlation between investment and initial child skill. Even if school quality is correlated with parental involvement, that correlation is less than the direct effect of parental involvement (this is captured by the covariance matrix, which makes $\operatorname{Cov}(\ln \theta_t, \ln I_{t+1}|\ln I_t)=0$).

\begin{figure}
    \centering
    \includegraphics[width=\linewidth]{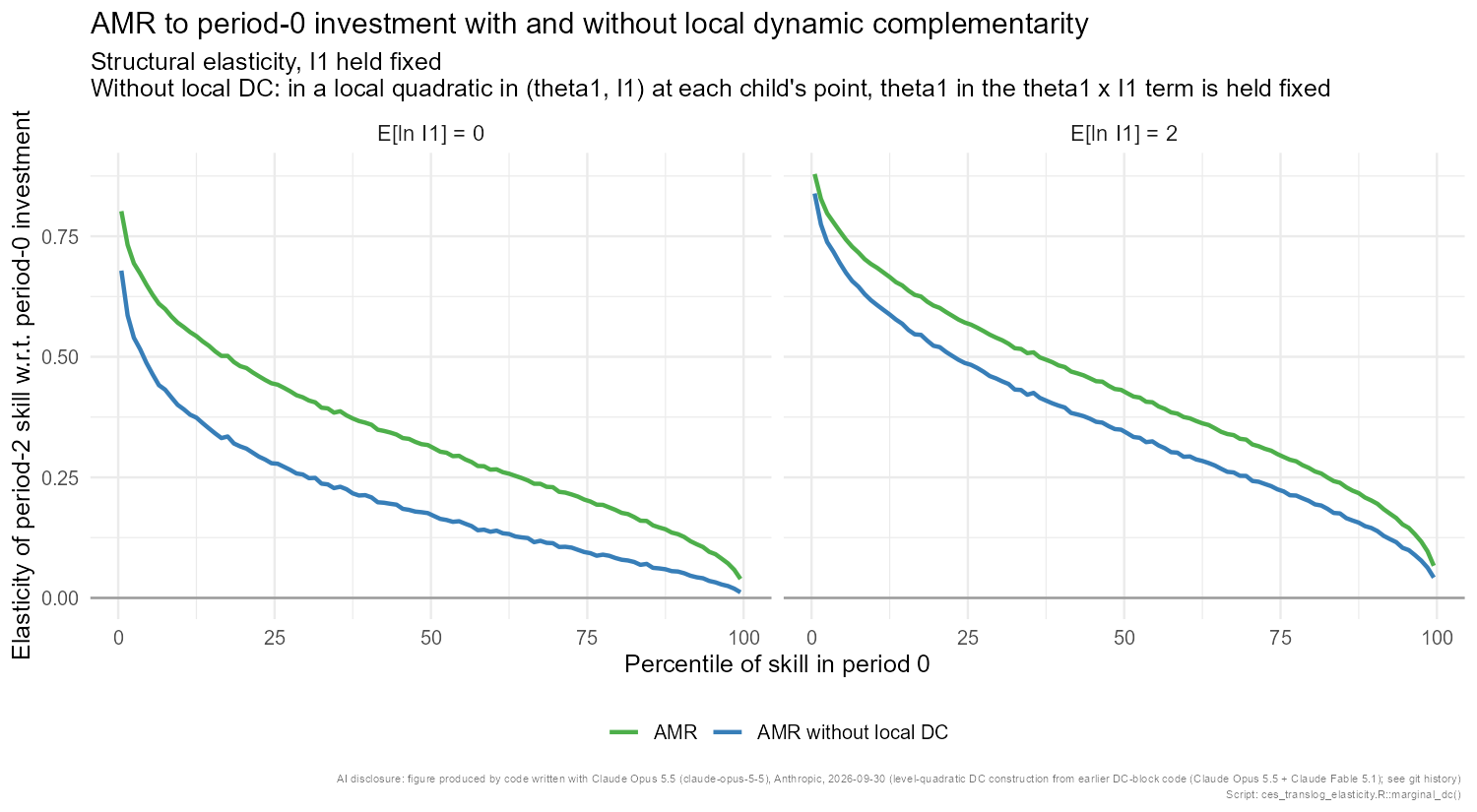}
    \caption{Local Effect of Dynamic Complementarity in an Investment Poor and Investment Rich Environment}
    \label{fig:dc_marginal}
\end{figure}

Throughout this section the AMR is the two-period object: the elasticity of $\theta_{t+2}$ with respect to $I_t$, averaged over the conditional distribution of the other inputs given $\theta_t$, with $I_{t+1}$ held fixed so that parents do not respond. Part of that elasticity runs through dynamic complementarity: early investment raises $\theta_{t+1}$, which raises the marginal return to $I_{t+1}$. To isolate this channel, I approximate the period-$(t+1)$ technology by a quadratic in at each child's own point and recompute the elasticity holding the skill argument of the $\ln\theta_{t+1}\times\ln I_{t+1}$ interaction at its pre-(marginal) intervention value, so that the early investment's effect on $\theta_{t+1}$ no longer feeds through to the return on $I_{t+1}$. I call the difference between the two elasticities the local contribution of dynamic complementarity. Figure \ref{fig:dc_marginal} plots the AMR with and without this contribution, with left-hand panel corresponding to an investment-scarce later childhood ($\mu_3=0$) and the right-hand panel corresponding to an investment-rich later childhood ($\mu_3=2$).

A naive reader might expect dynamic complementarity to play a larger role when investment is more plentiful at school age: the increase in skill from the early childhood program has more investment to enhance. However, Figure \ref{fig:dc_marginal} shows the opposite: local dynamic complementarity matters less when later investment is more abundant. 

\begin{figure}
    \centering
%
\begingroup
\definecolor{dcBlue}{HTML}{2166AC}
\definecolor{dcRed}{HTML}{B52835}

\def\dcDose{0.7}
\def\dcBlueSkill{2.1}
\def\dcRedSkill{4.2}
\def\dcAxisMax{6}

\newcommand{\dcIntervention}[3]{%
  \draw[dc intervention,draw=#3]
    (#1,{#2+0.09}) -- (#1,{#2+\dcDose-0.10});
  \fill[fill=#3] (#1,#2) circle[radius=2.5pt];
  \draw[draw=#3,fill=white,line width=1.15pt]
    (#1,{#2+\dcDose}) circle[radius=2.5pt];
  \node[anchor=west,text=#3,font=\footnotesize,inner sep=1.5pt]
    at ({#1+0.16},{#2+0.5*\dcDose}) {$+\Delta I$};
}

\newcommand{\dcPanel}[7]{%
  \begin{scope}[shift={(#1,#2)}]
    \node[dc panel title] at (-0.25,7.02) {#3};
    \node[dc panel subtitle] at (-0.25,6.62) {#4};

    \foreach \q in {0.7,1.4,2.1,2.8,3.5,4.2,4.9,5.6}{
      \draw[dc isoquant] (\q,\dcAxisMax) -- (\q,\q)
        -- (\dcAxisMax,\q);
    }
    \draw[dc diagonal] (0,0) -- (\dcAxisMax,\dcAxisMax);
    \node[rotate=45,fill=white,inner sep=1.5pt,
          text=black!70,font=\footnotesize]
      at (5.45,5.75) {$I_t=\theta_t$};

    \draw[dc axis] (0,0) -- (6.25,0);
    \draw[dc axis] (0,0) -- (0,6.25);
    \node[rotate=90,font=\small] at (-0.55,3.12)
      {School-age investment, $I_t$};
    \foreach \x/\lab in {\dcBlueSkill/B,\dcRedSkill/R}{
      \draw[black!75,line width=0.45pt] (\x,0) -- (\x,-0.08);
      \node[anchor=north,inner sep=2pt,font=\footnotesize]
        at (\x,-0.10) {$\theta_t^{\lab}$};
    }
    \node[font=\small] at (3.1,-0.78)
      {Skill entering school age, $\theta_t$};

    \dcIntervention{\dcBlueSkill}{#5}{dcBlue}
    \dcIntervention{\dcRedSkill}{#6}{dcRed}

    \node[font=\small] at (3.0,-1.28) {#7};

    \pgfmathsetmacro{\dcBlueGain}{min(\dcDose,max(\dcBlueSkill-(#5),0))}
    \pgfmathsetmacro{\dcRedGain}{min(\dcDose,max(\dcRedSkill-(#6),0))}
    \typeout{DC-CHECK #3: blue=\dcBlueGain; red=\dcRedGain; dose=\dcDose}
  \end{scope}%
}

\begin{tikzpicture}[
  x=1cm,y=1cm,>=stealth,
  font=\small,
  dc axis/.style={->,draw=black!85,line width=0.7pt},
  dc isoquant/.style={draw=black!27,line width=0.55pt},
  dc diagonal/.style={draw=black!70,line width=0.8pt},
  dc intervention/.style={->,line width=1.25pt},
  dc panel title/.style={anchor=west,font=\bfseries\small,inner sep=0pt},
  dc panel subtitle/.style={anchor=west,font=\footnotesize,
                            text=black!70,inner sep=0pt}
]
  \node[font=\large\bfseries] at (7.0,17.17)
    {School-age returns after a preschool intervention};
  \node[font=\small,text=black!80] at (7.0,16.62)
    {Perfect complements: $\theta_{t+1}=\min\{\theta_t,I_t\}$};

  \dcPanel{0}{8.85}
    {A. Classic dynamic complementarity}
    {Preschool moves skill across the $45^{\circ}$ line}
    {2.8}{2.8}
    {School-age gain: \textcolor{dcBlue}{$G_B=0$}\quad
                      \textcolor{dcRed}{$G_R=\Delta I$}}

  \dcPanel{8}{8.85}
    {B. High investment environment}
    {Skill constrains both children}
    {4.9}{4.9}
    {School-age gain: \textcolor{dcBlue}{$G_B=0$}\quad
                      \textcolor{dcRed}{$G_R=0$}}

  \dcPanel{0}{0}
    {C. Low investment environment}
    {Investment constrains both children}
    {0.7}{0.7}
    {School-age gain: \textcolor{dcBlue}{$G_B=\Delta I$}\quad
                      \textcolor{dcRed}{$G_R=\Delta I$}}

  \dcPanel{8}{0}
    {D. Investment response}
    {Investment response partially restores skill constraint}
    {1.75}{3.85}
    {School-age gain: \textcolor{dcBlue}{$G_B=\frac{1}{2}\Delta I$}\quad
                      \textcolor{dcRed}{$G_R=\frac{1}{2}\Delta I$}}

  \begin{scope}[yshift=-0.15cm]
  \fill[dcBlue] (1.3,-1.97) circle[radius=2.5pt];
  \node[anchor=west] at (1.48,-1.97) {Blue: no preschool intervention};
  \fill[dcRed] (8.1,-1.97) circle[radius=2.5pt];
  \node[anchor=west] at (8.28,-1.97) {Red: received preschool intervention};

  \fill[black!75] (0.8,-2.55) circle[radius=2.5pt];
  \node[anchor=west,font=\footnotesize] at (0.98,-2.55) {Before school-age program};
  \draw[black!75,fill=white,line width=1.1pt]
    (6.0,-2.55) circle[radius=2.5pt];
  \node[anchor=west,font=\footnotesize] at (6.18,-2.55) {After school-age program};
  \draw[->,black!75,line width=1.1pt] (11.15,-2.75) -- (11.15,-2.35);
  \node[anchor=west,font=\footnotesize] at (11.33,-2.55) {Same dose $\Delta I$};

  \node[align=center,text width=14.7cm,font=\footnotesize,text=black!75]
    at (7.0,-3.27)
    {Gray lines are isoquants. In panel D, the preschool benefit remains, but additional exogenous school-age investment yields no further gain.};

  \node[font=\tiny,text=black!45] at (7.0,-3.93)
    {TikZ code assisted by OpenAI GPT-6 (model ID unavailable), OpenAI, 2026-09-25.};
  \end{scope}
\end{tikzpicture}
\endgroup
    \caption{Diagram illustrating the relationship between dynamic complementarity and the joint distribution of skill and investment.}
    \label{fig:tikz_dynamic_complementarity}
\end{figure}

The mechanism is visible in the CES itself. With $\rho=-1$ the response of the investment elasticity to skill is $\partial\varepsilon_I/\partial\ln\theta=\varepsilon_I\varepsilon_\theta$, which is largest when the two inputs are balanced and shrinks toward zero when either one dominates. When $\mu_3=2$, investment at $t+1$ swamps skill for most children, skill is the binding input, $\varepsilon_{I,t+1}$ is small, and raising $\theta_{t+1}$ does little to the return on $I_{t+1}$. Figure \ref{fig:tikz_dynamic_complementarity} makes the same point in the limiting case $\rho\to-\infty$, where the CES becomes perfect complements. It shows the effect of a school-age intervention on a hypothetical child. In each panel, the blue dot is the child without a preschool intervention and the red dot is the same child with one; the grey lines are isoquants of the school-age technology. The only thing that changes between panels is the joint distribution of skill and investment in the school age period. The impact of the preschool program on school age skills is the same, as is the increase in school age investment caused by the second hypothetical intervention.

Panel A shows the classic case of dynamic complementarity: the preschool intervention moves the child across the 45 degree line and so later investments have a larger effect. Panel B shows the high investment environment, which corresponds to the second panel of Figure \ref{fig:dc_marginal}. When investment is plentiful, the school age intervention has no impact on either child. Dynamic complementarity mattered a lot for infra-marginal investment, but not for marginal investment. This explains why the local effects of dynamic complementarity are lower in the right-hand panel of Figure \ref{fig:dc_marginal}. Panel C of Figure \ref{fig:tikz_dynamic_complementarity} shows that when investment is sufficiently scarce at school age, dynamic complementarity plays no role. Children who are sufficiently investment starved get the same benefit from the second intervention regardless of whether they received the first.

Panel D shows what happens when parents respond endogenously to investment. In the plotted case, parents increase school age investment in response to the preschool intervention so that the marginal impact of school age investment is the same regardless of whether the child participates in a preschool program or not. If the elasticity of parental investment is greater than the elasticity of school age skill (both with respect to the preschool intervention), perhaps because the preschool intervention lowers the cost of future investment in addition to making it more valuable, than the school age intervention could have a smaller impact on the child who received the preschool intervention. Note that in this case as in Panel B, dynamic complementarity has a large effect on infra-marginal investment, just not on marginal investment. Many studies \citep{attanasio_cattan_meghir_rubio_codina2020, bassi_et_al2026, jha_list_royer_samek2026} find evidence that parents increase later investment in response to earlier interventions, so Panel D has real-world significance. \cite{cotton_et_al2026}, \cite{lim2026Korea} and \cite{lim2026US} find that children with higher cognitive skills complete more practice problems and spend more time on academic self investment. Since parents often respond to early childhood interventions by increasing investment in the child and children may respond similarly, dynamic complementarity would be difficult to detect by looking at the impacts of repeated exogenous investment shocks, even if it plays a large role in enhancing infra-marginal investment.

\section{Data}\label{sec:data}

I use the CNLSY to track children's cognitive and socio-emotional (noncognitive) development from ages 5 to 14. I divide childhood into five two-year periods, with $t=0$ being ages 4–5, $t=1$ ages 6-7, and so on through $t=4$ (ages 13–14). I ignore the earlier ages used in \cite{cunha_heckman_schennach2010} because I lack consistent and reliable proxies for child cognitive skill in before the age of 5. 

Following \cite{cunha_heckman_schennach2010}, I have five factor inputs. $\theta_t$ from Section \ref{sec:framework} becomes a vector with two components: one for child cognitive skill, $\theta_{C,t}$, and one for child noncognitive skill, $\theta_{N,t}$. Investment $I_t$ becomes maternal/parental investment $M_{I,t}$ and maternal cognitive and noncognitive skill, $M_C$ and $M_N$ respectively, enter into the production function as time-invariant factors to capture heterogeneity. In this section, I describe my data and my proxy selection process.

My primary data source is the Children of the National Longitudinal Survey of Youth 1979 (CNLSY). The CNLSY follows the biological children of women in the National Longitudinal Survey of Youth 1979 (NLSY79), a nationally representative sample of individuals who were ages 14--22 in 1979. Beginning in 1986, children in the CNLSY were surveyed biennially through 2020. The survey contains detailed information on children’s cognitive test scores, parental reports of child behavior, and multiple measures of the home environment and parental inputs, making it well suited for studying early childhood skill formation. 

The cognitive and behavioral assessments administered in the CNLSY are consistent from ages 5-14. One advantage of this is that these tests are arguably measuring the same skills in each period, and helps this study avoid the illusion of fadeout that administering different tests at different ages can cause \citep{heckman_zhou2026}.

To measure cognitive skill, I use the Peabody Individual Achievement Test (PIAT), which includes subtests in mathematics, vocabulary, and reading comprehension. To measure child noncognitive skill, I use parental reports from the Behavioral Problems Index (BPI). I use subscales capturing headstrong behavior, antisocial tendencies, peer conflict, hyperactivity, and anxiety. Both of these have been widely used by prior studies as measures of cognitive and noncognitive skills \citep{cunha_heckman_schennach2010, agostinelli_wiswall_2025}.

To measure Parental investment, I use the Home Observation for Measurement of the Environment (HOME) inventory. The HOME inventory records both features of the child’s home environment and parent–child interactions, with available items varying by child age. For example, for children ages three to four, I use the number of books in the household, how often the mother reads to the child, whether the household has a CD player, how often the child goes on a tript to a museam or an outing, and the number of magazines in the home. For children ages five and six, I use the same measures, minus museum trips and outings, and additionally include how often the mother praises the child directly or to another person, whether there is a musical instrument in the home, and whether the child participates in special lessons or extracurricular activities. A full list of investment measures for all ages can be found in Table \ref{tab:signal_table_it}.

In addition to the CNLSY, I use data from the NLSY79 to measure maternal characteristics. These include mother’s education, marital status, age, number of children, and total net household income. Maternal cognitive skill is proxied using subscores from the Armed Services Vocational Aptitude Battery (ASVAB), administered for the vast majority of respondents in 1980, while maternal noncognitive skill is proxied using subscores of the Rosenberg Self-Esteem Scale, administered in 1980. Consistent with prior work \citep{cunha_heckman_schennach2010, agostinelli_wiswall_2025}, I treat these maternal skill measures as time-invariant. 

Many of the factor measures I use are binary or Likert, motivating the discrete measurement system described in Section~\ref{sec:measurement}. I model all investment and noncognitive skill proxies as discrete, and all cognitive skill proxies as continuous, since early testing indicated that there was little gain to treating the PIAT or ASVAB tests as discrete. 

To select proxies, I ran through the proxies used in \cite{cunha_heckman_schennach2010}. For each candidate proxy, I compute a simple diagnostic measure of signal: the fraction of the proxy’s variance that can be attributed to the corresponding latent factor in a static factor model using basic or polychoric factor analysis. I keep proxies only if they exhibited at least 10\% signal in this diagnostic stage. Proxies with very low signal contribute little to identification but add computational cost, so I drop them. For example, I exclude the Rotter Locus of Control subscales commonly used as proxies for maternal noncognitive skill because they exhibit low signal even when modeled as discrete.\footnote{\cite{cunha_heckman_schennach2010} finds they have very low signal in their estimation as well.} The screen was applied to a pre-analysis diagnostic; signal in the estimated dynamic model can be different than it was in the diagnostic tests, and one proxy (reading to the child at 9–10) falls below the 10\% threshold in the final analysis.

To avoid complications arising from within-household spillovers and dynamic sibling interactions, I restrict the analysis to first-born children. For a child to be included in the model I require that I observe at least one proxy for each child skill and parental investment at ages 5-6, as well as all proxies for that child's maternal cognitive and noncognitive skills in the NLSY79. The final sample size is 2,402 child-mother pairs, not all of whom are observed at every age.

Finally, I construct a measure of maternal permanent income using observed household income from ages 22 to 57, which spans the relevant working-life window for all mothers in the sample. For the purposes of constructing permanent income alone, I linearly interpolate missing income observations between survey years. After the final observed income report, I assume household income remains constant at its last observed value. This interpolation is only done for the purposes of calculating household permanent income. The measures of annual household income I use directly in the model are not interpolated in this manner (see Appendix \ref{sec:app:it_likelihood} for details on how I model missing annual covariates). To accommodate log transformations, I add one dollar to all income measures. 

I calculate permanent income using a discount factor of 0.95.\footnote{Permanent Income $=\sum_{a=22}^{57} 0.95^{a-22} \omega_a$, where $\omega_a$ is net household income in 2022 dollars.} Because the NLSY79 was administered biennially after 1994, income in odd years from 1995 to 2021 is linearly interpolated using adjacent even-year observations. This extended income history allows me to distinguish short-run income fluctuations from longer-run resources when modeling parental investment decisions.

\section{An Empirical Model of Skill Formation}\label{sec:model_skill_form}

To trace out the trajectory of child development on real data, I revisit the analysis of \cite{cunha_heckman_schennach2010}, incorporating the elegant control function from \cite{attanasio_meghir_nix_2020}, a two-factor output extension of the normalization scheme of \cite{freyberger_2025}, and a threshold crossing model to measure factors from discrete proxies used in \cite{Montoya2024}. The model is jointly estimated across all periods from ages 5-14. Each period is a two-year block since the CNLSY was administered biennially. 

In this section, I describe the empirical model I use along with the econometric innovations of this paper: effective factors and using an ordered probit to model discrete proxies to measure continuous factors in a dynamic factor model.\footnote{While \cite{almuzara_garcia-vazquez_mullins2026} also use an ordinal measurement system, they also discretize their factors.} Section \ref{sec:measurement} describes how I recover latent factors from discrete and continuous proxies. Section \ref{sec:skill_prod} specifies the generalized CES production function I use. Section \ref{sec:factor_scales} introduces effective-factors, a framework that simplifies discussions of factor normalization and scale, and which allows my production functions to be scale invariant and semi-ordinal. This is important because my latent factors have no natural scale, and the ordinal discrete factors lack the natural distribution that continuous factors inherit from their proxies. 

In section \ref{sec:prod_est_cnlsy:invest_endog}, I discuss how I attempt to address endogeneity of parental investment, as well as how I model non-factor covariates. Prior work using the CNLSY has relied on short-run income as an exogenous shifter of investment, motivated by credit constraints. The longer income histories now available in the NLSY79 allow me to separate short-run income from permanent income and test this assumption. Lastly, I explain how to I calculate AMRs in Section \ref{sec:early_childhood_cog:constr_invest_elast}.

\subsection{Measurement system}\label{sec:measurement}

Unfortunately, I cannot directly observe latent child skills $\theta$ and maternal inputs $M$. Instead, the CNLSY contains proxies for these latent factors, like test scores and behavioral reports. Figure \ref{fig:proxy_visual} shows the relationship between child cognitive skill, $\theta_C$, and its proxies: PIAT math, reading comprehension, and vocabulary tests.
\begin{figure}
    \centering
\begin{adjustbox}{max width=\linewidth, center}
\begin{tikzpicture}[
  factor/.style={ellipse, draw=colorOne, fill=colorTwo, very thick,
                 minimum width=3.6cm, minimum height=1.8cm,
                 align=center, font=\Large},
  disc/.style={rectangle, rounded corners=4pt, draw=colorOne, fill=white,
               thick, dashed, minimum width=4.4cm, minimum height=1.0cm,
               font=\large},
  arr/.style={->, draw=colorOne, line width=1pt},
  eqlbl/.style={font=\normalsize, align=left, anchor=west}
]

\node[disc] (d1) at (0,  1.0) {Math};
\node[disc] (d2) at (-2.5, -0.2) {Reading Comprehension};
\node[disc] (d3) at (0, -1.4) {Vocabulary};

\node[factor] (fN) at (9, -0.2)
  {$\theta_{C,t}$\\[-0.1em]{\footnotesize\itshape child cognitive}};

\draw[arr] (d1.east) -- (fN.west);
\draw[arr] (d2.east) -- (fN.west);
\draw[arr] (d3.east) -- (fN.west);
\end{tikzpicture}
\end{adjustbox}
    \caption{Proxies for Child Cognitive Skill}
    \label{fig:proxy_visual}
\end{figure}

Both maternal and child proxies for cognitive skill are approximately continuous, and for those I use the standard log-linear measurement equation
\begin{equation}
        z_{ijt}=a_{ijt}+\lambda_{ijt}\,\ln\theta_{it}+X'_t\beta_{ijt}+\varepsilon_{ijt}, \label{eq:basic_cont_meas}
\end{equation}
where $j$ indexes proxies, $i$ indexes the latent factor being measured, $t$ indexes time, and $\varepsilon_{ijt}$ is a mean-zero measurement error. $X$ contains a vector of controls, which in practice is age at time of measurement.  Note that the distribution of factors that have continuous proxies is identified via the distribution of those proxies under this linear measurement model. 

While most previous models of dynamic skill formation all have treated proxies as though they are continuous \citep{cunha_heckman_schennach2010, agostinelli_wiswall_2025, attanasio_raquel_giannola_nores2026}, many CNLSY proxies are discrete (binary or Likert scale). Departing from previous studies, I model these discrete proxies using an ordered probit model, replacing continuous measurement equation (\ref{eq:basic_cont_meas}) with
\begin{equation} \label{eq:discrete_meas}
    \dot z_{ijt}=k\;\;\text{if}\;\;\nabla_{i,j,k-1,t}<\lambda_{ijt}\,\ln\theta_{it}+X'_t\beta_{ijt}+\varepsilon_{ijt}\le\nabla_{ijkt},
\end{equation}
where $K_{ij}$ is the number of categories and $\{\nabla_{ijkt}\}_{k=0}^{K_{ij}}$ are unknown cutoffs to be estimated, with $\nabla_{i0t}=-\infty$ and $\nabla_{iK_{ij}t}=\infty$. Note that the intercept $a_{ijt}$ from equation \ref{eq:basic_cont_meas} is absorbed into the cutoffs. 

To ensure my measurement system is identified, for each factor $i$ and each period $t$ I designate the first proxy ($j=1$) to be the anchor proxy for that factor in that period, which implicitly sets the scale and mean of the factor. To identify the variance of the factors, I set $\lambda_{i1t}=1$ for all $i$ and $t$. To identify the means of the factors, I assume that $a_{i1t}=0$ for continuous proxies and that $\nabla_{i,1,K_{i1}-1,t}=0$ for discrete proxies. 

My anchor proxies are as follows. Child cognitive skill is anchored by the PIAT mathematics subtest (continuous). Child noncognitive skill is anchored by headstrong subscale of the Behavioral Problems Index (discrete). Parental investment is anchored by the number of children's books in the child's home (discrete). The time-invariant maternal factors are anchored by the arithmetic reasoning subtest of the ASVAB (continuous) for maternal cognitive skill and by the Rosenberg self-esteem item ``I feel I do not have much to be proud of'' (discrete) for maternal noncognitive skill.

I also assume that $\varepsilon_{ijt}\sim N(0,\sigma^2_{ijt})$ for all continuous proxies and that $\varepsilon_{ijt}\sim N(0,1)$ for all discrete proxies, and that all $\varepsilon$'s are jointly independent across proxies, across time, and with the latent factors. Lastly, I model the log factors at ages 5-6 (except for investment, which is included at ages 3-4, not 5-6) as being distributed according to a mixture of two multivariate normal distributions, which I view as a flexible approximation of their true distribution. The choice of anchor proxy determines the cardinal scale of the factor. I discuss in Section \ref{sec:factor_scales} how to transform these estimated factors so that the elasticities recovered from estimated production functions are scale-invariant.

I will show later that the decision to explicitly model discrete proxies as discrete rather than continuous matters. Treating discrete proxies as continuous leads to substantial bias in Monte Carlo simulations. 
\begin{landscape}
\begin{footnotesize}
\begin{longtable}{l l r l c c c c}
\caption{Percent Signal by Proxy}
\label{tab:signal_table_it}\\
\toprule
Proxy & Age & $n$ & Type & $\widehat\lambda$ & $\widehat{\operatorname{Var}}(\ln f)$ & $\operatorname{Var}(\varepsilon)$ & Pct signal \\
\midrule
\endfirsthead
\multicolumn{8}{l}{\textit{(continued from previous page)}}\\
\toprule
Proxy & Age & $n$ & Type & $\widehat\lambda$ & $\widehat{\operatorname{Var}}(\ln f)$ & $\operatorname{Var}(\varepsilon)$ & Pct signal \\
\midrule
\endhead
\midrule
\multicolumn{8}{r}{\textit{continued on next page \ldots}}\\
\endfoot
\bottomrule
\endlastfoot
\multicolumn{8}{l}{\textbf{Cognitive child skill ($\theta_C$)}} \\
\midrule
PIAT Reading Comp. & 5--6 & 2,232 & Cts. & 1.644 & 0.302 & 0.183 & 0.817 \\
PIAT Math$^{a}$ & 5--6 & 2,386 & Cts. & 1.000 & 0.302 & 0.700 & 0.301 \\
PIAT Reading Recog. & 5--6 & 2,339 & Cts. & 1.807 & 0.302 & 0.021 & 0.980 \\
PIAT Reading Comp. & 7--8 & 1,912 & Cts. & 1.222 & 0.480 & 0.303 & 0.703 \\
PIAT Math$^{a}$ & 7--8 & 1,967 & Cts. & 1.000 & 0.480 & 0.532 & 0.474 \\
PIAT Reading Recog. & 7--8 & 1,968 & Cts. & 1.329 & 0.480 & 0.166 & 0.836 \\
PIAT Reading Comp. & 9--10 & 1,901 & Cts. & 1.103 & 0.548 & 0.377 & 0.638 \\
PIAT Math$^{a}$ & 9--10 & 1,919 & Cts. & 1.000 & 0.548 & 0.478 & 0.534 \\
PIAT Reading Recog. & 9--10 & 1,917 & Cts. & 1.217 & 0.548 & 0.223 & 0.784 \\
PIAT Reading Comp. & 11--12 & 1,785 & Cts. & 1.072 & 0.590 & 0.357 & 0.655 \\
PIAT Math$^{a}$ & 11--12 & 1,800 & Cts. & 1.000 & 0.590 & 0.442 & 0.571 \\
PIAT Reading Recog. & 11--12 & 1,802 & Cts. & 1.145 & 0.590 & 0.263 & 0.746 \\
PIAT Reading Comp. & 13--14 & 1,348 & Cts. & 1.012 & 0.625 & 0.409 & 0.610 \\
PIAT Math$^{a}$ & 13--14 & 1,355 & Cts. & 1.000 & 0.625 & 0.419 & 0.599 \\
PIAT Reading Recog. & 13--14 & 1,359 & Cts. & 1.118 & 0.625 & 0.275 & 0.740 \\
\midrule
\multicolumn{8}{l}{\textbf{Noncognitive child skill ($\theta_N$)}} \\
\midrule
BPI Antisocial & 5--6 & 2,357 & Disc. & 0.889 & 1.479 & 1.000 & 0.539 \\
BPI Anxious & 5--6 & 2,390 & Disc. & 0.754 & 1.479 & 1.000 & 0.457 \\
BPI Headstrong$^{a}$ & 5--6 & 2,393 & Disc. & 1.000 & 1.479 & 1.000 & 0.597 \\
BPI Hyperactive & 5--6 & 2,392 & Disc. & 0.923 & 1.479 & 1.000 & 0.558 \\
BPI Peer Conflict & 5--6 & 2,392 & Disc. & 0.713 & 1.479 & 1.000 & 0.429 \\
BPI Antisocial & 7--8 & 1,988 & Disc. & 0.842 & 1.674 & 1.000 & 0.543 \\
BPI Anxious & 7--8 & 2,024 & Disc. & 0.791 & 1.674 & 1.000 & 0.512 \\
BPI Headstrong$^{a}$ & 7--8 & 2,021 & Disc. & 1.000 & 1.674 & 1.000 & 0.626 \\
BPI Hyperactive & 7--8 & 2,028 & Disc. & 0.862 & 1.674 & 1.000 & 0.554 \\
BPI Peer Conflict & 7--8 & 2,033 & Disc. & 0.739 & 1.674 & 1.000 & 0.478 \\
BPI Antisocial & 9--10 & 1,927 & Disc. & 0.868 & 1.752 & 1.000 & 0.569 \\
BPI Anxious & 9--10 & 1,979 & Disc. & 0.814 & 1.752 & 1.000 & 0.537 \\
BPI Headstrong$^{a}$ & 9--10 & 1,984 & Disc. & 1.000 & 1.752 & 1.000 & 0.637 \\
BPI Hyperactive & 9--10 & 1,978 & Disc. & 0.821 & 1.752 & 1.000 & 0.542 \\
BPI Peer Conflict & 9--10 & 1,984 & Disc. & 0.793 & 1.752 & 1.000 & 0.524 \\
BPI Antisocial & 11--12 & 1,824 & Disc. & 0.833 & 2.117 & 1.000 & 0.595 \\
BPI Anxious & 11--12 & 1,873 & Disc. & 0.730 & 2.117 & 1.000 & 0.530 \\
BPI Headstrong$^{a}$ & 11--12 & 1,873 & Disc. & 1.000 & 2.117 & 1.000 & 0.679 \\
BPI Hyperactive & 11--12 & 1,873 & Disc. & 0.809 & 2.117 & 1.000 & 0.581 \\
BPI Peer Conflict & 11--12 & 1,872 & Disc. & 0.708 & 2.117 & 1.000 & 0.515 \\
BPI Antisocial & 13--14 & 1,420 & Disc. & 0.718 & 2.946 & 1.000 & 0.603 \\
BPI Anxious & 13--14 & 1,457 & Disc. & 0.713 & 2.946 & 1.000 & 0.600 \\
BPI Headstrong$^{a}$ & 13--14 & 1,455 & Disc. & 1.000 & 2.946 & 1.000 & 0.747 \\
BPI Hyperactive & 13--14 & 1,450 & Disc. & 0.687 & 2.946 & 1.000 & 0.582 \\
BPI Peer Conflict & 13--14 & 1,457 & Disc. & 0.644 & 2.946 & 1.000 & 0.550 \\
\midrule
\multicolumn{8}{l}{\textbf{Parental investment ($M_I$)}} \\
\midrule
CD player & 3--4 & 854 & Disc. & 0.530 & 2.524 & 1.000 & 0.415 \\
Mom reads to child & 3--4 & 1,656 & Disc. & 0.665 & 2.524 & 1.000 & 0.527 \\
Museum trips & 3--4 & 1,677 & Disc. & 0.276 & 2.524 & 1.000 & 0.161 \\
Num.\ books$^{a}$ & 3--4 & 1,656 & Disc. & 1.000 & 2.524 & 1.000 & 0.716 \\
Num.\ magazines & 3--4 & 1,647 & Disc. & 0.353 & 2.524 & 1.000 & 0.240 \\
Outings & 3--4 & 1,677 & Disc. & 0.245 & 2.524 & 1.000 & 0.131 \\
CD player & 5--6 & 597 & Disc. & 0.395 & 2.460 & 1.000 & 0.278 \\
Compliments to another & 5--6 & 946 & Disc. & 0.348 & 2.460 & 1.000 & 0.229 \\
Mom reads to child & 5--6 & 2,356 & Disc. & 0.525 & 2.460 & 1.000 & 0.404 \\
Musical instrument & 5--6 & 1,234 & Disc. & 0.321 & 2.460 & 1.000 & 0.202 \\
Num.\ books$^{a}$ & 5--6 & 2,400 & Disc. & 1.000 & 2.460 & 1.000 & 0.711 \\
Num.\ magazines & 5--6 & 1,207 & Disc. & 0.325 & 2.460 & 1.000 & 0.206 \\
Special lessons & 5--6 & 1,234 & Disc. & 0.513 & 2.460 & 1.000 & 0.393 \\
Times praised & 5--6 & 953 & Disc. & 0.415 & 2.460 & 1.000 & 0.298 \\
Compliments to another & 7--8 & 1,967 & Disc. & 0.441 & 1.472 & 1.000 & 0.222 \\
Mom reads to child & 7--8 & 2,037 & Disc. & 0.477 & 1.472 & 1.000 & 0.251 \\
Musical instrument & 7--8 & 2,030 & Disc. & 0.429 & 1.472 & 1.000 & 0.213 \\
Museum trips & 7--8 & 2,034 & Disc. & 0.409 & 1.472 & 1.000 & 0.198 \\
Newspaper & 7--8 & 2,027 & Disc. & 0.278 & 1.472 & 1.000 & 0.102 \\
Num.\ books$^{a}$ & 7--8 & 2,037 & Disc. & 1.000 & 1.472 & 1.000 & 0.596 \\
Special lessons & 7--8 & 2,032 & Disc. & 0.639 & 1.472 & 1.000 & 0.376 \\
Theater trips & 7--8 & 2,033 & Disc. & 0.449 & 1.472 & 1.000 & 0.229 \\
Times praised & 7--8 & 1,972 & Disc. & 0.477 & 1.472 & 1.000 & 0.251 \\
Compliments to another & 9--10 & 1,815 & Disc. & 0.543 & 1.137 & 1.000 & 0.251 \\
Mom reads to child & 9--10 & 971 & Disc. & 0.244 & 1.137 & 1.000 & 0.063 \\
Musical instrument & 9--10 & 1,947 & Disc. & 0.644 & 1.137 & 1.000 & 0.320 \\
Museum trips & 9--10 & 1,948 & Disc. & 0.489 & 1.137 & 1.000 & 0.214 \\
Newspaper & 9--10 & 1,946 & Disc. & 0.337 & 1.137 & 1.000 & 0.114 \\
Num.\ books$^{a}$ & 9--10 & 1,953 & Disc. & 1.000 & 1.137 & 1.000 & 0.532 \\
Special lessons & 9--10 & 1,947 & Disc. & 0.700 & 1.137 & 1.000 & 0.358 \\
Theater trips & 9--10 & 1,949 & Disc. & 0.546 & 1.137 & 1.000 & 0.253 \\
Times praised & 9--10 & 1,557 & Disc. & 0.561 & 1.137 & 1.000 & 0.264 \\
Compliments to another & 11--12 & 1,629 & Disc. & 0.497 & 1.028 & 1.000 & 0.202 \\
Musical instrument & 11--12 & 1,875 & Disc. & 0.653 & 1.028 & 1.000 & 0.304 \\
Museum trips & 11--12 & 1,877 & Disc. & 0.465 & 1.028 & 1.000 & 0.182 \\
Newspaper & 11--12 & 1,872 & Disc. & 0.351 & 1.028 & 1.000 & 0.112 \\
Num.\ books$^{a}$ & 11--12 & 1,874 & Disc. & 1.000 & 1.028 & 1.000 & 0.507 \\
Special lessons & 11--12 & 1,878 & Disc. & 0.689 & 1.028 & 1.000 & 0.328 \\
Theater trips & 11--12 & 1,880 & Disc. & 0.560 & 1.028 & 1.000 & 0.244 \\
Times praised & 11--12 & 1,477 & Disc. & 0.634 & 1.028 & 1.000 & 0.292 \\
Compliments to another & 13--14 & 1,227 & Disc. & 0.557 & 0.784 & 1.000 & 0.195 \\
Musical instrument & 13--14 & 1,460 & Disc. & 0.757 & 0.784 & 1.000 & 0.310 \\
Museum trips & 13--14 & 1,458 & Disc. & 0.522 & 0.784 & 1.000 & 0.176 \\
Newspaper & 13--14 & 1,462 & Disc. & 0.427 & 0.784 & 1.000 & 0.125 \\
Num.\ books$^{a}$ & 13--14 & 1,460 & Disc. & 1.000 & 0.784 & 1.000 & 0.439 \\
Special lessons & 13--14 & 1,462 & Disc. & 0.698 & 0.784 & 1.000 & 0.276 \\
Theater trips & 13--14 & 1,462 & Disc. & 0.625 & 0.784 & 1.000 & 0.235 \\
Times praised & 13--14 & 1,373 & Disc. & 0.653 & 0.784 & 1.000 & 0.251 \\
\midrule
\multicolumn{8}{l}{\textbf{Maternal cognitive skill ($M_C$)}} \\
\midrule
ASVAB Arithmetic$^{a}$ & Maternal & 2,402 & Cts. & 1.000 & 0.713 & 0.285 & 0.715 \\
ASVAB Math Knowl. & Maternal & 2,402 & Cts. & 1.007 & 0.713 & 0.275 & 0.725 \\
ASVAB Numerical & Maternal & 2,402 & Cts. & 0.855 & 0.713 & 0.477 & 0.522 \\
ASVAB Paragraph & Maternal & 2,402 & Cts. & 0.981 & 0.713 & 0.312 & 0.688 \\
ASVAB Speed & Maternal & 2,402 & Cts. & 0.778 & 0.713 & 0.568 & 0.432 \\
ASVAB Word & Maternal & 2,402 & Cts. & 1.010 & 0.713 & 0.271 & 0.728 \\
\midrule
\multicolumn{8}{l}{\textbf{Maternal noncognitive skill ($M_N$)}} \\
\midrule
Self-esteem: Failure (rev.) & Maternal & 2,402 & Disc. & 1.001 & 1.483 & 1.000 & 0.598 \\
Self-esteem: Good qualities & Maternal & 2,402 & Disc. & 0.781 & 1.483 & 1.000 & 0.475 \\
Self-esteem: No good (rev.) & Maternal & 2,402 & Disc. & 0.661 & 1.483 & 1.000 & 0.394 \\
Self-esteem: Not proud (rev.)$^{a}$ & Maternal & 2,402 & Disc. & 1.000 & 1.483 & 1.000 & 0.597 \\
Self-esteem: Positive att. & Maternal & 2,402 & Disc. & 0.773 & 1.483 & 1.000 & 0.470 \\
Self-esteem: Useless (rev.) & Maternal & 2,402 & Disc. & 0.519 & 1.483 & 1.000 & 0.286 \\
Self-esteem: Wish more respect (rev.) & Maternal & 2,402 & Disc. & 0.594 & 1.483 & 1.000 & 0.343 \\
Self-esteem: Worth & Maternal & 2,402 & Disc. & 0.708 & 1.483 & 1.000 & 0.426 \\
\end{longtable}
\end{footnotesize}
\noindent
\begin{minipage}{\linewidth}
\footnotesize
\textit{Notes:} Percent signal $= \widehat\lambda^2\operatorname{Var}(\ln f) / (\widehat\lambda^2\operatorname{Var}(\ln f) + \operatorname{Var}(\varepsilon))$, with $\operatorname{Var}(\varepsilon)=\widehat\sigma_\varepsilon^2$ for continuous proxies and $\operatorname{Var}(\varepsilon)=1$ for ordered-probit discrete proxies; $\operatorname{Var}(\ln f)$ is the variance of the latent factor at the proxy's age in the simulated cohort. $^{a}$ anchor proxy, $\widehat\lambda=1$ by normalization. $n$ is the number of children with a nonmissing value of the proxy ($N = 2,402$). The investment rows at ages 3--4 measure $\ln I_{-1}$, a coordinate of the period-0 mixture. \textit{AI disclosure:} table produced by code written with Claude Opus 5 (claude-opus-5) and Claude Sonnet 5 (claude-sonnet-5) and Claude Opus 5.5 (claude-opus-5-5), Anthropic, 2026-10-08 (CES-share kernel reused from MLE/estimator/plot\_elasticities.R, mixed model lineage; see git history); MLE/estimator/intertemporal/make\_it\_paper\_outputs.R::write\_it\_signal\_table().
\end{minipage}
\end{landscape}

\subsection{Skill production}\label{sec:skill_prod}

I parameterize the production function using a factor-scale-invariant constant-elasticity-of-substitution (CES) technology. Let $\theta_{i,t+1}$ denote latent child skill $i$ in period $t+1$, and let $y_{i',t}$ denote inputs indexed by $i'$, where $i'$ ranges over child skills and parental inputs at age $t$. The law of motion is
\begin{equation} \label{eq:CES_prod}
    \theta_{i,t+1}=\Bigl(\textstyle\sum_{i'}\gamma_{i,i',t}\,y_{i',t}^{\rho_{it}/\phi_{i,i',t}}\Bigr)^{\!\psi_{it}/\rho_{it}}\!\!\exp(Q'_t\kappa_{it}+\nu_{it})
\end{equation}
where $Q$ is a vector of observable controls (\ref{eq:CES_prod}) (an intercept, the number of children in the household, the child's gender, mother's marital status, mother's age, the interaction of number of children with mother's marital status and maternal education), and $\nu_{i,t}$ is an idiosyncratic shock with mean zero. Figure \ref{fig:prod_visual} provides a visual representation of equation \ref{eq:CES_prod}. 
\begin{figure}
    \centering
\begin{adjustbox}{max width=\linewidth, center}
\begin{tikzpicture}[
  factor/.style={ellipse, draw=colorOne, fill=colorTwo, very thick,
                 minimum width=2.0cm, minimum height=1.0cm,
                 align=center, font=\normalsize},
  prodbox/.style={rectangle, draw=colorOne, fill=colorTwo, very thick,
                  double, double distance=1.5pt,
                  minimum width=12cm, minimum height=2.0cm,
                  align=center, font=\normalsize},
  shock/.style={circle, draw=colorOne, fill=white, line width=0.8pt,
                minimum size=1.0cm, font=\large},
  arr/.style={->, draw=colorOne, line width=1pt},
  caption/.style={font=\small\itshape, text=black!70},
  group/.style={font=\small\itshape, text=colorOne}
]
\node[group] at (-4.5, 5.7) {child};
\node[group] at ( 0.0, 5.7) {investment};
\node[group] at ( 4.5, 5.7) {mother};

\node[factor] (i1) at (-6.0, 4) {$\theta_{C,t}$};
\node[factor] (i2) at (-3.0, 4) {$\theta_{N,t}$};
\node[factor] (i3) at ( 0.0, 4) {$M_{I,t}$};
\node[factor] (i4) at ( 3.2, 4) {$M_C$};
\node[factor] (i5) at ( 5.8, 4) {$M_N$};

\node[factor] (y) at (0, 2) {$y_{j,t}$};

\node[prodbox] (ces) at (0, -0.3) {\textbf{CES production}\\[0.2em]
  $\theta_{i,t+1}=\Bigl(\textstyle\sum_j\gamma_{ij}\,y_{j,t}^{\rho_i/\phi_{ij}}\Bigr)^{\!\psi_i/\rho_i}\!\!\exp(Q'_t\kappa_i+\nu_i)$\\[0.05em]
  \itshape\footnotesize how today's skills $+$ investments grow tomorrow's child skills};

\node[shock] (nu) at (8, -1) {$\nu_i$};
\node[caption, right=0.05cm of nu] {shock};
\draw[arr] (nu) -- (ces);

\node[shock] (controls) at (8, .4) {$Q_t'$};
\node[caption, right=0.05cm of controls] {controls};
\draw[arr] (controls) -- (ces);

\node[factor] (oC) at (-2.5, -3) {$\theta_{C,t+1}$};
\node[factor] (oN) at ( 2.5, -3) {$\theta_{N,t+1}$};

\foreach \i in {i1,i2,i3,i4,i5}
  \draw[arr] (\i) -- (y);

\draw[arr] (y.south) -- (y |- ces.north);

\draw[arr] (oC |- ces.south) -- (oC.north);
\draw[arr] (oN |- ces.south) -- (oN.north);
\end{tikzpicture}
\end{adjustbox}
    \caption{Production Process}
    \label{fig:prod_visual}
\end{figure}

The inclusion of the intercept and the $\phi$ and $\psi$ parameters is essential for estimating factor-scale-invariant elasticities. When anchor proxies set factor means, input and output factors are generally on different scales; \cite{freyberger_2025} shows that an intercept in the production function absorbs mean differences between the input and output factors. Similarly, when anchor proxies set factor scales, the $\phi$ and $\psi$ parameters are needed to ensure that estimated elasticities are invariant to those scale choices.

One extension of \cite{freyberger_2025}'s framework in this paper is that the $\phi$'s are allowed to vary between the production functions for cognitive and noncognitive skill within periods. I discuss the motivation and implications of this extension in Section \ref{sec:factor_scales}. I normalize $\phi_{i,i,t}=1$ in all equations, which pins $\rho$ realtive to the acnhor scale of the produced skill.\footnote{It is worth noting that the substitution parameter $\rho$ is not and cannot be identified in a scale-invariant way in this model.}

\subsection{Factor Scale}\label{sec:factor_scales}

\cite{freyberger_2025} shows that the factor elasticities are identified for each production function separately, which could cause a problem here. However, it may be that factors behave as though they have different skew (technically variance of the log factor) in a cognitive skill production function than the do in a noncognitive skill production function. For example, parental investment might behave as though it were approximately normally distributed when producing noncognitive skill but behave as though it were heavily right-skewed when producing cognitive skill. The same underlying ordinal ranking of investment levels might require different cardinal scalings depending on the production context. \cite{freyberger_2025}'s normalization scheme implicitly enforces that investment acts like it has the same skew in both cognitive and noncognitive skill production. This implicit restriction could cause jointly estimated production functions to have different elasticities than separately estimated production functions, which is undesirable. 

Allowing the $\phi$'s and $\psi$'s to differ across production functions within a period allows the skew of input and output factors to differ by the skill being produced. I refer to the transformation of a factor induced by the $\phi$'s and $\psi$'s in Equation (\ref{eq:CES_prod}) as an \textit{effective factor}.

In the context of this paper, the effective factor for an input $i'$ to skill $i$ in time $t$ is $\gamma_{i,i',t}^{\operatorname{Sign}(\rho)}y_{i',t}^\frac{|\rho_{it}|}{\phi_{i,i',t}}$ and the effective factor of output $i$ in time $t+1$ is $\theta_{i,t+1}^\frac{|\rho_{it}|}{\psi_{it}}$. Note that I use $|\rho_{it}|$ because $\rho_{it}<0$ under gross complementarity; the sign is handled separately in equation (\ref{eq:effective_factor_elasts}). 

The identified economic parameters of interest are the elasticities of the effective factors
\begin{equation}\label{eq:effective_factor_elasts}
    \Tilde{\varepsilon}_{ijt}=\frac{\partial \Tilde{\theta}_{ijt+1}}{\partial \Tilde{y}_{ijt}}\frac{\Tilde{y}_{ijt}}{\Tilde{\theta}_{ijt+1}}=\frac{\Tilde{y}_{ijt}^{\operatorname{Sign}(\rho_{it})}}{\sum_{j'}\Tilde{y}_{ij't}^{\operatorname{Sign}(\rho_{it})}},
\end{equation}
which are invariant to choices of anchor proxies or measurement parameters under the measurement system described in Section \ref{sec:measurement}. Following the logic of \cite{freyberger_2025}, the $\phi$'s, $\gamma$'s, and $\psi$'s compensate for any changes to the estimated factor normalization, rendering the effective factor elasticities invariant to the normalizations used to set the log factors' scales. It is worth noting that a traditional CES production function handles differences in effective factor (not log factor) variance through the share parameters $\gamma$. The $\phi$'s and $\psi$'s are capturing differences in effective factor higher moments. Ordinal scales can enter into the child production function with different effective skews and higher moments in different periods of child development. 

Note that we can rewrite the CES production function from Equation (\ref{eq:CES_prod}) using effective factors as
\begin{equation}\label{eq:CES_eff_factor}
    \tilde{\theta}_{i,t+1}=\left(\sum_j\tilde{y}^{{\operatorname{Sign}(\rho_{it})}}_{j,t}\right)^{\frac{1}{\operatorname{Sign}(\rho_{it})}}\exp\left(\frac{|\rho_{it}|}{\psi_{it}}(Q'_t\kappa_i+\nu_i)\right).
\end{equation}
Equation (\ref{eq:CES_eff_factor}) shows that while the magnitude of the substitution parameter $\rho$ is fundamentally unidentified, its sign is. It also shows that the log inputs of any CES function can be rescaled to make the production function one of Cobb-Douglas, perfect substitutes, or gross complements with $\rho\in\{-1, 0, 1\}$. This highlights a subtle undercurrent of both this paper and \cite{freyberger_2025}: scale, complementarity, and curvature are not separately identified. Fortunately, \cite{freyberger_2025} shows that elasticities can be identified when the production function is CES, even if scale, complementarity, and convexity are not.

\subsection{Accounting for Endogeneity of Parental Investment}\label{sec:prod_est_cnlsy:invest_endog}

A central concern in estimating skill production functions is that parental investment may be endogenous. Parents may adjust investment in response to shocks to child development or to unobserved determinants of future skill growth, which would generally bias estimates of the causal effect of investment. Consistent with prior work, I attempt to address this concern using a control function approach that allows investment to be correlated with unobserved productivity shocks in the skill production function.

Following \cite{attanasio_meghir_nix_2020}, I model parental investment as a function of contemporaneous child skills, parental skills, and observed family characteristics. Specifically, I estimate the investment equation
\begin{equation} \label{eq:par_invest}
    \ln M_{I,t}=\beta_{M_I, t}\ln M_{I,t-1} + \ln \theta_t'\beta_{\theta,t} + \beta_{M_C,t}\ln M_{C}+\beta_{M_N,t}\ln M_{N}+Q_t'\beta_{Q,t}+\ln W_t'\beta_{W,t}+\eta_t,
\end{equation}
where $Q_t$ includes the same controls as used in the production function (\ref{eq:CES_prod}) (an intercept, the number of children in the household, the child's gender, mother's marital status, mother's age, the interaction of number of children with mother's marital status and maternal education) and $W_t$ contains log net household income and log maternal permanent household income. The error term $\eta_t$ captures unobserved determinants of parental investment and is assumed to be normally distributed with mean zero.

To account for the endogeneity of investment in the skill production function, I include the residual from equation (\ref{eq:par_invest}), denoted $\hat{\eta}_t$, as an additional control in $X_t$ in the CES production function (\ref{eq:CES_prod}). See Figure \ref{fig:invest_eq_visual} for a visual representation of this control function approach. This approach allows unobserved shocks that influence parental investment to be correlated with productivity shocks in skill formation. Formally, the identifying assumption is that
\[
E[\nu_{i,t} \mid \theta_t, M_{t-1}, Q_t, W_t, \eta_t] = \kappa_{i,\eta}\eta_t
\quad \text{for } i \in \{C,NC\},
\]
so that conditioning on $\hat{\eta}_t$ renders investment orthogonal to the remaining production shock. This strategy uses maternal permanent and temporary household income, as well as lagged investment, as a nonlinear exogenous shifter (like an instrument) for parental investment in period $t$. The use of current income as an exogenous shifter has historically been motivated by parents facing short-run credit constraints \citep{cunha_heckman_schennach2010, attanasio_meghir_nix_2020}. 
\begin{figure}
    \centering
\begin{adjustbox}{max width=\linewidth, center}
\begin{tikzpicture}[
  factor/.style={circle, draw=colorOne, fill=colorTwo, very thick,
                 minimum width=1.0cm, minimum height=1.0cm,
                 align=center, font=\normalsize},
  inputbox/.style={rectangle, rounded corners=3pt, draw=colorOne, fill=white,
                   thick, minimum width=1.5cm, minimum height=1.0cm,
                   font=\footnotesize},
  invbox/.style={rectangle, draw=colorOne, fill=colorTwo, very thick,
                 double, double distance=1.5pt,
                 minimum width=22cm, minimum height=2.0cm,
                 align=center, font=\huge},
  shock/.style={circle, draw=colorOne, fill=white, line width=0.8pt,
                minimum size=1.0cm, font=\large},
  arr/.style={->, draw=colorOne, line width=1pt},
  fbarr/.style={->, draw=colorOne, line width=1pt, dashed},
  caption/.style={font=\small\itshape, text=black!70},
  cesghost/.style={rectangle, draw=colorOne!60, dashed, rounded corners=4pt,
                   fill=colorTwo!30, minimum width=4.0cm, minimum height=1.2cm,
                   align=center, font=\small}
]
\node[factor]   (f1) at (-7, 3.0) {$\theta_{C,t}$};
\node[factor]   (f2) at (-5.75, 3.0) {$\theta_{N,t}$};
\node[factor]   (f3) at (-3.25, 3.0) {$M_C$};
\node[factor]   (f4) at (-2, 3.0) {$M_N$};
\node[factor]   (f5) at (-0.5, 3.0) {$M_{I,t-1}$};
\node[inputbox]  (q1) at (2.75, 3.0) {$Q_t$};
\node[inputbox] (y1) at (5.75, 3.0) {$\ln W_t$\,(perm.)};
\node[inputbox] (y2) at (8, 3.0) {$\ln W_t$\,(curr.)};

\node[caption] at (0, 4.5) {factors $+$ controls $+$ income};

\node[invbox] (ie) at (0, -0.5) {\small\textbf{Investment equation}\\[0.2em]
  $\ln M_{I,t}= \ln \theta_t'\beta_{\theta,t} + \ln M_{t-1}' \beta_{M, t} +Q_t'\beta_{Q,t}+\ln W_t'\beta_{W,t}+\eta_t$\\[0.05em]
  \itshape\footnotesize how parents decide how much to invest};

\node[factor] (II) at (-10, -4) {$M_{I,t}$};
\node[caption, below=0.05cm of II] {predicted investment};

\node[shock] (eta) at (10, -4) {$\eta_t$};
\node[caption, below=0.05cm of eta, align=center, text width=8cm]
  {residual: catches\\unobserved choices};

\node[cesghost] (ces) at (0, -4)
  {CES production\\[-0.1em]\itshape\footnotesize (Skill Development)};

\foreach \v in {f1,f2,f3,f4,f5,q1,y1,y2}
  \draw[arr] (\v.south) -- (\v |- ie.north);

\draw[arr] (II |- ie.south) -- (II.north);
\draw[arr] (eta |- ie.south) -- (eta.north);

\draw[arr] (II.east) -- (ces.west);
\draw[arr] (eta.west) -- (ces.east);
\end{tikzpicture}
\end{adjustbox}
    \caption{Investment Equation}
    \label{fig:invest_eq_visual}
\end{figure}

One departure from prior work is the inclusion of maternal permanent income in the investment equation. Prior work using the CNLSY was limited to short-run income measures because longer income histories were not yet available. I observe roughly 20 additional years of maternal income beyond what \cite{cunha_heckman_schennach2010} could. This extended history allows me to construct a permanent income measure, distinguish short-run fluctuations from longer-run resources, and directly test the credit-constraint motivation of the control function. If short-run income does not predict investment, it cannot serve as an exogenous shifter regardless of the underlying mechanism, and the standard control-function strategy does not address the endogeneity of parental investment. In my empirical analysis, lagged investment is by the strongest predictor of investment, which lends credence to this strategy even if the traditional household income instrument fails. 

In Section \ref{sec:results:investment} I show that short-run income does not consistently predict parental investment once I control for permanent income. This finding has two implications. First, short-run income is likely not a valid exogenous shifter of parental investment in this sample, regardless of the mechanism invoked to justify it. Prior work using the CNLSY that relies on short-run income for identification therefore requires an alternative identification strategy to support causal interpretation of investment elasticities.\footnote{It is important to note that \cite{cunha_heckman_schennach2010} includes such alternative identification strategies.} Second, my own production function estimates inherit this concern, which complicates attempts to interpret my results as causal effects.

\subsection{Initial Mixture and Covariates}\label{sec:model_skill_form:covariates}

Sections \ref{sec:skill_prod} and \ref{sec:prod_est_cnlsy:invest_endog} describe the laws of motion for child skill and investment. In this section, I describe how factors in period 0 and the covariates are modeled. Before estimation, all continuous factor proxies and covariates in the raw data are normalized to have mean 0 and variance 1 within each period. 

I model log child and maternal skills at ages 5-6, log investment at ages 3-4, log maternal household permanent income, log household income in period 0, mother's age at the NLSY interview corresponding to when her firstborn child was aged 5-6, and mother's education at age 30 as a mixture of two joint normal distributions. I model whether the child is male using a binary probit model using all the mixture factors and covariates as inputs. I model whether the mother is married at the NLSY79 interview when her child is 5-6 years old using a binary probit model that uses all the mixture variables and male as inputs. Lastly, I model the number of children in the household when the eldest is aged 5-6 using two probits. The first probit models having a second child in the house, and the second probit models having a third child conditional on having a second one. Both take all mixture variables, male, and married in period 0 as inputs. Less than two-and-a-half percent of households have more than 3 children, so I count all households that have more than three children has having three children. 

In addition to the factors, annual log household income, mather's marriage status, and number of children in the house are allowed to vary over time. Post period 0, these covariates are modeled as follows: mother's marital status is modeled with a binary probit that take in log current period child skills, log maternal skills, log lagged investment, log permanent income, lagged log annual household income, maternal education, maternal age at time of birth, lagged marriage, lagged number of kids, and the interaction of lagged marriage and lagged number of kids. Log current income is modeled via a linear equation using the same variables, except using current marital status rather than lagged marital status and excluding the lagged marriage-lagged number of kids interaction. 

The change in number of children in the household is modeled as an ordered probit that can take values of $-1$, $0$, and $1$. Its inputs are current period child skills, maternal skills, lagged investment, lagged log annual income, log permanent income, maternal age at time of birth, maternal education at age 30, both current and lagged marital status, lagged number of kids, and lagged number of kids interacted with lagged marital status. In each period, the order of factors and covariates produced is: child skills (simultaneously), mother's marital status, log period household income, change in number of kids, then log investment.

I view the models linking covariates to factors each other as entirely correlational, I do not view any of them as causal (other than the CES skill production functions and the investment equation). As a result, in all AMR calculations, I freeze the covariates and assume they do not respond to shocks to factors.

\subsection{Calculating AMRs}\label{sec:early_childhood_cog:constr_invest_elast}

To calculate AMRs in the structural model, I first estimate the model. Then, I simulate a dataset of one million child-mother pairs. I then calculate the elasticity of an outcome factor with respect to an input factor analytically. In this calculation, I freeze non-factor covariates so that, for example, a marginal increase in child investment at ages 5-6 does not affect the number of children in the house at ages 13-14. In my default specifications, I also shut off the immediate effects of the marginal increase in one factor on investment in the same period. Future investment may respond to a marginal increase in skill this period through the skill production channels. For example, an increase in child cognitive skill at ages 5-6 will not affect investment at ages 5-6, though it may affect investment at ages 7-8 through its effect on child skills at ages 7-8. For investment, I shut down the immediate investment lag in my default specification, so that a marginal increase in investment in period $t-1$ only affects investment in period $t$ through its effect on skills in period $t$, not through the investment lag. The investment lag is unaltered for future periods. Note that many studies \citep{attanasio_cattan_meghir_rubio_codina2020, jha_list_royer_samek2026, bassi_et_al2026} find that early investment in children increases later investment in children, so this assumption may be wrong. I present alternative results that allow full pass through of investment through the lag in Appendix Figures \ref{fig:amr_cog_long_propI}-\ref{fig:amr_noncog_invage_propI}. 

After calculating the elasticity for each simulated observation, I average elasticities either across the whole data set or within percentile bins of some target factor (e.g. cognitive skill at ages 5-6) to calculate the relevant AMR. This numerically integrates elasticities across the joint distribution of inputs without forcing the computation of an analytically complicated integral.

\subsection{The AMR of Investment at Ages 5--6 on Cognitive Skill at Ages 13--14}\label{sec:model_skill_form:amr_inv0_cog4}
 
In this subsection, I walk through how to calculate the long run AMR on investment at ages 5-6 on cognitive skill at ages 13-14. While normally this kind of in-depth walk-through is best saved for an appendix, the equation for the long run effective factor elasticity is highly economically significant and is deeply tied to Dynamic Complementarity, so I am including this in the paper's main body.  

I index periods from the first period of the intertemporal model, so that $t=0$ is ages 5--6 and $t=4$ is ages 13--14. Let $y_t=(\theta_{C,t},\theta_{N,t},M_{I,t},M_C,M_N)$ collect the inputs to production in period $t$, with $y_{C,t}=\theta_{C,t}$, $y_{N,t}=\theta_{N,t}$, and $y_{I,t}=M_{I,t}$, and let $\beta_{M_I,t}$ denote the coefficient on $\ln M_{I,t-1}$. The one-period elasticity of skill $i$ with respect to input $j$ implied by the production function (\ref{eq:CES_prod}) is
\begin{equation}\label{eq:amr_inv0_zeta}
    \zeta_{i,j,t}\equiv\frac{\partial\ln\theta_{i,t+1}}{\partial\ln y_{j,t}}
    =\underbrace{\frac{\psi_{it}}{\phi_{i,j,t}}}_{\substack{\text{raw-unit}\\ \text{rescaling}}}\;
    \underbrace{\frac{\gamma_{i,j,t}\,y_{j,t}^{\rho_{it}/\phi_{i,j,t}}}{\sum_{j'}\gamma_{i,j',t}\,y_{j',t}^{\rho_{it}/\phi_{i,j',t}}}}_{\tilde{\varepsilon}_{i,j,t}\text{: effective-factor elasticity}},
    \qquad i\in\{C,N\},\quad j\in\{C,N,I\},
\end{equation}
where the sum in the denominator runs over all five inputs and $\phi_{i,i,t}=1$. Note that $\psi_{it}$ and $\rho_{i,t,t}$ absorb changes in the log factor variances here. Multiplying both sides by $\frac{\phi_{i,j,t}}{\psi_{it}}$ shows how the extra CES coefficients in Equation (\ref{eq:CES_prod}) cause the effective factor elastiticies to be scale invariant. 

I collect these elasticities and the skill coefficients of the investment equation into
\begin{equation}\label{eq:amr_inv0_blocks}
    \renewcommand{\arraystretch}{0.6}
    \mathbf{S}_t=\underbrace{\begin{pmatrix}\zeta_{C,C,t}&\zeta_{C,N,t}\\ \zeta_{N,C,t}&\zeta_{N,N,t}\end{pmatrix}}_{\substack{\text{skills at }t\text{ to skills at }t+1\text{:}\\ \text{persistence and cross-skill effects}}},
    \qquad
    \mathbf{m}_t=\underbrace{\begin{pmatrix}\zeta_{C,I,t}\\ \zeta_{N,I,t}\end{pmatrix}}_{\substack{\text{investment at }t\\ \text{to skills at }t+1}},
    \qquad
    \boldsymbol\beta_{\theta,t}=\underbrace{\begin{pmatrix}\beta_{\theta_C,t}\\ \beta_{\theta_N,t}\end{pmatrix}}_{\substack{\text{skills at }t\\ \text{to investment at }t}},
    \qquad
    \mathbf{e}_C=\begin{pmatrix}1\\ 0\end{pmatrix}.
\end{equation}
I marginally increase $\ln M_{I,0}$ while holding the investment shock $\eta_0$, the production shocks $\nu_{i,t}$, the later investment shocks $\eta_t$, and the covariates $Q_t$ and $W_t$ at their realized values. All of these enter additively in logs, so they drop out of every derivative and matter only through the point $y_t$ at which each $\zeta_{i,j,t}$ is evaluated. Because maternal skills are fixed, only $\theta_{C,t}$, $\theta_{N,t}$, and $M_{I,t}$ move. The AMR of effective investment at ages 5--6 on effective cognitive skill at ages 13--14 is then
\begin{equation}\label{eq:amr_inv0_cog4}
\begin{aligned}
    \E\!\left[\frac{d\ln\tilde{\theta}_{C,4}}{d\ln\tilde{M}_{I,0}}\right]
    ={}&\underbrace{\frac{|\rho_{C,3}|/\psi_{C,3}}{|\rho_{C,0}|/\phi_{C,I,0}}}_{\substack{\text{raw log units to}\\ \text{effective units}}}
    \;\E\Biggl[\;\underbrace{\mathbf{e}_C'\,\mathbf{S}_3\mathbf{S}_2\mathbf{S}_1\,\mathbf{m}_0}_{\substack{\text{ages 5--6 investment raises skills at ages 7--8, which}\\ \text{persistence and cross-skill effects carry to ages 13--14}}}
    \\[4pt]
    &+\underbrace{\mathbf{e}_C'\,\mathbf{S}_3\mathbf{S}_2\,\mathbf{m}_1}_{\substack{\text{return to ages 7--8}\\ \text{investment through skills}}}
    \;\underbrace{\Bigl(\underbrace{\beta_{M_I,1}}_{\text{persistence}}+\underbrace{\boldsymbol\beta_{\theta,1}'\mathbf{m}_0}_{\substack{\text{response to the skills}\\ \text{it just produced}}}\Bigr)}_{\text{induced change in ages 7--8 investment}}
    \\[4pt]
    &+\underbrace{\mathbf{e}_C'\,\mathbf{S}_3\,\mathbf{m}_2}_{\substack{\text{return to ages 9--10}\\ \text{investment through skills}}}
    \;\underbrace{\Bigl\{\underbrace{\prod_{\tau=1}^{2}\bigl(\beta_{M_I,\tau}+\boldsymbol\beta_{\theta,\tau}'\mathbf{m}_{\tau-1}\bigr)}_{\text{through ages 7--8 investment}}
    +\underbrace{\boldsymbol\beta_{\theta,2}'\mathbf{S}_1\,\mathbf{m}_0}_{\substack{\text{carried by skills}\\ \text{past ages 7--8}}}\Bigr\}}_{\text{induced change in ages 9--10 investment}}
    \\[4pt]
    &+\underbrace{\mathbf{e}_C'\,\mathbf{m}_3}_{\substack{\text{return to ages 11--12}\\ \text{investment}}}
    \;\Bigl\{\underbrace{\prod_{\tau=1}^{3}\bigl(\beta_{M_I,\tau}+\boldsymbol\beta_{\theta,\tau}'\mathbf{m}_{\tau-1}\bigr)}_{\text{through ages 7--8 and 9--10 investment}}
    +\underbrace{\bigl(\beta_{M_I,1}+\boldsymbol\beta_{\theta,1}'\mathbf{m}_0\bigr)\,\boldsymbol\beta_{\theta,3}'\mathbf{S}_2\,\mathbf{m}_1}_{\substack{\text{through ages 7--8 investment,}\\ \text{carried by skills past ages 9--10}}}
    \\[4pt]
    &\hphantom{+\underbrace{\mathbf{e}_C'\,\mathbf{m}_3}_{\substack{\text{return to ages 11--12}\\ \text{investment}}}\;\Bigl\{}
    +\underbrace{\boldsymbol\beta_{\theta,2}'\mathbf{S}_1\,\mathbf{m}_0\,\bigl(\beta_{M_I,3}+\boldsymbol\beta_{\theta,3}'\mathbf{m}_2\bigr)}_{\substack{\text{carried by skills past ages 7--8,}\\ \text{through ages 9--10 investment}}}
    +\underbrace{\boldsymbol\beta_{\theta,3}'\mathbf{S}_2\mathbf{S}_1\,\mathbf{m}_0}_{\substack{\text{carried by skills past}\\ \text{ages 7--8 and 9--10}}}\Bigr\}
    \;\Biggr].
\end{aligned}
\end{equation}
Every $\zeta_{i,j,t}$, and so every $\mathbf{S}_t$ and $\mathbf{m}_t$, is evaluated at the child's own inputs $y_t$, so the Total AMR is equal to this expectation, which is taken over the joint distribution of $(y_0,y_1,y_2,y_3)$ implied by the estimated model. Setting $\beta_{M_I,1}=0$ gives the default AMRs I present in the main body of the paper in which ages 7--8 investment does not inherit the ages 5--6 increase directly. To calculate AMRs by cognitive skill percentile at age 5-6, condition the expectations on cognitive skill percentile at ages 5-6.

It is worth taking a moment to remark on the effective factor conversion scalar, $\frac{|\rho_{C,3}|/\psi_{C,3}}{|\rho_{C,0}|/\phi_{C,I,0}}$. It is a clear manifestation of the theory of Dynamic Complementarity laid out in \cite{heckman_mosso2014} and hinted at in \cite{cunha_heckman_schennach2010}. If later periods have a more negative subsitution parameter ($\rho_{C,3}$) than earlier periods do, than early investment will snowball into later periods, and the AMR will increase over time. 

Unfortunately, it is impossible to separate the effects of Dynamic Complementarity from changes in returns to scale and log factor scales, as the $\psi$ and $\phi$ in the effective unit conversion scalar remind us. If the AMR of investment is increasing over time, this model cannot tell whether that is caused by increasing complementarity, decreasing returns to scale, or increasing log outcome factor variance. Nonetheless, Equation (\ref{eq:amr_inv0_cog4}) shows how increasing complementarity would manifest in the empirically large returns to early childhood programs \citep{hendren_sprung-keyser2020}, even if we cannot separate complementarity from changes in scale with the tools we currently have. This also highlights the economic value of using effective factors: they clarify the relationship between certain theories of child development and empirical models that use them.

\section{Monte Carlo Simulations}\label{sec:mc}

This section uses Monte Carlo simulations to both show that the AMRs generated by the model are unbiased, recovers heterogeneity in the AMR well, and to quantify the consequences of treating discrete proxies as continuous in a dynamic latent factor model.

\subsection{Design}\label{sec:mc_design}

I simulate data from the model in Section \ref{sec:model_skill_form} with known parameter values. I try to replicate the enviroment of the real data as closely as possible. Each period's simulated factors are designed to have identical marginal distributions as the observed factors. My sample size is 2,402, and I account for the missing nature of my data. To do that, I record in each real observation which proxies, covariates, and periods it is missing. I port over that missing data structure to my simulated data, and match the missingness structure of each simulated child to a real one at random, so that each simulated child has the same missing data pattern as one real child. 

I have two variants of my Monte Carlo design, M1 and M2. The M1 estimator corresponds to the estimator used in this paper. The M2 estimator treats all proxies as if they were continuous and demonstrates that treating discrete proxies as continuous biases both the levels and the shapes of AMR curves.

\subsection{Monte Carlo Results}\label{sec:mc_results}

Tables \ref{tab:mc_it_period_amr} and \ref{tab:mc_it_total_amr} report absolute bias in the expected effective factor AMRS for both estimators across 50\footnote{The 50 replications are still in progress. This section will eventually show results for 50 replications, but I do not have them all yet.} Monte Carlo replications. AMRs are calculated as per Section \ref{sec:early_childhood_cog:constr_invest_elast}. Table \ref{tab:mc_it_period_amr} shows the period-by-period short term AMR recovery, while Table \ref{tab:mc_it_total_amr} shows the long run AMR recovery of factors at ages 5-6 for skills at ages 13-14. For each replication $r$, I compute the paired difference between the estimated and true AMR, where the true AMR is evaluated at the data-generating parameters. Both tables show the median fo that difference is taken across replications to limit the influence of outliers from occasional poorly-behaved optimization runs. Statistical significance is assessed using percentile-based confidence intervals constructed from the empirical distribution of paired differences across replications: $\dagger$, $\dagger\dagger$, and $\dagger\dagger\dagger$ indicate that the confidence interval excludes zero at the 10\%, 5\%, and 1\% levels, respectively. Because significance here indicates a \textit{failure} to recover the true parameter, daggers rather than stars are used to distinguish these markers from conventional hypothesis tests, where rejection is typically the desired outcome. I also include the 90\% confidence interval of paired differences across replications in brackets under the median difference.

\begin{landscape}
\begingroup\renewcommand{\baselinestretch}{1}%
\footnotesize\setlength{\tabcolsep}{2pt}\setlength{\LTpost}{0pt}\renewcommand{\arraystretch}{0.9}
\begin{longtable}{lcccccccccccc}
\caption{Per-Transition Average Marginal Returns (AMR) of Inputs: Bias in the Population-Average Effective Elasticity ($\tilde{\varepsilon}$), Median Paired Difference from True}
\label{tab:mc_it_period_amr} \\
\toprule
 & \multicolumn{3}{c}{Ages 5--6 to 7--8} & \multicolumn{3}{c}{Ages 7--8 to 9--10} & \multicolumn{3}{c}{Ages 9--10 to 11--12} & \multicolumn{3}{c}{Ages 11--12 to 13--14} \\
\cmidrule(lr){2-4} \cmidrule(lr){5-7} \cmidrule(lr){8-10} \cmidrule(lr){11-13}
Input & True & M1 & M2 (Cont) & True & M1 & M2 (Cont) & True & M1 & M2 (Cont) & True & M1 & M2 (Cont) \\
\midrule
\endfirsthead
\multicolumn{13}{l}{\textit{Table~\thetable{} (continued)}} \\
\toprule
 & \multicolumn{3}{c}{Ages 5--6 to 7--8} & \multicolumn{3}{c}{Ages 7--8 to 9--10} & \multicolumn{3}{c}{Ages 9--10 to 11--12} & \multicolumn{3}{c}{Ages 11--12 to 13--14} \\
\cmidrule(lr){2-4} \cmidrule(lr){5-7} \cmidrule(lr){8-10} \cmidrule(lr){11-13}
Input & True & M1 & M2 (Cont) & True & M1 & M2 (Cont) & True & M1 & M2 (Cont) & True & M1 & M2 (Cont) \\
\midrule
\endhead
\multicolumn{13}{r}{\textit{Continued on next page}} \\
\endfoot
\bottomrule
\endlastfoot
\multicolumn{13}{l}{\textit{Panel A: Cognitive skill next period ($\theta_{C,t+1}$)}} \\*[0.5ex]
\quad $\theta_C$ & 0.495 & -0.002 & -0.030 & 0.452 & -0.009 & -0.006 & 0.711 & -0.004 & -0.008 & 0.851 & -0.020 & -0.009 \\*
\quad  &  & [-0.06, 0.08] & [-0.09, 0.03] &  & [-0.05, 0.04] & [-0.05, 0.02] &  & [-0.07, 0.05] & [-0.08, 0.04] &  & [-0.10, 0.01] & [-0.06, 0.02] \\[0.3ex]
\quad $\theta_N$ & 0.033 & +0.000 & +0.006 & 0.030 & -0.000 & +0.003 & 0.010 & +0.000 & -0.000 & 0.024 & +0.005 & -0.006$^{\dagger}$ \\*
\quad  &  & [-0.01, 0.02] & [0.00, 0.01] &  & [0.00, 0.01] & [0.00, 0.01] &  & [0.00, 0.00] & [0.00, 0.00] &  & [0.00, 0.02] & [-0.01, 0.00] \\[0.3ex]
\quad $M_I$ & 0.108 & +0.005 & +0.084$^{\dagger\dagger\dagger}$ & 0.055 & +0.001 & +0.054$^{\dagger\dagger\dagger}$ & 0.020 & -0.001 & +0.020$^{\dagger\dagger\dagger}$ & 0.021 & +0.001 & +0.012$^{\dagger\dagger\dagger}$ \\*
\quad  &  & [-0.04, 0.04] & [0.05, 0.12] &  & [-0.01, 0.02] & [0.04, 0.09] &  & [0.00, 0.01] & [0.01, 0.03] &  & [0.00, 0.01] & [0.00, 0.02] \\[0.3ex]
\quad $M_C$ & 0.301 & -0.017 & -0.077$^{\dagger\dagger\dagger}$ & 0.370 & -0.008 & -0.055$^{\dagger}$ & 0.208 & +0.001 & -0.018 & 0.054 & +0.012 & -0.000 \\*
\quad  &  & [-0.04, 0.04] & [-0.11, -0.02] &  & [-0.04, 0.05] & [-0.11, -0.02] &  & [-0.05, 0.05] & [-0.06, 0.04] &  & [-0.01, 0.04] & [-0.01, 0.02] \\[0.3ex]
\quad $M_N$ & 0.063 & +0.001 & +0.007$^{\dagger\dagger\dagger}$ & 0.093 & +0.001 & +0.003 & 0.052 & +0.001 & +0.005 & 0.049 & +0.004 & +0.001 \\*
\quad  &  & [0.00, 0.01] & [0.00, 0.01] &  & [0.00, 0.02] & [0.00, 0.01] &  & [0.00, 0.01] & [0.00, 0.01] &  & [0.00, 0.02] & [-0.01, 0.01] \\[0.3ex]
\midrule
\multicolumn{13}{l}{\textit{Panel B: Noncognitive skill next period ($\theta_{N,t+1}$)}} \\*[0.5ex]
\quad $\theta_C$ & 0.009 & +0.000 & +0.032$^{\dagger\dagger\dagger}$ & 0.012 & -0.000 & +0.009$^{\dagger\dagger\dagger}$ & 0.008 & -0.000 & +0.003$^{\dagger\dagger\dagger}$ & 0.019 & -0.000 & +0.034$^{\dagger\dagger\dagger}$ \\*
\quad  &  & [0.00, 0.01] & [0.02, 0.05] &  & [0.00, 0.00] & [0.00, 0.01] &  & [0.00, 0.00] & [0.00, 0.01] &  & [-0.01, 0.01] & [0.03, 0.05] \\[0.3ex]
\quad $\theta_N$ & 0.949 & -0.010 & -0.153$^{\dagger\dagger\dagger}$ & 0.960 & -0.000 & -0.079$^{\dagger\dagger\dagger}$ & 0.967 & -0.001 & -0.038$^{\dagger\dagger\dagger}$ & 0.900 & -0.004 & -0.178$^{\dagger\dagger\dagger}$ \\*
\quad  &  & [-0.08, 0.01] & [-0.20, -0.12] &  & [-0.01, 0.01] & [-0.11, -0.05] &  & [-0.01, 0.00] & [-0.06, -0.03] &  & [-0.04, 0.03] & [-0.22, -0.14] \\[0.3ex]
\quad $M_I$ & 0.009 & +0.003 & +0.075$^{\dagger\dagger\dagger}$ & 0.004 & +0.000 & +0.030$^{\dagger\dagger\dagger}$ & 0.003 & +0.000 & +0.015$^{\dagger\dagger\dagger}$ & 0.021 & +0.000 & +0.052$^{\dagger\dagger\dagger}$ \\*
\quad  &  & [0.00, 0.04] & [0.05, 0.09] &  & [0.00, 0.00] & [0.02, 0.04] &  & [0.00, 0.00] & [0.01, 0.02] &  & [-0.01, 0.01] & [0.04, 0.07] \\[0.3ex]
\quad $M_C$ & 0.005 & +0.001 & +0.012$^{\dagger\dagger\dagger}$ & 0.007 & -0.000 & +0.009$^{\dagger\dagger\dagger}$ & 0.006 & +0.001 & +0.005$^{\dagger\dagger\dagger}$ & 0.016 & -0.000 & +0.027$^{\dagger\dagger\dagger}$ \\*
\quad  &  & [0.00, 0.00] & [0.01, 0.02] &  & [0.00, 0.00] & [0.01, 0.01] &  & [0.00, 0.00] & [0.00, 0.01] &  & [0.00, 0.01] & [0.02, 0.04] \\[0.3ex]
\quad $M_N$ & 0.028 & +0.004 & +0.040$^{\dagger\dagger\dagger}$ & 0.017 & -0.000 & +0.030$^{\dagger\dagger\dagger}$ & 0.016 & +0.001 & +0.016$^{\dagger\dagger\dagger}$ & 0.045 & +0.000 & +0.066$^{\dagger\dagger\dagger}$ \\*
\quad  &  & [0.00, 0.02] & [0.03, 0.05] &  & [0.00, 0.00] & [0.02, 0.04] &  & [0.00, 0.00] & [0.01, 0.02] &  & [-0.01, 0.01] & [0.05, 0.08] \\[0.3ex]
\end{longtable}
\begingroup\footnotesize\setlength{\parindent}{0pt}\setlength{\parskip}{0.4em}%
\setlength{\leftskip}{0.025\textwidth}\setlength{\rightskip}{0.025\textwidth}%
\textit{Notes:} Cells give the median over 20 MC replications of the paired difference (estimated $-$ true) in the population-average effective elasticity $\tilde{\varepsilon}$ of next-age skill with respect to the input in that transition (the input's CES share), with its 90\% interval (5th and 95th percentiles of the paired differences across replications) in brackets; True is its value at the data-generating parameters. Daggers, M1 and M2 are as in Table~\ref{tab:mc_it_total_amr}.

{\scriptsize\setlength{\rightskip}{0.025\textwidth plus 1fil}%
\textit{AI disclosure:} table produced by code written with Claude Sonnet 5.5 (claude-sonnet-5-5), Anthropic, 2026-10-08 (revised 2026-10-05 to 2026-10-07; layout of paper/Tables/avg\_elasticities\_all\_periods.tex; helpers shared with the total-AMR writer, mixed model lineage, see git history); MLE/estimator/intertemporal/mc/it\_mc\_analysis.R::it\_mc\_write\_period\_amr\_table().
\par}
\par\endgroup
\endgroup
\end{landscape}

The results from table \ref{tab:mc_it_period_amr} are extremely promising for my discrete proxy estimator, M1. In every period and input factor, the median difference is small, and the 90\% confidence interval is tight relative to the size of the true AMR. The estimator I use in my empirical application recovers short-run AMRs quite well. 

The M2 rows of Table \ref{tab:mc_it_period_amr} show that treating discrete proxies as continuous causes substantial bias. For cognitive outcomes, this bias is particularly consistent in investment: treating discrete investment proxies as continuous causes the model to significantly overstate the importance of investment in every period. For noncognitve outcomes, the continuous proxy estimator is significantly biased for every input factor in every period. 

Table \ref{tab:mc_it_total_amr} shows that the discrete M1 estimator's long run AMR on cognitive skill recovery is unbiased, but imprecise. This imprecision is due to the effective factor conversion from period to period, which for factor $i$ is $\frac{|\rho_{C,3}|\phi_{C,i,0}}{|\rho_{C,0}|\psi_{C,3}}$, where $t=0$ is period $0$ (ages 5-6) and $t=3$ is ages 11-12 (not ages 13-14). CES substitution parameters are notoriously poorly identified, and the model under realistic sample size and missing data conditions cannot consistently pin it down. If $\rho_{C,0}$ is estimated to be close to 0, it causes wild swings in the level of the AMR across periods. 

\begin{landscape}
\begingroup\renewcommand{\baselinestretch}{1}%
\footnotesize\setlength{\tabcolsep}{3pt}
\begin{longtable}{lcccccccccc}
\caption{Total Average Marginal Returns (AMR) of Inputs at Ages 5--6 on Skill at Ages 13--14: Bias in the Population-Average Effective Elasticity ($\tilde{\varepsilon}$), Median Paired Difference from True, with the Ages 5--6 Curvature Estimated ($\hat\rho_0$) and Held Fixed ($\bar\rho_0$)}
\label{tab:mc_it_total_amr} \\
\toprule
 & \multicolumn{10}{c}{Skill at ages 13--14} \\
\cmidrule(lr){2-11}
 & \multicolumn{5}{c}{Cognitive ($\theta_C$)} & \multicolumn{5}{c}{Noncognitive ($\theta_N$)} \\
\cmidrule(lr){2-6}\cmidrule(lr){7-11}
 & True AMR & \multicolumn{2}{c}{M1} & \multicolumn{2}{c}{M2 (Cont)} & True AMR & \multicolumn{2}{c}{M1} & \multicolumn{2}{c}{M2 (Cont)} \\
\cmidrule(lr){3-4}\cmidrule(lr){5-6}\cmidrule(lr){8-9}\cmidrule(lr){10-11}
Input at ages 5--6 &  & $\hat\rho_0$ & $\bar\rho_0$ & $\hat\rho_0$ & $\bar\rho_0$ &  & $\hat\rho_0$ & $\bar\rho_0$ & $\hat\rho_0$ & $\bar\rho_0$ \\
\midrule
\endfirsthead
\multicolumn{11}{l}{\textit{Table~\thetable{} (continued)}} \\
\toprule
 & \multicolumn{10}{c}{Skill at ages 13--14} \\
\cmidrule(lr){2-11}
 & \multicolumn{5}{c}{Cognitive ($\theta_C$)} & \multicolumn{5}{c}{Noncognitive ($\theta_N$)} \\
\cmidrule(lr){2-6}\cmidrule(lr){7-11}
 & True AMR & \multicolumn{2}{c}{M1} & \multicolumn{2}{c}{M2 (Cont)} & True AMR & \multicolumn{2}{c}{M1} & \multicolumn{2}{c}{M2 (Cont)} \\
\cmidrule(lr){3-4}\cmidrule(lr){5-6}\cmidrule(lr){8-9}\cmidrule(lr){10-11}
Input at ages 5--6 &  & $\hat\rho_0$ & $\bar\rho_0$ & $\hat\rho_0$ & $\bar\rho_0$ &  & $\hat\rho_0$ & $\bar\rho_0$ & $\hat\rho_0$ & $\bar\rho_0$ \\
\midrule
\endhead
\multicolumn{11}{r}{\textit{Continued on next page}} \\
\endfoot
\bottomrule
\endlastfoot
\quad $\theta_C$ & 1.838 & -0.069 & -0.169 & +0.014 & -0.079 & 0.015 & +0.001 & +0.002 & +0.023$^{\dagger\dagger\dagger}$ & +0.023$^{\dagger\dagger\dagger}$ \\*
\quad  &  & [-0.63, 1.66] & [-0.49, 0.08] & [-0.69, 2.04] & [-0.24, 0.22] &  & [-0.01, 0.19] & [-0.01, 0.01] & [0.02, 0.03] & [0.01, 0.03] \\[0.3ex]
\quad $\theta_N$ & 0.161 & +0.012 & -0.001 & +0.036 & +0.025 & 0.407 & +0.127 & +0.069 & +0.131$^{\dagger\dagger\dagger}$ & +0.143$^{\dagger\dagger\dagger}$ \\*
\quad  &  & [-0.05, 0.22] & [-0.04, 0.05] & [-0.05, 0.29] & [-0.01, 0.06] &  & [-0.29, 3.94] & [-0.25, 0.58] & [0.08, 0.21] & [0.10, 0.18] \\[0.3ex]
\quad $M_I$ & 0.392 & +0.025 & -0.049 & +0.356$^{\dagger}$ & +0.316$^{\dagger\dagger\dagger}$ & 0.005 & +0.001 & +0.003 & +0.044$^{\dagger\dagger\dagger}$ & +0.043$^{\dagger\dagger\dagger}$ \\*
\quad  &  & [-0.24, 0.77] & [-0.17, 0.15] & [0.00, 1.38] & [0.19, 0.50] &  & [0.00, 0.24] & [0.00, 0.03] & [0.03, 0.05] & [0.03, 0.05] \\[0.3ex]
\quad $M_C$ & 2.397 & +0.099 & -0.112 & -0.131 & -0.273 & 0.032 & +0.006 & +0.007 & +0.037$^{\dagger\dagger\dagger}$ & +0.039$^{\dagger\dagger\dagger}$ \\*
\quad  &  & [-0.89, 3.89] & [-0.49, 0.33] & [-1.08, 2.09] & [-0.47, 0.15] &  & [-0.02, 0.32] & [-0.02, 0.04] & [0.02, 0.05] & [0.03, 0.05] \\[0.3ex]
\quad $M_N$ & 0.893 & -0.009 & -0.022 & +0.134 & +0.071 & 0.044 & +0.008 & +0.008 & +0.082$^{\dagger\dagger\dagger}$ & +0.086$^{\dagger\dagger\dagger}$ \\*
\quad  &  & [-0.35, 1.10] & [-0.24, 0.24] & [-0.32, 1.59] & [-0.11, 0.34] &  & [-0.03, 0.44] & [-0.02, 0.04] & [0.07, 0.11] & [0.08, 0.12] \\[0.3ex]
\end{longtable}
\begingroup\footnotesize\setlength{\parindent}{0pt}\setlength{\parskip}{0.4em}%
\setlength{\leftskip}{0.025\textwidth}\setlength{\rightskip}{0.025\textwidth}%
\textit{Notes:} M1 models the discrete proxies as ordered probits (correctly specified); M2 treats all proxies as continuous (misspecified). Cells give the median over 20 MC replications of the paired difference (estimated $-$ true) in the total average marginal return (AMR), with its 90\% interval (5th and 95th percentiles of the paired differences across replications) in brackets; $^{\dagger}$, $^{\dagger\dagger}$, $^{\dagger\dagger\dagger}$ indicate that the 90\%, 95\% and 99\% percentile interval of the paired differences excludes zero. True AMR is the total AMR at the data-generating parameters: the population-average elasticity of skill at ages 13--14 with respect to the input at ages 5--6 in effective units, through the four transitions (investment at ages 7--12 responds through the investment equations; covariates and factors at ages 5--6 held fixed for the purpose of investment at ages 5--6. Lagged investment held fixed when determining investment at ages 7--8).

Total AMRs are imprecise: converting between the effective units of the ages 5--6 and ages 11--12 technologies depends on the curvature parameters ($\rho$), which are weakly identified at this sample size, and replications with $\rho$ near zero give extreme values (hence the wide intervals). In the $\bar\rho_0$ columns the ages 5--6 curvature parameter $\rho_{i,4}$ in the conversion to effective units is held at its median across that model's replications (each estimate times $|\hat\rho_{i,4}|/|\bar\rho_{i,4}|$); $\rho_{i,4}$ is common to all inputs of skill $i$, so this removes the imprecision a column shares.

{\scriptsize\setlength{\rightskip}{0.025\textwidth plus 1fil}%
\textit{AI disclosure:} table produced by code written with Claude Sonnet 5.5 (claude-sonnet-5-5) and Claude Opus 5.5 (claude-opus-5-5), Anthropic, 2026-10-08 (revised 2026-10-01 to 2026-10-07; layout adapted from MLE/estimator/mc\_analysis.R; mixed model lineage, see git history); MLE/estimator/intertemporal/mc/it\_mc\_analysis.R::it\_mc\_write\_total\_amr\_table().
\par}
\par\endgroup
\endgroup
\end{landscape}

To address the instability caused by this one parameter, I also present AMRs and 90\% confidence intervals in which I hold $\rho_{i,0}$ at its median value across replications for output skill $i$. These columns, denoted with at $\bar{\rho}_0$, show that fixing $\rho_{i,0}$ across replications removes most of the uncertainty in the estimator. Since $\rho_{i,0}$ is shared in all AMR calculations involving skill $i$, AMR levels are comparable across all periods and across all input factors, though not across output factors. Within output skills, the relative AMR levels are well defined, even though the absolute levels are not. 
\begin{figure}
    \centering
    \includegraphics[width=\linewidth]{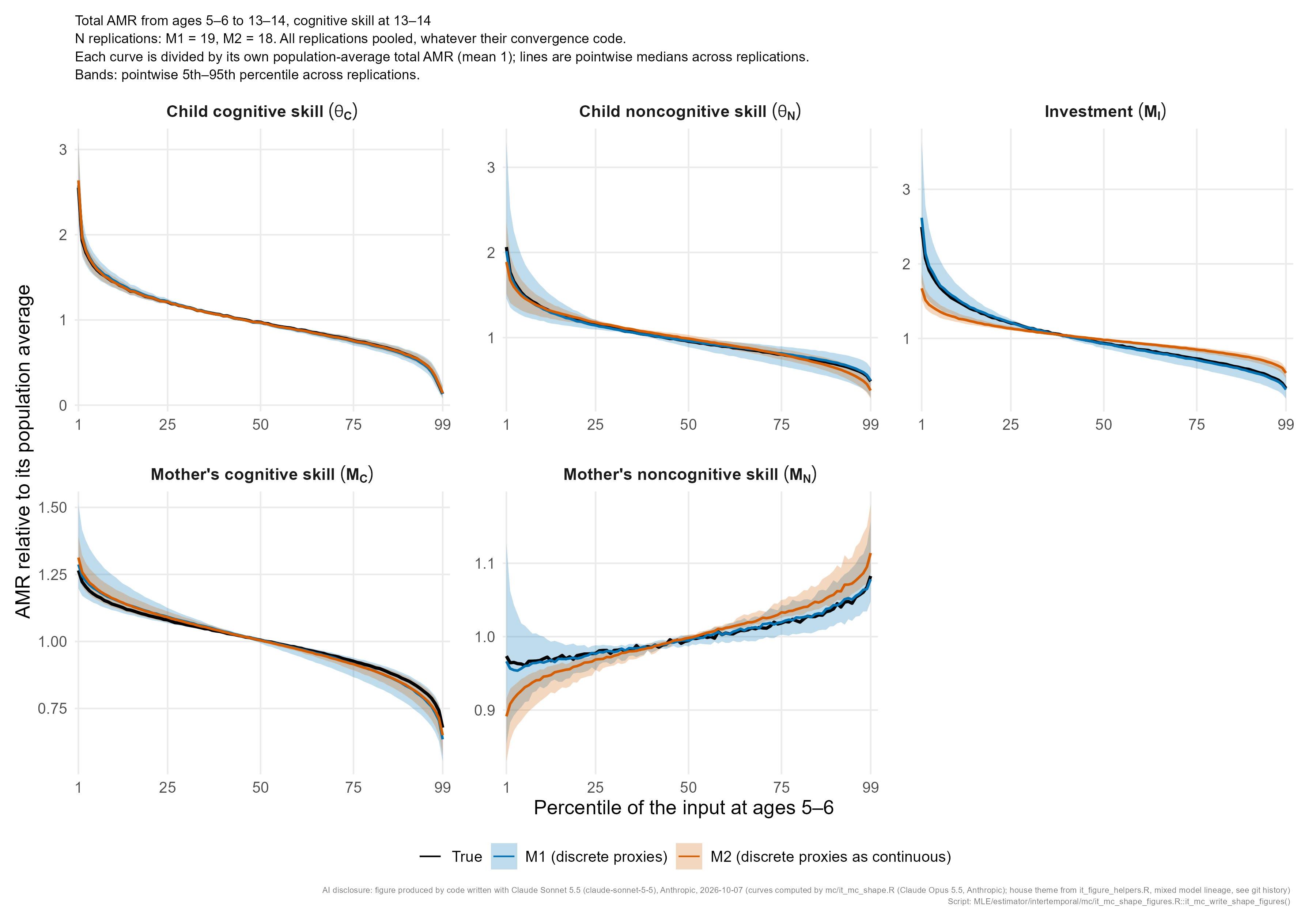}
    \caption{MC Elasticity Shape Recovery, Cognitive, Ages 5-6 to 13-14}
    \label{fig:mc_shape_cog}
\end{figure}
Figures \ref{fig:mc_shape_cog} and \ref{fig:mc_shape_noncog} show that the discrete model recovers the shape of AMR curves remarkably well (though the continuous model performs quite poorly when either the output's or the examined input's proxies are actually discrete). The figures plot the long run median AMR for all factors by age 5-6 skill percentile, cognitive skill for Figure \ref{fig:mc_shape_cog} and noncognitive skill for \ref{fig:mc_shape_noncog}. All MC runs had their AMR curves divided by the population average total AMR, so that all have the same overall mean, one. 
\begin{figure}
    \centering
    \includegraphics[width=\linewidth]{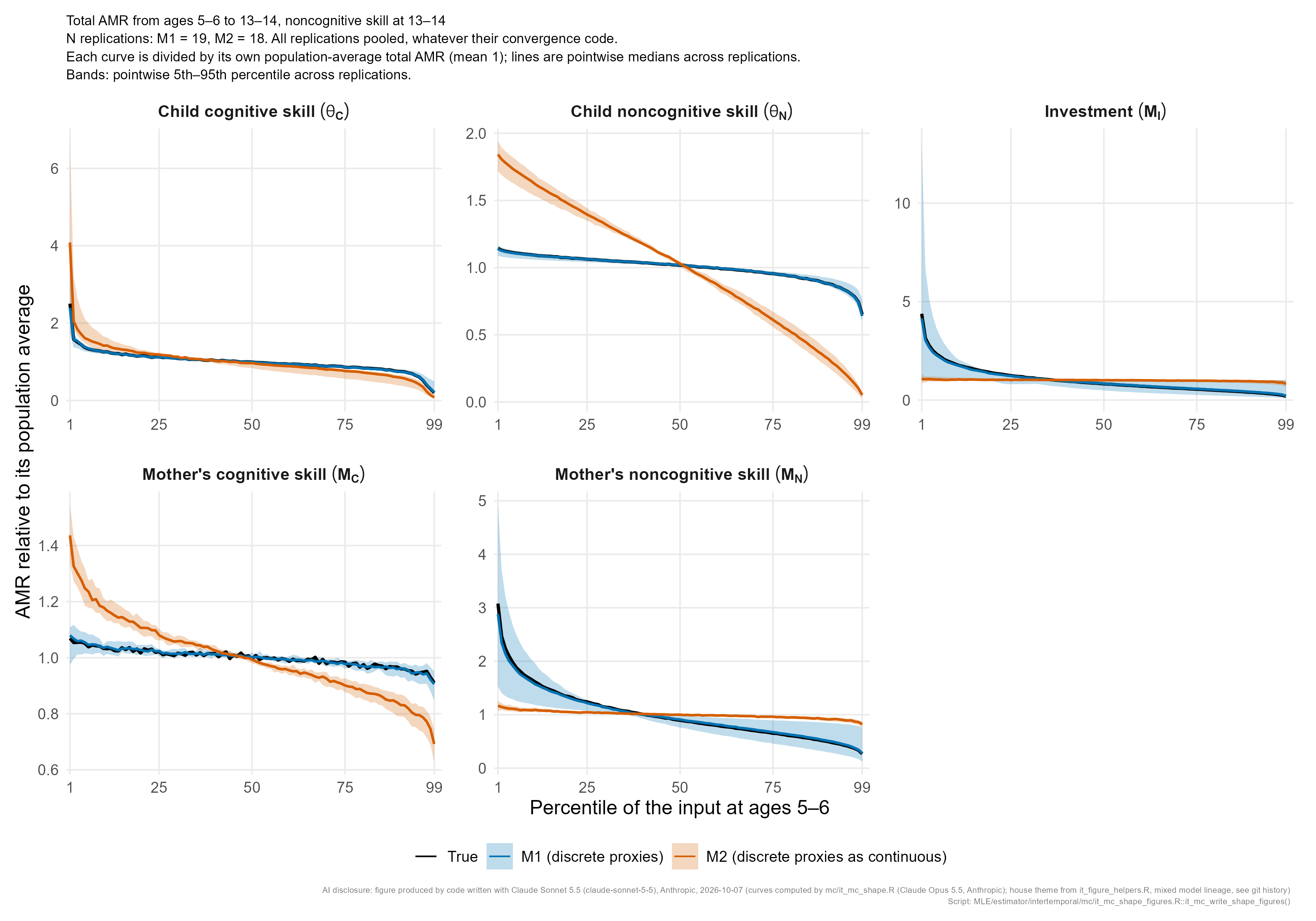}
    \caption{MC Elasticity Shape Recovery, Noncognitive, Ages 5-6 to 13-14}
    \label{fig:mc_shape_noncog}
\end{figure}
Appendix Table \ref{tab:mc_it_investment} shows that the investment equation coefficients are recovered without bias, if somewhat noisily, for the discrete model, while the continuous model has statistically significant biases on lagged investment and the standard deviation of the investment residual. The dominant role of lagged investment in the investment equations makes all other coefficients small and difficult to detect by comparison. 

The Monte Carlo simulations show that my preferred discrete proxy specification (M1) accurately recovers heterogeneity in AMR curves, short run AMR levels, and relative AMR levels even over long time horizons. It also shows that long run AMR levels are unbiased, but very noisy under the real data constraints that I face. However, given the role the subistution parameter plays in that noise, relative levels of AMRs are likely comparable across both factors and periods.

The Monte Carlo simulations are much less kind to the all continuous specification. A continuous proxy specification causes the estimator to overstate the effect of parental investment in every period, and causes large and statistically significant biases across all inputs in both the short and long run for noncognitive outcomes. It also cannot accurately capture AMR heterogeneity. These results demonstrate the importance of treating binary and Likert proxies as discrete. Treating them as continuous is a first order source of bias.

\section{Empirical Results}\label{sec:results}

In this section, I calculate AMRs from the model in Section \ref{sec:model_skill_form} to answer three questions. In Section \ref{sec:results:investment} I review the coefficients from the model's investment equations and evaluate the validity of using current household income as an exogenous investment shifter. Section \ref{sec:results:shortrun} presents one period, short run, AMRs, Section \ref{sec:results:longrun} presents the AMRs of factor inputs at ages 5-6 on skills at different outcome ages, allowing me to track the dynamic properities of early investment. Section \ref{sec:results:invage} presnts the AMR on skills at ages 13-14 to different input periods respectively. I use it to show that early childhood is unique in both the long run level of cognitive payoff to investment and that it is the only period in which marginal investments benefit low-skilled children more than high-skilled children. 

Appendix Tables \ref{tab:controls_all_it}, \ref{tab:ces_params_all_it}, \ref{tab:mixture_it}, and \ref{tab:covariate_transitions_it} contain the estimated parameters for the controls in the skill production Equation (\ref{eq:CES_prod}), the raw parameters inside the CES block of Equation (\ref{eq:CES_prod}), the mixture estimates, and the coefficients on the transition equations.

\subsection{Investment}\label{sec:results:investment}

Table \ref{tab:investment_all_it} shows the dominant roll lagged investment plays in predicting investment this period. Continuous control variables are standardized to have mean 0 and variance 1. A panel shows standardized factor coefficients  Very few coefficients other than lagged investment are statistically significant, and no coefficient other than lagged investment is consistently signed and statistically significant. Nonetheless, there are several features of the table worth discussing. The first is that investment becomes easier to predict as children age. The standard deviation of the investment shock falls throughout childhood, shrinking by nearly a factor of four between ages 5-6 and 13-14. This is likely the result of the variance of the investment factor falling over time (see Table \ref{tab:signal_table_it}). The shrinking coefficient on lagged log investment should be interpreted with caution for the same reason. 

\begin{table}[htbp]
\centering
\caption{Investment Equation Estimates by Period}
\label{tab:investment_all_it}
\begin{tabular}{lccccc}
\toprule
 & 5--6 & 7--8 & 9--10 & 11--12 & 13--14 \\
\midrule
$\beta_0$ & 0.991$^{***}$ & 0.702$^{***}$ & 0.029 & -0.370$^{***}$ & -0.174 \\
 & (0.184) & (0.152) & (0.153) & (0.129) & (0.125) \\
\midrule
\multicolumn{6}{l}{\textit{Factor loadings (per standard deviation of the factor)}} \\[0.5ex]
\quad $\theta_C$ & -0.033 & -0.093$^{**}$ & 0.078$^{**}$ & 0.010 & -0.018 \\
 & (0.040) & (0.037) & (0.032) & (0.033) & (0.034) \\
\quad $\theta_N$ & 0.072 & 0.060$^{*}$ & -0.009 & 0.012 & 0.055$^{*}$ \\
 & (0.044) & (0.034) & (0.029) & (0.027) & (0.029) \\
\quad $M_C$ & 0.042 & 0.034 & -0.027 & 0.042 & 0.025 \\
 & (0.068) & (0.050) & (0.043) & (0.042) & (0.042) \\
\quad $M_N$ & 0.075 & 0.003 & 0.056$^{*}$ & -0.018 & -0.010 \\
 & (0.047) & (0.036) & (0.031) & (0.030) & (0.030) \\
\quad $I_{t-1}$ & 1.415$^{***}$ & 1.124$^{***}$ & 0.978$^{***}$ & 0.906$^{***}$ & 0.870$^{***}$ \\
 & (0.179) & (0.131) & (0.102) & (0.078) & (0.076) \\
\midrule
\multicolumn{6}{l}{\textit{Control variables ($\beta_X$)}} \\[0.5ex]
\quad Male & -0.069 & 0.007 & -0.069 & -0.090$^{*}$ & -0.044 \\
 & (0.076) & (0.057) & (0.050) & (0.048) & (0.048) \\
\quad Mother's age & -0.062 & -0.127$^{***}$ & -0.041 & 0.011 & -0.090$^{***}$ \\
 & (0.054) & (0.038) & (0.030) & (0.029) & (0.029) \\
\quad Mother's educ. & -0.007 & 0.095$^{**}$ & -0.008 & 0.004 & 0.047 \\
 & (0.052) & (0.041) & (0.034) & (0.033) & (0.033) \\
\quad Ch.\ in HH & -0.138$^{*}$ & 0.010 & -0.044 & 0.005 & 0.069$^{*}$ \\
 & (0.070) & (0.049) & (0.040) & (0.037) & (0.040) \\
\quad Mom married & -0.165 & -0.082 & -0.326$^{***}$ & 0.164 & 0.221$^{*}$ \\
 & (0.172) & (0.130) & (0.118) & (0.108) & (0.115) \\
\quad Children $\times$ Mar. & 0.026 & 0.001 & 0.105$^{**}$ & -0.002 & -0.074 \\
 & (0.078) & (0.058) & (0.049) & (0.045) & (0.048) \\
\quad Log perm.\ inc. & 0.047 & 0.002 & 0.017 & 0.036 & -0.065$^{*}$ \\
 & (0.058) & (0.043) & (0.038) & (0.037) & (0.037) \\
\quad Log curr.\ inc. & 0.036 & -0.056$^{**}$ & 0.059$^{**}$ & -0.042 & 0.019 \\
 & (0.042) & (0.028) & (0.024) & (0.026) & (0.025) \\
\midrule
$\sigma_{\eta}$ & 0.559$^{***}$ & 0.531$^{***}$ & 0.356$^{***}$ & 0.309$^{***}$ & 0.148 \\
 & (0.070) & (0.049) & (0.040) & (0.041) & (0.110) \\
\midrule
$N$ & \multicolumn{5}{c}{2,402} \\
\bottomrule
\end{tabular}
\begin{minipage}{0.95\textwidth}
\vspace{0.5em}\footnotesize
\textit{Notes:} Column headings are the child's ages. Factor loadings are multiplied by the standard deviation of the log factor at that age in the simulated cohort, and their standard errors by the same standard deviation, ignoring sampling uncertainty in the standard deviation.
Standard errors in parentheses (delta method; provisional pending the bootstrap); n.d.: not defined; $^{***}p<0.01$, $^{**}p<0.05$, $^{*}p<0.10$.
\textit{AI disclosure:} table produced by code written with Claude Opus 5 (claude-opus-5) and Claude Sonnet 5 (claude-sonnet-5) and Claude Opus 5.5 (claude-opus-5-5), Anthropic, 2026-10-08 (CES-share kernel reused from MLE/estimator/plot\_elasticities.R, mixed model lineage; see git history); MLE/estimator/intertemporal/make\_it\_paper\_outputs.R::write\_it\_investment().
\end{minipage}
\end{table}

The other feature of the table worth highlighting is the log current household income coefficients. These are significant but opposite signed at ages 7-8 and 9-10 and are insignificant and inconsistently signed in other periods. This suggests that income is a poor exogenous shifter after controlling for lagged investment. This indicates that researchers should exercise caution when using current income as an exogenous shifter. In practice, the investment lag is by far the strongest exogenous shifter for current investment.

\subsection{Short Run AMRs}\label{sec:results:shortrun}

Table \ref{tab:tamr_shortrun} presents the short run AMRs skills and investment period-by-period. Short run cognitive skill persistance increases substantially as children age, and the short run returns to parental investment fall by around a factor of five. Parental skills have a higher impact on cognitive skill at ages 7-8, but decline in impact after that. Noncognitive skill seems to play a small role in fostering cognitive skill. 

\begin{table}[htbp]
\centering
\caption{Short-Run Total AMRs for All Factors}
\label{tab:tamr_shortrun}
\begin{tabular}{lcccc}
\toprule
 & 5--6 $\to$ 7--8 & 7--8 $\to$ 9--10 & 9--10 $\to$ 11--12 & 11--12 $\to$ 13--14 \\
\midrule
\multicolumn{5}{l}{\textit{Panel A: Cognitive ($\theta_{C,t+1}$)}} \\[0.5ex]
\quad $\theta_C$ & 0.495 & 0.452 & 0.711 & 0.851 \\
\quad $\theta_N$ & 0.033 & 0.030 & 0.010 & 0.024 \\
\quad $M_I$ & 0.108 & 0.055 & 0.020 & 0.021 \\
\quad $M_C$ & 0.301 & 0.370 & 0.208 & 0.054 \\
\quad $M_N$ & 0.063 & 0.093 & 0.052 & 0.049 \\
\quad Total parental & 0.472 & 0.518 & 0.280 & 0.124 \\
\midrule
\multicolumn{5}{l}{\textit{Panel B: Noncognitive ($\theta_{N,t+1}$)}} \\[0.5ex]
\quad $\theta_C$ & 0.009 & 0.012 & 0.008 & 0.019 \\
\quad $\theta_N$ & 0.949 & 0.960 & 0.967 & 0.900 \\
\quad $M_I$ & 0.009 & 0.004 & 0.003 & 0.021 \\
\quad $M_C$ & 0.005 & 0.007 & 0.006 & 0.016 \\
\quad $M_N$ & 0.028 & 0.017 & 0.016 & 0.045 \\
\quad Total parental & 0.043 & 0.028 & 0.025 & 0.081 \\
\bottomrule
\end{tabular}
\begin{minipage}{0.95\textwidth}
\vspace{0.5em}\footnotesize
\textit{Notes:} Total AMRs of each input (rows) on next-period cognitive (Panel A) and noncognitive (Panel B) skill; column headings give the child's ages in the input and output periods. Total parental $= M_I + M_C + M_N$.
\textit{AI disclosure:} table produced by code written with Claude Opus 5 (claude-opus-5) and Claude Sonnet 5 (claude-sonnet-5) and Claude Opus 5.5 (claude-opus-5-5), Anthropic, 2026-10-08 (CES-share kernel reused from MLE/estimator/plot\_elasticities.R, mixed model lineage; see git history); MLE/estimator/intertemporal/make\_it\_paper\_outputs.R::write\_it\_avg\_elasticities().
\end{minipage}
\end{table}

Noncognitive skill persistance is higher than cognitive skill persistance across childhood, and increases until adolescence, when the reutrns to parental skills and investment increase substantially. This is consistent with the idea that adolescence is a key period for forming a child's socio-emotional skills. Cognitive skill seems to play a consistently minor role in fostering cognitive skills.

\begin{figure}
    \centering
    \includegraphics[width=\linewidth]{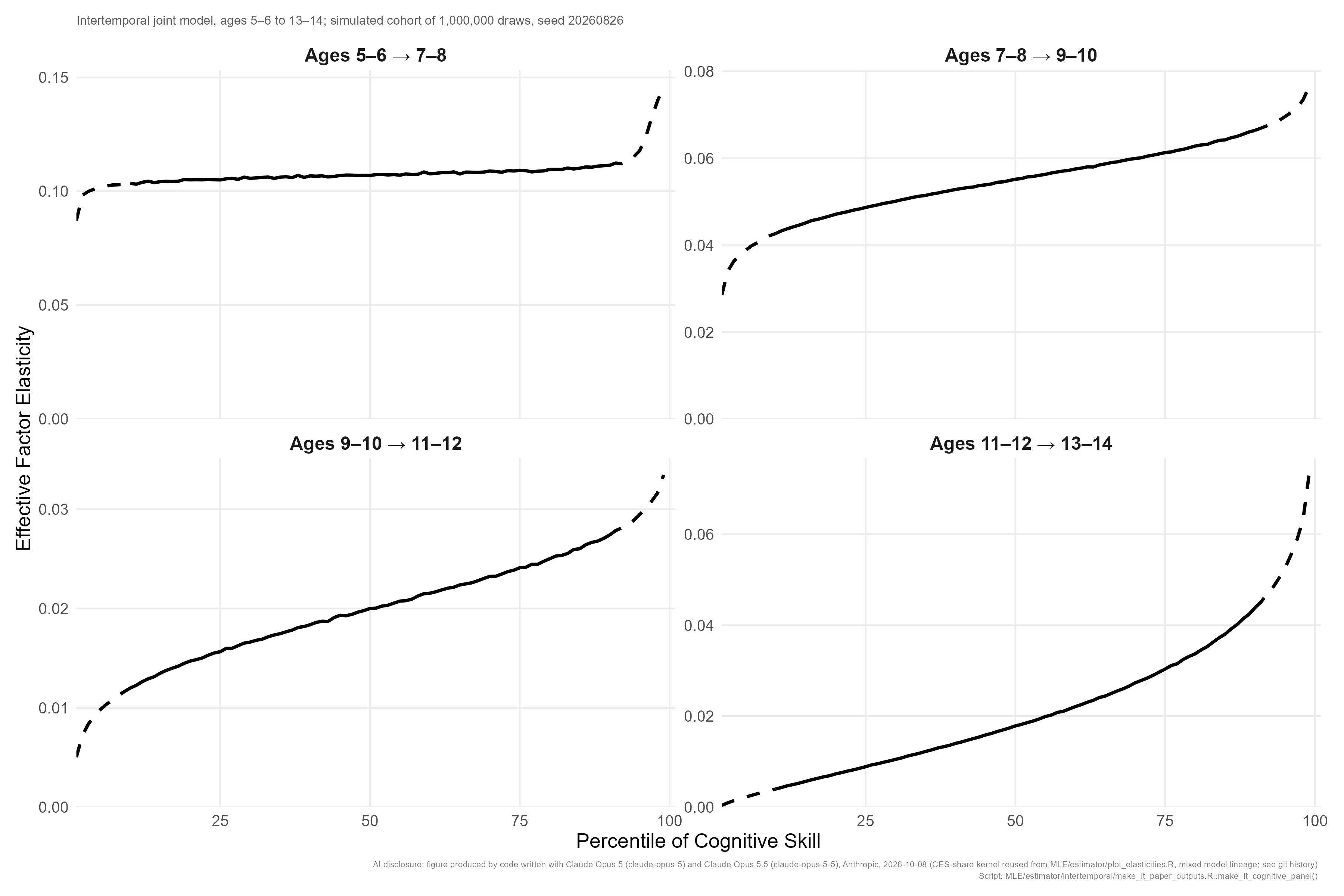}
    \caption{Cognitive AMR to Investment in the Short Run}
    \label{fig:amr_cog_short}
\end{figure}

Figure \ref{fig:amr_cog_short} shows the heterogeneity in short run AMR to investment on cognitive skill by input period cognitive skill percenile at different ages. The tails of the lines are dashed since the Monte Carlo simulations suggest that the estimator has trouble consistently estimating heterogeneity in the tails (see Figures \ref{fig:mc_shape_cog} and \ref{fig:mc_shape_noncog}). The short run AMR to investment at ages 5-6 is roughly flat, rough increasing in the tails. This suggests that low- and high-skilled children benefit roughly equally from investment in the short run in early childhood. In other periods, the short run AMRs to investment have a clear upward slope, indicating that the short run returns to investment are higher for higher-skilled children outside of early childhood. The noncognitive counterpart to Figure \ref{fig:amr_cog_short} can be found in Appendix Figure \ref{fig:amr_noncog_short}. 

It is worth noting that the cognitive substitution parameters in Appendix Table \ref{tab:ces_params_all_it} are all less than 0 and strongly statistically significant, ranging from $-0.526$ at ages 5-6 to $-1.677$ at ages 11-12. It is also worth noting that the ages 7-8 AMRs appear to be much better behaved than the production function coefficients at ages 7-8 in Table \ref{tab:ces_params_all_it} might otherwise suggest.

\subsection{Long Run AMRs}\label{sec:results:longrun}

Table \ref{tab:tamr_by_outcome_age_it} shows the AMRs of factor inputs at ages 5-6 on child skills at ages 7-8, 9-10, 11-12, and 13-14. For cognitive skill, the returns to all early childhood factors grow over time. This is driven by the effective factor unit conversion for output factor $i$ and input factor $j$ from period $0$ (ages 5-6) to period $t$: $\frac{|\rho_{i,t-1}|\phi_{i,j,0}}{|\rho_{i,0}|\psi_{i,t-1}}$ along with changes to the log factor variances in Table \ref{tab:signal_table_it}. Separating the effects of changes in input/output factor scale, returns to scale of the production function, and changes in the complementarity parameter are impossible, even if \cite{freyberger_2025} shows that the elasticities are well defined and invariant to affine normalizations of the (log) measurement system. So, this increase in input elasticities over time is due to a combination of increasing complementarity, decreasing returns to scale in later periods, and increasing variance in log skills over time. 

\begin{table}[htbp]
\centering
\caption{Total AMRs of Inputs at Ages 5--6, by Outcome Age}
\label{tab:tamr_by_outcome_age_it}
\begin{tabular}{lcccc}
\toprule
 & \multicolumn{4}{c}{Outcome age (inputs at ages 5--6)} \\
\cmidrule(lr){2-5}
 & \makebox[3.6em]{7--8} & \makebox[3.6em]{9--10} & \makebox[3.6em]{11--12} & \makebox[3.6em]{13--14} \\
\midrule
\multicolumn{5}{l}{\textit{Panel A: Cognitive skill ($\theta_{C,h}$)}} \\[0.5ex]
\quad $\theta_C$ & 0.495 & 0.422 & 0.791 & 1.838 \\
\quad $\theta_N$ & 0.033 & 0.033 & 0.063 & 0.161 \\
\quad $M_I$ & 0.108 & 0.091 & 0.169 & 0.392 \\
\quad $M_C$ & 0.301 & 0.365 & 0.963 & 2.397 \\
\quad $M_N$ & 0.063 & 0.106 & 0.297 & 0.893 \\
\quad Total parental & 0.472 & 0.562 & 1.429 & 3.682 \\
\midrule
\multicolumn{5}{l}{\textit{Panel B: Noncognitive skill ($\theta_{N,h}$)}} \\[0.5ex]
\quad $\theta_C$ & 0.009 & 0.030 & 0.049 & 0.015 \\
\quad $\theta_N$ & 0.949 & 1.383 & 1.870 & 0.407 \\
\quad $M_I$ & 0.009 & 0.015 & 0.020 & 0.005 \\
\quad $M_C$ & 0.005 & 0.031 & 0.063 & 0.032 \\
\quad $M_N$ & 0.028 & 0.049 & 0.075 & 0.044 \\
\quad Total parental & 0.043 & 0.095 & 0.158 & 0.080 \\
\bottomrule
\end{tabular}
\begin{minipage}{0.95\textwidth}
\vspace{0.5em}\footnotesize
\textit{Notes:} Total AMRs of each input at ages 5--6 (rows) on cognitive (Panel A) and noncognitive (Panel B) skill at the outcome age $h$ in the column heading. Each input is in the effective units of the panel skill's production function at ages 5--6; each outcome in those of the production function that produced it. An increase in investment at ages 5--6 does not enter the next period's investment through lagged investment. Total parental $= M_I + M_C + M_N$.
\textit{AI disclosure:} table produced by code written with Claude Opus 5 (claude-opus-5) and Claude Sonnet 5 (claude-sonnet-5) and Claude Opus 5.5 (claude-opus-5-5), Anthropic, 2026-10-08 (values from MLE/estimator/intertemporal/it\_propagation.R::it\_prop\_tamr\_tables() (Claude Opus 5.5); shared formatting helpers by Claude Opus 5 and Claude Sonnet 5; see git history); MLE/estimator/intertemporal/make\_it\_paper\_outputs.R::write\_it\_tamr\_by\_outcome\_age().
\end{minipage}
\end{table}

The noncognitive skill long0run AMR pattern is more interesting. Like cognitive skill, AMRs to early childhood factor inputs increase from ages 7-12. However, they sharply drop in adolescence. This could be because inputs are much more substitutable in adolescence than in other periods, which would be consistent with the adolescence being a critical time for children's socio-emotional development. 

\begin{figure}
    \centering
    \includegraphics[width=\linewidth]{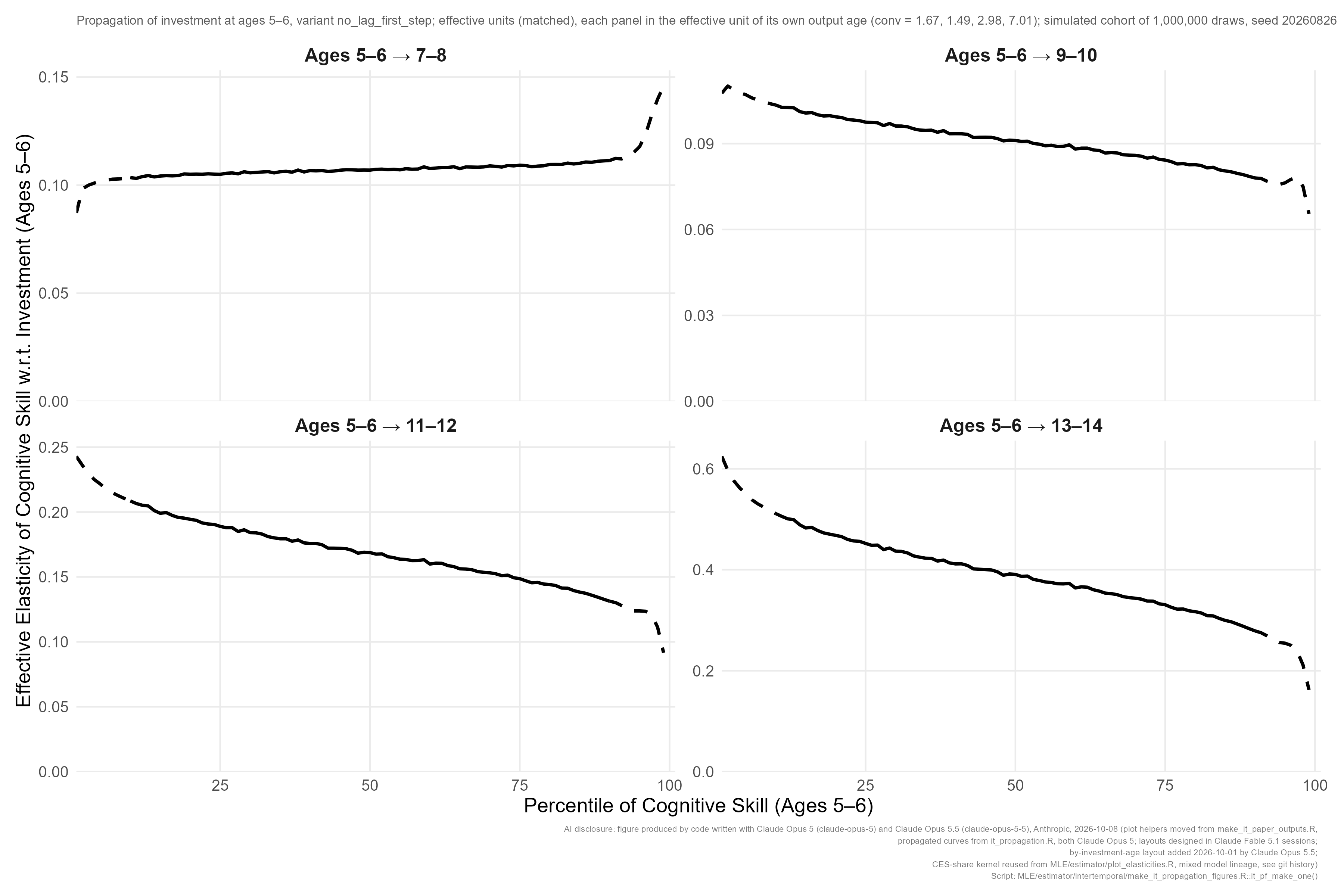}
    \caption{AMR to Investment at Ages 5-6 on Cognitive Skill Across Childhood}
    \label{fig:amr_cog_long}
\end{figure}

Perhaps most interesting is the heterogeneity in AMR to parental investment at age 5-6 on cognitive skill presented in Figure \ref{fig:amr_cog_long}. Consitent with Figure \ref{fig:amr_cog_short}, the short-run AMR is mostly flat, sloping upwards at the tails if they can be trusted. At time goes on, that flat slop becomes negative. The AMR increases substatially for low-skilled children (by a factor of about 5.1 at the 10th percentile), and increases much less for high-skilled children (by a factor of about 2.8 at the 90th percentile). If the tails can be believed, the lowest skilled children see their AMR rise by a factor of 7.2 and the highest skill children essentially see no increase in AMR at all.

This is a remarkable finding. It shows that early childhood investment snowballs for low-skilled children much more strongly than for high-skilled children. This is likely in part due to Dynamic Complementarity (which is present in the ratio of the $\rho$'s in the conversion factor that drives the level of the slope, $\frac{|\rho_{i,t-1}|\phi_{i,j,0}}{|\rho_{i,0}|\psi_{i,t-1}}$), but is also in part due to the joint distribution of inputs in early childhood, in which skill and investment are highly correlated, which causes the short run AMR to flatten substantially.

\subsection{AMRs of Inputs at Different Ages by Early High School}\label{sec:results:invage}

In the last section, I presented results where I held the input period fixed and varied the output period. In this section, I do the opposite: I vary in input period and hold the output period fixed. This allows me to trace at what ages investment is most effective for different children.

Table \ref{tab:tamr_by_input_age_it} shows the Total AMRs to factor inputs at different ages. For the cognitive skill outcomes, later childhood and adolescence have lower AMRs to investment than early childhood, consistent with increasing complementarity over time and investment being a relatively plentiful input. For some input factors, the AMR is higher at ages 7-8 than at 5-6, however, the crucial investment AMR is highest at ages 5-6, where its value is almost twice that of the AMR at ages 7-8. 

\begin{table}[htbp]
\centering
\caption{Total AMRs on Outcomes at Ages 13--14, by Age of Input}
\label{tab:tamr_by_input_age_it}
\begin{tabular}{lcccc}
\toprule
 & \multicolumn{4}{c}{Age of input (outcomes at ages 13--14)} \\
\cmidrule(lr){2-5}
 & \makebox[3.6em]{5--6} & \makebox[3.6em]{7--8} & \makebox[3.6em]{9--10} & \makebox[3.6em]{11--12} \\
\midrule
\multicolumn{5}{l}{\textit{Panel A: Cognitive skill at ages 13--14 ($\theta_{C,4}$)}} \\[0.5ex]
\quad $\theta_C$ & 1.838 & 1.952 & 1.610 & 0.851 \\
\quad $\theta_N$ & 0.161 & 0.245 & 0.123 & 0.024 \\
\quad $M_I$ & 0.392 & 0.217 & 0.043 & 0.021 \\
\quad $M_C$ & 2.397 & 4.316 & 0.614 & 0.054 \\
\quad $M_N$ & 0.893 & 1.176 & 0.233 & 0.049 \\
\quad Total parental & 3.682 & 5.709 & 0.890 & 0.124 \\
\midrule
\multicolumn{5}{l}{\textit{Panel B: Noncognitive skill at ages 13--14 ($\theta_{N,4}$)}} \\[0.5ex]
\quad $\theta_C$ & 0.015 & 0.009 & 0.006 & 0.019 \\
\quad $\theta_N$ & 0.407 & 0.282 & 0.210 & 0.900 \\
\quad $M_I$ & 0.005 & 0.001 & 0.001 & 0.021 \\
\quad $M_C$ & 0.032 & 0.011 & 0.007 & 0.016 \\
\quad $M_N$ & 0.044 & 0.060 & 0.051 & 0.045 \\
\quad Total parental & 0.080 & 0.072 & 0.059 & 0.081 \\
\bottomrule
\end{tabular}
\begin{minipage}{0.95\textwidth}
\vspace{0.5em}\footnotesize
\textit{Notes:} Total AMRs of each input (rows), at the age in the column heading, on cognitive (Panel A) and noncognitive (Panel B) skill at ages 13--14. Each input is in the effective units of the panel skill's production function at the input age; skill at ages 13--14 in those of the ages 11--12 $\to$ 13--14 production function. $M_C$ and $M_N$ are shifted from the input age onward; an increase in investment at the input age does not enter the next period's investment through lagged investment. Total parental $= M_I + M_C + M_N$.
\textit{AI disclosure:} table produced by code written with Claude Opus 5 (claude-opus-5) and Claude Sonnet 5 (claude-sonnet-5) and Claude Opus 5.5 (claude-opus-5-5), Anthropic, 2026-10-08 (values from MLE/estimator/intertemporal/it\_propagation.R::it\_prop\_tamr\_tables() (Claude Opus 5.5); shared formatting helpers by Claude Opus 5 and Claude Sonnet 5; see git history); MLE/estimator/intertemporal/make\_it\_paper\_outputs.R::write\_it\_tamr\_by\_input\_age().
\end{minipage}
\end{table}

For noncognitive skill, the AMRs follow a U shaped patter. The returns to inputs are highest in early childhood (ages 5-6) and adolescence (ages 11-12). This once more emphasizes the important roles that early childhood and adolescence play in forming children's socio-emotional skills.

\begin{figure}
    \centering
    \includegraphics[width=\linewidth]{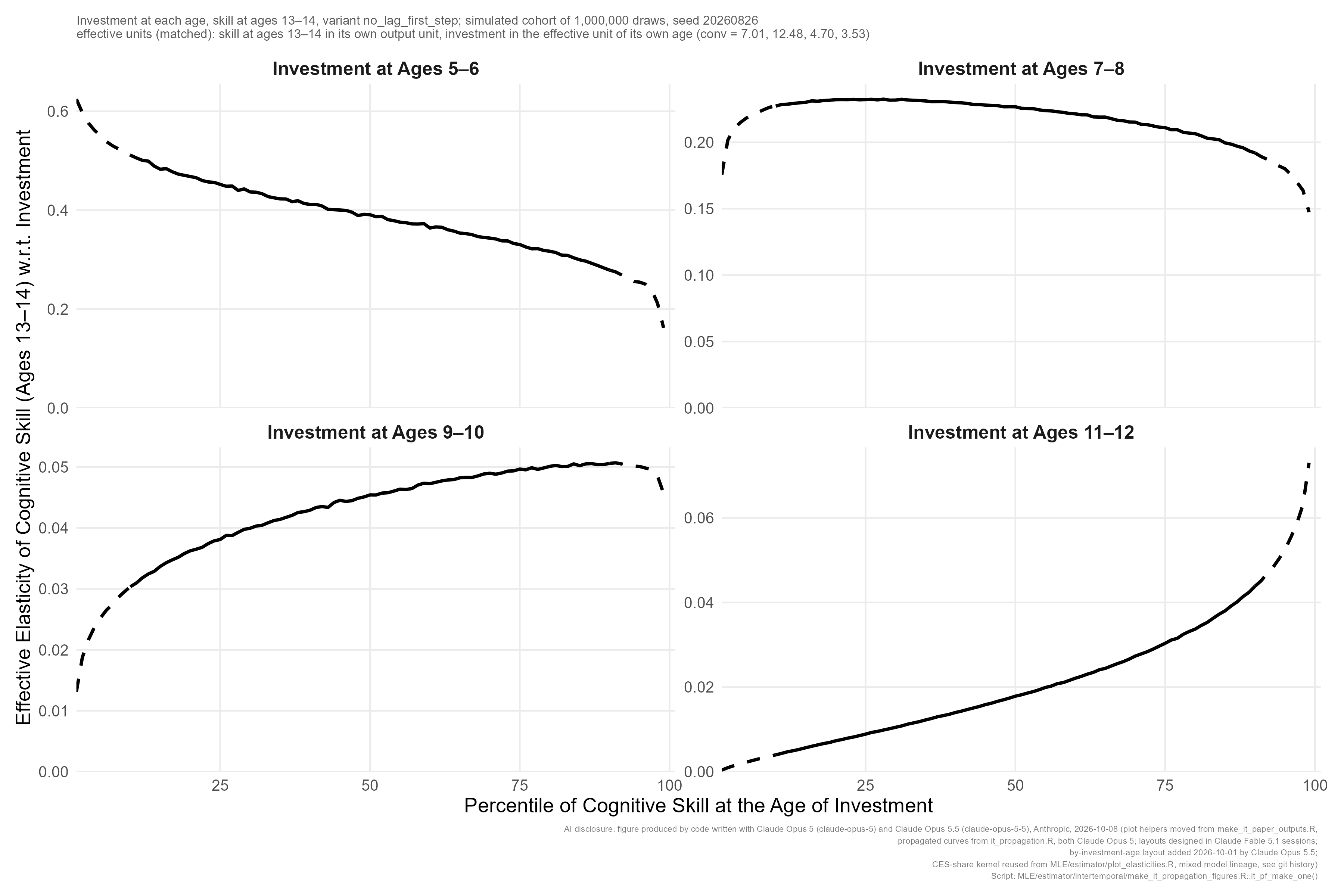}
    \caption{AMR to Investment at Different Ages on Cognitive Skill at Ages 13-14}
    \label{fig:amr_cog_invage}
\end{figure}

Figure \ref{fig:amr_cog_invage} shows that early childhood is unique not just because the levels of AMR to investment are so high, but because it is the only age at which low skill children benefit more from investment in the long run than high skill children do. AMR to investment at ages 7-8 talkes a mostly flat, hump shape, while AMRs at later ages are upward sloping. This highlights the important role that early childhood investment can play in reducing inequality. While all children of all cognitive skill levels benefit substantially from early childhood investment (other than maybe the most skilled children), low-skilled children benefit the most.

\section{Conclusion}\label{sec:conclusion}

In this paper, I introduce the average marginal return (AMR), a marginal analogue of the average treatment effects estimated in reduced-form program evaluations. By averaging marginal returns over the joint distribution of inputs, the AMR connects the properties of a production technology to the investment environments children experience. I use this distinction to clarify the importance of scale invariance in dynamic factor models, explain why interaction coefficients need not identify technological complementarity, and show why comparisons of later interventions across recipients and nonrecipients of earlier interventions may fail to detect dynamic complementarity.

I incorporate ordered probit measurement equations for discrete proxies into the dynamic factor framework of \cite{cunha_heckman_schennach2010}. Monte Carlo simulations calibrated to the empirical application show that treating discrete proxies as continuous can substantially bias both the levels and the shapes of AMR curves. I also introduce an effective factor framework that extends the scale-invariant approach of \cite{freyberger_2025} to settings in which multiple skills are produced each period.

The empirical application estimates AMRs to parental investment in cognitive skill across childhood. Although skill and investment are complements in every estimated cognitive production function, the AMR to investment at ages 5--6 is approximately flat across initial cognitive skill when the outcome is measured two years later and becomes sharply decreasing in initial skill at longer horizons. By ages 13--14, the estimated AMR is about 1.8 times as large at the 10th percentile of early cognitive skill as at the 90th. This pattern is consistent with dynamic complementarity operating alongside unequal investment across children.

The point estimates also suggest that the timing of investment matters. With cognitive skill at ages 13--14 as the common outcome, the estimated AMR to investment at ages 5--6 is nearly twice that of investment at ages 7--8. Early childhood also exhibits the most pronounced downward slope in long-run AMRs across cognitive skill levels. The profile for investment at ages 7--8 is approximately flat and mildly hump-shaped, while the profiles for later investment are generally upward sloping with cognitive skill at the age of investment.

Several limitations qualify these findings. Standard errors for the empirical AMRs are not yet available, and bootstrap inference is under development. The Monte Carlo analysis in Section \ref{sec:mc} provides evidence on finite-sample performance under the model and suggests that relative AMR profiles are recovered more reliably than long-run levels. Those levels depend on weakly estimated curvature parameters. 

These findings are consistent with and help explain a swathe of program analyses that find that early childhood investment in narrowing cognitive skill disparities. More broadly, they show why technological complementarity alone does not determine which children have the highest average marginal returns: that assessment also requires the joint distribution of inputs and the pathways through which early investment affects later development.

\bibliography{bibliography.bib}

\appendix

\section{Supplementary Tables and Figures}\label{section:app:sup_results}

\begin{landscape}
\begingroup\renewcommand{\baselinestretch}{1}%
\footnotesize\setlength{\tabcolsep}{1pt}
\begin{longtable}{@{}lccccccccccccccc@{}}
\caption{Investment Equation: Bias in the Standardized Coefficients, Median Paired Difference from True}
\label{tab:mc_it_investment} \\
\toprule
 & \multicolumn{3}{c}{Ages 5--6} & \multicolumn{3}{c}{Ages 7--8} & \multicolumn{3}{c}{Ages 9--10} & \multicolumn{3}{c}{Ages 11--12} & \multicolumn{3}{c}{Ages 13--14} \\
\cmidrule(lr){2-4} \cmidrule(lr){5-7} \cmidrule(lr){8-10} \cmidrule(lr){11-13} \cmidrule(lr){14-16}
Coefficient & True & M1 & M2 (Cont) & True & M1 & M2 (Cont) & True & M1 & M2 (Cont) & True & M1 & M2 (Cont) & True & M1 & M2 (Cont) \\
\midrule
\endfirsthead
\multicolumn{16}{l}{\textit{Table~\thetable{} (continued)}} \\
\toprule
 & \multicolumn{3}{c}{Ages 5--6} & \multicolumn{3}{c}{Ages 7--8} & \multicolumn{3}{c}{Ages 9--10} & \multicolumn{3}{c}{Ages 11--12} & \multicolumn{3}{c}{Ages 13--14} \\
\cmidrule(lr){2-4} \cmidrule(lr){5-7} \cmidrule(lr){8-10} \cmidrule(lr){11-13} \cmidrule(lr){14-16}
Coefficient & True & M1 & M2 (Cont) & True & M1 & M2 (Cont) & True & M1 & M2 (Cont) & True & M1 & M2 (Cont) & True & M1 & M2 (Cont) \\
\midrule
\endhead
\multicolumn{16}{r}{\textit{Continued on next page}} \\
\endfoot
\bottomrule
\endlastfoot
\quad Lagged investment & 0.902 & -0.009 & -0.042 & 0.926 & +0.023 & -0.078$^{\dagger\dagger\dagger}$ & 0.917 & +0.013 & +0.006 & 0.894 & +0.009 & +0.005 & 0.983 & +0.005 & +0.005 \\*
\quad  &  & (0.043) & (0.039) &  & (0.036) & (0.033) &  & (0.028) & (0.024) &  & (0.023) & (0.023) &  & (0.018) & (0.013) \\[0.3ex]
\multicolumn{16}{l}{\textit{Inputs}} \\*[0.3ex]
\quad $\theta_C$ & -0.021 & -0.009 & +0.008 & -0.077 & -0.013 & +0.003 & 0.073 & +0.002 & +0.004 & 0.010 & +0.001 & -0.001 & -0.020 & -0.001 & -0.006 \\*
\quad  &  & (0.040) & (0.030) &  & (0.035) & (0.027) &  & (0.024) & (0.027) &  & (0.028) & (0.026) &  & (0.032) & (0.030) \\[0.3ex]
\quad $\theta_N$ & 0.046 & -0.005 & +0.005 & 0.049 & -0.001 & +0.011 & -0.009 & +0.004 & +0.005 & 0.012 & -0.001 & +0.006 & 0.063 & -0.009 & -0.005 \\*
\quad  &  & (0.027) & (0.024) &  & (0.026) & (0.020) &  & (0.018) & (0.015) &  & (0.021) & (0.021) &  & (0.021) & (0.020) \\[0.3ex]
\quad $M_C$ & 0.027 & +0.019 & +0.022 & 0.028 & -0.007 & +0.039 & -0.026 & +0.001 & +0.007 & 0.041 & -0.004 & -0.008 & 0.028 & +0.002 & +0.008 \\*
\quad  &  & (0.043) & (0.038) &  & (0.022) & (0.027) &  & (0.025) & (0.027) &  & (0.025) & (0.028) &  & (0.035) & (0.032) \\[0.3ex]
\quad $M_N$ & 0.048 & +0.001 & +0.006 & 0.003 & -0.017 & -0.003 & 0.052 & +0.001 & +0.005 & -0.018 & -0.008 & -0.008 & -0.011 & +0.006 & +0.006 \\*
\quad  &  & (0.024) & (0.020) &  & (0.028) & (0.025) &  & (0.024) & (0.023) &  & (0.031) & (0.033) &  & (0.034) & (0.035) \\[0.3ex]
\multicolumn{16}{l}{\textit{Controls}} \\*[0.3ex]
\quad Male & -0.044 & +0.004 & +0.000 & 0.006 & -0.001 & +0.004 & -0.065 & +0.005 & -0.000 & -0.089 & -0.002 & -0.004 & -0.050 & +0.015 & +0.014 \\*
\quad  &  & (0.042) & (0.043) &  & (0.035) & (0.039) &  & (0.040) & (0.039) &  & (0.039) & (0.038) &  & (0.027) & (0.032) \\[0.3ex]
\quad Ch.\ in HH & -0.088 & +0.010 & -0.011 & 0.008 & +0.015 & +0.025 & -0.042 & -0.021 & -0.026 & 0.005 & +0.009 & +0.008 & 0.078 & -0.007 & -0.003 \\*
\quad  &  & (0.042) & (0.047) &  & (0.039) & (0.042) &  & (0.037) & (0.033) &  & (0.039) & (0.038) &  & (0.038) & (0.039) \\[0.3ex]
\quad Mother's age & -0.040 & +0.001 & +0.020 & -0.105 & -0.001 & +0.027 & -0.039 & -0.014 & -0.019 & 0.011 & +0.009 & +0.008 & -0.102 & -0.013 & -0.016 \\*
\quad  &  & (0.033) & (0.029) &  & (0.038) & (0.032) &  & (0.023) & (0.026) &  & (0.024) & (0.025) &  & (0.030) & (0.030) \\[0.3ex]
\quad Mother's educ. & -0.005 & +0.003 & -0.001 & 0.078 & +0.004 & +0.000 & -0.008 & -0.003 & -0.003 & 0.004 & -0.008 & -0.004 & 0.053 & -0.010 & -0.010 \\*
\quad  &  & (0.030) & (0.024) &  & (0.021) & (0.022) &  & (0.026) & (0.028) &  & (0.013) & (0.015) &  & (0.019) & (0.020) \\[0.3ex]
\quad Mom married & -0.105 & +0.040 & +0.015 & -0.068 & +0.030 & +0.128 & -0.305 & -0.020 & -0.016 & 0.162 & -0.002 & -0.009 & 0.250 & -0.040 & -0.031 \\*
\quad  &  & (0.109) & (0.106) &  & (0.133) & (0.119) &  & (0.117) & (0.104) &  & (0.095) & (0.091) &  & (0.132) & (0.140) \\[0.3ex]
\quad Child $\times$ Married & 0.017 & -0.022 & +0.013 & 0.001 & -0.019 & -0.061 & 0.098 & +0.011 & +0.026 & -0.002 & +0.007 & +0.009 & -0.084 & -0.001 & -0.003 \\*
\quad  &  & (0.050) & (0.054) &  & (0.054) & (0.048) &  & (0.049) & (0.046) &  & (0.046) & (0.044) &  & (0.050) & (0.053) \\[0.3ex]
\quad Log perm.\ inc. & 0.030 & +0.003 & +0.012 & 0.002 & -0.012 & +0.016 & 0.016 & +0.006 & +0.001 & 0.036 & +0.008 & +0.004 & -0.073 & -0.013 & -0.014 \\*
\quad  &  & (0.019) & (0.022) &  & (0.027) & (0.025) &  & (0.033) & (0.029) &  & (0.031) & (0.030) &  & (0.038) & (0.041) \\[0.3ex]
\quad Log curr.\ inc. & 0.023 & -0.012 & -0.011 & -0.046 & -0.005 & -0.003 & 0.056 & -0.002 & +0.002 & -0.041 & -0.001 & -0.008 & 0.021 & +0.006 & +0.007 \\*
\quad  &  & (0.026) & (0.025) &  & (0.024) & (0.023) &  & (0.022) & (0.016) &  & (0.020) & (0.019) &  & (0.023) & (0.024) \\[0.3ex]
\midrule
\quad $\sigma_\eta$ & 0.356 & -0.014 & +0.018 & 0.438 & -0.020 & +0.059$^{\dagger\dagger\dagger}$ & 0.334 & -0.017 & -0.010 & 0.305 & -0.018 & -0.017 & 0.167 & -0.039 & -0.051 \\*
\quad  &  & (0.042) & (0.033) &  & (0.023) & (0.016) &  & (0.027) & (0.020) &  & (0.029) & (0.024) &  & (0.037) & (0.046) \\[0.3ex]
\end{longtable}
\begingroup\footnotesize\setlength{\parindent}{0pt}\setlength{\parskip}{0.4em}%
\setlength{\leftskip}{0.025\textwidth}\setlength{\rightskip}{0.025\textwidth}%
\textit{Notes:} Cells give the median over 20 MC replications of the paired difference (estimated $-$ true) in the standardized coefficient, with the standard deviation of the paired differences across replications (the estimator's standard error) in parentheses; daggers as in Table~\ref{tab:mc_it_total_amr}. True is the value at the data-generating parameters. Each coefficient is multiplied by the standard deviation of its regressor and divided by the standard deviation of log investment at the same ages ($\beta_k\,\mathrm{sd}(\ln f_{k,t})/\mathrm{sd}(\ln I_t)$; lagged investment uses $\mathrm{sd}(\ln I_{t-1})$, controls are only divided, and $\sigma_\eta$ is divided by $\mathrm{sd}(\ln I_t)$), with standard deviations from each fit's own simulated cohort. M2 measures investment, child noncognitive skill and maternal noncognitive skill on a different scale from M1, so raw coefficients are not comparable, whereas these ratios do not depend on the scale or location of any factor; the intercept is omitted for the same reason.

{\scriptsize\setlength{\rightskip}{0.025\textwidth plus 1fil}%
\textit{AI disclosure:} table produced by code written with Claude Sonnet 5.5 (claude-sonnet-5-5), Anthropic, 2026-10-08 (written 2026-10-07 by Claude Sonnet 5.5; statistic, dagger and longtable helpers shared with the total-AMR and per-transition writers, which have mixed model lineage; see git history); MLE/estimator/intertemporal/mc/it\_mc\_analysis.R::it\_mc\_write\_investment\_table().
\par}
\par\endgroup
\endgroup
\end{landscape}

\begingroup
\csname singlespacing\endcsname 
\centering
\begin{longtable}{lcccc}
\caption{CES Production Function Parameters by Period}
\label{tab:ces_params_all_it}\\
\toprule
 & 5--6 & 7--8 & 9--10 & 11--12 \\
\midrule
\endfirsthead
\multicolumn{5}{l}{\textit{(continued from previous page)}}\\
\toprule
 & 5--6 & 7--8 & 9--10 & 11--12 \\
\midrule
\endhead
\midrule
\multicolumn{5}{r}{\textit{continued on next page \ldots}}\\
\endfoot
\bottomrule
\endlastfoot
\multicolumn{5}{l}{\textit{Cognitive ($\theta_{C,t+1}$)}} \\*[0.5ex]
\quad $\gamma_{\theta_C}$ & 0.474$^{***}$ & 0.441$^{***}$ & 0.712$^{***}$ & 0.890 \\*
 & (0.108) & (0.097) & (0.070) & (n.d.) \\
\quad $\gamma_{\theta_N}$ & 0.021 & 0.028 & 0.008 & 0.011 \\*
 & (0.058) & (n.d.) & (0.042) & (n.d.) \\
\quad $\gamma_{M_I}$ & 0.157 & 0.070 & 0.030 & 0.023 \\*
 & (0.102) & (n.d.) & (0.058) & (n.d.) \\
\quad $\gamma_{M_C}$ & 0.288 & 0.368 & 0.201 & 0.041 \\*
 & (n.d.) & (n.d.) & (n.d.) & (n.d.) \\
\quad $\gamma_{M_N}$ & 0.060 & 0.092 & 0.049 & 0.034 \\*
 & (n.d.) & (n.d.) & (0.040) & (n.d.) \\
\quad $\rho$ & -0.526$^{***}$ & -0.668$^{***}$ & -0.817$^{**}$ & -1.677$^{***}$ \\*
 & (0.170) & (0.141) & (0.326) & (0.504) \\
\quad $\psi$ & 1.470$^{***}$ & 2.084$^{***}$ & 1.276$^{***}$ & 1.114$^{***}$ \\*
 & (0.373) & (0.455) & (0.125) & (0.033) \\
\quad $\phi_{\theta_N}$ & 1.995 & 14.657 & 15.076 & 6.956 \\*
 & (3.271) & (n.d.) & (68.771) & (n.d.) \\
\quad $\phi_{M_I}$ & 2.450$^{*}$ & 5.539 & 2.553 & 3.931 \\*
 & (1.302) & (n.d.) & (5.214) & (n.d.) \\
\quad $\phi_{M_C}$ & 2.635 & 10.917 & 2.852 & 1.895 \\*
 & (n.d.) & (n.d.) & (n.d.) & (n.d.) \\
\quad $\phi_{M_N}$ & 23.841 & 54.779 & 21.474 & 17.379 \\*
 & (n.d.) & (n.d.) & (30.428) & (n.d.) \\
\quad $\sigma_u$ & 0.481$^{***}$ & 0.245$^{***}$ & 0.186$^{***}$ & 0.021 \\*
 & (0.016) & (0.013) & (0.014) & (0.030) \\
\midrule
\multicolumn{5}{l}{\textit{Noncognitive ($\theta_{N,t+1}$)}} \\*[0.5ex]
\quad $\gamma_{\theta_C}$ & 0.013 & 0.022 & 0.015 & 0.027 \\*
 & (0.032) & (n.d.) & (0.046) & (0.240) \\
\quad $\gamma_{\theta_N}$ & 0.904 & 0.903 & 0.910 & 0.855$^{**}$ \\*
 & (n.d.) & (n.d.) & (n.d.) & (0.355) \\
\quad $\gamma_{M_I}$ & 0.038 & 0.026 & 0.030 & 0.037 \\*
 & (n.d.) & (n.d.) & (n.d.) & (0.046) \\
\quad $\gamma_{M_C}$ & 0.008 & 0.014 & 0.012 & 0.022 \\*
 & (0.027) & (0.040) & (0.031) & (n.d.) \\
\quad $\gamma_{M_N}$ & 0.036 & 0.036 & 0.033 & 0.059 \\*
 & (0.059) & (0.023) & (0.053) & (0.079) \\
\quad $\rho$ & -0.466 & -0.789 & -1.209 & -0.303 \\*
 & (0.360) & (n.d.) & (0.958) & (0.656) \\
\quad $\psi$ & 0.896$^{***}$ & 0.909$^{***}$ & 0.960$^{***}$ & 1.071$^{***}$ \\*
 & (0.061) & (0.046) & (0.055) & (0.372) \\
\quad $\phi_{\theta_C}$ & 1.204 & 0.997 & 1.917 & 2.013 \\*
 & (n.d.) & (n.d.) & (11.359) & (21.217) \\
\quad $\phi_{M_I}$ & 0.818 & 1.236 & 0.736 & 1.154 \\*
 & (0.528) & (2.461) & (n.d.) & (n.d.) \\
\quad $\phi_{M_C}$ & 2.720 & 2.217 & 2.633 & 2.451 \\*
 & (20.869) & (n.d.) & (n.d.) & (n.d.) \\
\quad $\phi_{M_N}$ & 0.645 & 1.996 & 2.868 & 0.668 \\*
 & (0.833) & (n.d.) & (6.790) & (n.d.) \\
\quad $\sigma_u$ & 0.718$^{***}$ & 0.660$^{***}$ & 0.732$^{***}$ & 0.901$^{***}$ \\*
 & (0.035) & (0.032) & (0.039) & (0.055) \\
\end{longtable}
\noindent
\begin{minipage}{0.95\textwidth}
\vspace{0.5em}\footnotesize
\textit{Notes:} CES production parameters for $\ln\theta_{i,t+1} = (\psi_i/\rho_i)\ln\left(\sum_j \gamma_{ij} y_j^{\rho_i/\phi_{ij}}\right) + \kappa_{0i} + Q_t'\kappa_i + \kappa_{i\eta}\eta_t + \nu_i$, with inputs $y = (\theta_{C,t}, \theta_{N,t}, I_t, M_C, M_N)$.
$\phi_{\text{own}} = 1$ by normalization (omitted); $\gamma$ sums to one. Column headings are the input period's child ages.
Standard errors in parentheses (delta method; provisional pending the bootstrap); n.d.: not defined; $^{***}p<0.01$, $^{**}p<0.05$, $^{*}p<0.10$.
\textit{AI disclosure:} table produced by code written with Claude Opus 5 (claude-opus-5) and Claude Sonnet 5 (claude-sonnet-5) and Claude Opus 5.5 (claude-opus-5-5), Anthropic, 2026-10-08 (CES-share kernel reused from MLE/estimator/plot\_elasticities.R, mixed model lineage; see git history); MLE/estimator/intertemporal/make\_it\_paper\_outputs.R::write\_it\_ces\_params().
\end{minipage}
\par
\endgroup

\begingroup
\csname singlespacing\endcsname 
\centering
\begin{longtable}{lcccc}
\caption{Production Control Coefficients by Period}
\label{tab:controls_all_it}\\
\toprule
 & 5--6 & 7--8 & 9--10 & 11--12 \\
\midrule
\endfirsthead
\multicolumn{5}{l}{\textit{(continued from previous page)}}\\
\toprule
 & 5--6 & 7--8 & 9--10 & 11--12 \\
\midrule
\endhead
\midrule
\multicolumn{5}{r}{\textit{continued on next page \ldots}}\\
\endfoot
\bottomrule
\endlastfoot
\multicolumn{5}{l}{\textit{Cognitive ($\theta_{C,t+1}$)}} \\*[0.5ex]
\quad $\kappa_0$ & -0.066 & 0.001 & -0.023 & -0.049 \\*
 & (0.085) & (0.056) & (0.051) & (0.050) \\
\quad Male & -0.031 & 0.058$^{***}$ & 0.035 & 0.032 \\*
 & (0.025) & (0.021) & (0.022) & (0.024) \\
\quad Mother's age & 0.004 & -0.005 & 0.016 & 0.004 \\*
 & (0.016) & (0.012) & (0.013) & (0.015) \\
\quad Mother's educ. & -0.024 & 0.007 & -0.029$^{**}$ & 0.010 \\*
 & (0.016) & (0.014) & (0.015) & (0.015) \\
\quad Ch.\ in HH & -0.023 & 0.002 & -0.002 & 0.016 \\*
 & (0.028) & (0.022) & (0.019) & (0.019) \\
\quad Mom married & -0.079 & 0.019 & 0.054 & 0.031 \\*
 & (0.065) & (0.053) & (0.053) & (0.057) \\
\quad Children $\times$ Mar. & 0.054 & -0.017 & 0.002 & 0.010 \\*
 & (0.033) & (0.026) & (0.023) & (0.024) \\
\quad $\kappa_{\eta}$ & -0.113 & -0.027 & 0.004 & 0.080 \\*
 & (0.088) & (0.039) & (0.060) & (0.073) \\
\midrule
\multicolumn{5}{l}{\textit{Noncognitive ($\theta_{N,t+1}$)}} \\*[0.5ex]
\quad $\kappa_0$ & -0.428$^{***}$ & -0.239$^{***}$ & -0.304$^{***}$ & -0.487$^{***}$ \\*
 & (0.084) & (0.083) & (0.086) & (0.182) \\
\quad Male & -0.080$^{*}$ & -0.048 & 0.083 & 0.056 \\*
 & (0.047) & (0.048) & (0.054) & (0.071) \\
\quad Mother's age & 0.045$^{*}$ & 0.043 & 0.038 & 0.055 \\*
 & (0.026) & (0.027) & (0.031) & (0.042) \\
\quad Mother's educ. & 0.025 & -0.035 & 0.031 & 0.014 \\*
 & (0.028) & (0.028) & (0.030) & (0.043) \\
\quad Ch.\ in HH & 0.046 & 0.041 & 0.033 & 0.117$^{**}$ \\*
 & (0.052) & (0.049) & (0.046) & (0.059) \\
\quad Mom married & 0.013 & 0.177 & 0.174 & 0.116 \\*
 & (0.119) & (0.121) & (0.131) & (0.167) \\
\quad Children $\times$ Mar. & -0.014 & -0.061 & -0.057 & -0.072 \\*
 & (0.063) & (0.058) & (0.057) & (0.072) \\
\quad $\kappa_{\eta}$ & 0.018 & 0.052 & 0.301$^{*}$ & 0.520$^{*}$ \\*
 & (0.095) & (0.077) & (0.171) & (0.291) \\
\midrule
$N$ & \multicolumn{4}{c}{2,402} \\
\end{longtable}
\noindent
\begin{minipage}{0.95\textwidth}
\vspace{0.5em}\footnotesize
\textit{Notes:} Intercept $\kappa_0$, coefficients on the production controls $Q_t$, and $\kappa_{\eta}$ on the investment shock $\eta_t$ in the production function of Table~\ref{tab:ces_params_all_it}. Column headings are the input period's child ages.
\textit{Male}, children in household, \textit{Mom married} and their interaction enter raw; mother's education and mother's age are standardized.
Standard errors in parentheses (delta method; provisional pending the bootstrap); n.d.: not defined; $^{***}p<0.01$, $^{**}p<0.05$, $^{*}p<0.10$.
\textit{AI disclosure:} table produced by code written with Claude Opus 5 (claude-opus-5) and Claude Sonnet 5 (claude-sonnet-5) and Claude Opus 5.5 (claude-opus-5-5), Anthropic, 2026-10-08 (CES-share kernel reused from MLE/estimator/plot\_elasticities.R, mixed model lineage; see git history); MLE/estimator/intertemporal/make\_it\_paper\_outputs.R::write\_it\_controls().
\end{minipage}
\par
\endgroup

\begin{landscape}
\begingroup
\csname singlespacing\endcsname 
\centering
\footnotesize
\setlength{\tabcolsep}{4pt}
\begin{longtable}{lccccccccc}
\caption{Period-0 Input Mixture (Ages 5--6)}
\label{tab:mixture_it}\\
\toprule
 & $\ln\theta_{C,0}$ & $\ln\theta_{N,0}$ & $\ln I_{-1}$ & $\ln M_C$ & $\ln M_N$ & $E$ & $A_0$ & $P$ & $Y_0$ \\
\midrule
\endfirsthead
\multicolumn{10}{l}{\textit{(continued from previous page)}}\\
\toprule
 & $\ln\theta_{C,0}$ & $\ln\theta_{N,0}$ & $\ln I_{-1}$ & $\ln M_C$ & $\ln M_N$ & $E$ & $A_0$ & $P$ & $Y_0$ \\
\midrule
\endhead
\midrule
\multicolumn{10}{r}{\textit{continued on next page \ldots}}\\
\endfoot
\bottomrule
\endlastfoot
\multicolumn{10}{l}{\textit{Panel A: Mixing weight and component means}} \\*[0.5ex]
\quad Mixing weight $\pi_1$ & \multicolumn{9}{l}{0.902$^{***}$ (0.009); $\pi_2 = 1-\pi_1$} \\
\quad Mean $\mu_1$ & -0.062$^{***}$ & -1.354$^{***}$ & 1.659$^{***}$ & -0.001 & 0.090$^{**}$ & 0.015 & -0.029 & 0.039$^{*}$ & 0.097$^{***}$ \\*
 & (0.020) & (0.051) & (0.108) & (0.022) & (0.043) & (0.022) & (0.022) & (0.021) & (0.015) \\
\quad Mean $\mu_2$ & 0.565$^{***}$ & -1.449$^{***}$ & 1.774$^{***}$ & -0.031 & 0.072 & -0.179$^{*}$ & 0.208$^{**}$ & -0.393$^{***}$ & -1.122$^{***}$ \\*
 & (0.089) & (0.102) & (0.198) & (0.074) & (0.110) & (0.097) & (0.104) & (0.105) & (0.180) \\
\midrule
\multicolumn{10}{l}{\textit{Panel B: Component-1 covariance matrix $\Sigma_1$ (lower triangle)}} \\*[0.5ex]
\quad $\ln\theta_{C,0}$ & 0.153$^{***}$ &  &  &  &  &  &  &  &  \\*
 & (0.013) &  &  &  &  &  &  &  &  \\
\quad $\ln\theta_{N,0}$ & 0.107$^{***}$ & 1.521$^{***}$ &  &  &  &  &  &  &  \\*
 & (0.014) & (0.117) &  &  &  &  &  &  &  \\
\quad $\ln I_{-1}$ & 0.225$^{***}$ & 0.645$^{***}$ & 2.464$^{***}$ &  &  &  &  &  &  \\*
 & (0.028) & (0.078) & (0.408) &  &  &  &  &  &  \\
\quad $\ln M_C$ & 0.114$^{***}$ & 0.224$^{***}$ & 0.832$^{***}$ & 0.705$^{***}$ &  &  &  &  &  \\*
 & (0.010) & (0.027) & (0.079) & (0.029) &  &  &  &  &  \\
\quad $\ln M_N$ & 0.082$^{***}$ & 0.379$^{***}$ & 0.551$^{***}$ & 0.392$^{***}$ & 1.503$^{***}$ &  &  &  &  \\*
 & (0.013) & (0.046) & (0.079) & (0.033) & (0.142) &  &  &  &  \\
\quad $E$ & 0.115$^{***}$ & 0.276$^{***}$ & 0.759$^{***}$ & 0.483$^{***}$ & 0.399$^{***}$ & 0.936$^{***}$ &  &  &  \\*
 & (0.011) & (0.031) & (0.075) & (0.023) & (0.035) & (0.031) &  &  &  \\
\quad $A_0$ & 0.088$^{***}$ & 0.281$^{***}$ & 0.846$^{***}$ & 0.311$^{***}$ & 0.171$^{***}$ & 0.440$^{***}$ & 0.929$^{***}$ &  &  \\*
 & (0.010) & (0.031) & (0.085) & (0.020) & (0.031) & (0.023) & (0.031) &  &  \\
\quad $P$ & 0.117$^{***}$ & 0.321$^{***}$ & 0.872$^{***}$ & 0.485$^{***}$ & 0.327$^{***}$ & 0.447$^{***}$ & 0.380$^{***}$ & 0.874$^{***}$ &  \\*
 & (0.010) & (0.030) & (0.080) & (0.022) & (0.033) & (0.022) & (0.022) & (0.028) &  \\
\quad $Y_0$ & 0.074$^{***}$ & 0.195$^{***}$ & 0.584$^{***}$ & 0.300$^{***}$ & 0.195$^{***}$ & 0.290$^{***}$ & 0.308$^{***}$ & 0.491$^{***}$ & 0.411$^{***}$ \\*
 & (0.007) & (0.021) & (0.055) & (0.015) & (0.022) & (0.016) & (0.016) & (0.018) & (0.014) \\
\midrule
\multicolumn{10}{l}{\textit{Panel C: Component-2 covariance matrix $\Sigma_2$ (lower triangle)}} \\*[0.5ex]
\quad $\ln\theta_{C,0}$ & 1.332$^{***}$ &  &  &  &  &  &  &  &  \\*
 & (0.166) &  &  &  &  &  &  &  &  \\
\quad $\ln\theta_{N,0}$ & 0.217$^{**}$ & 1.106$^{***}$ &  &  &  &  &  &  &  \\*
 & (0.099) & (0.176) &  &  &  &  &  &  &  \\
\quad $\ln I_{-1}$ & 0.894$^{***}$ & 0.727$^{***}$ & 3.082$^{***}$ &  &  &  &  &  &  \\*
 & (0.196) & (0.184) & (0.666) &  &  &  &  &  &  \\
\quad $\ln M_C$ & 0.429$^{***}$ & 0.284$^{***}$ & 0.961$^{***}$ & 0.783$^{***}$ &  &  &  &  &  \\*
 & (0.085) & (0.083) & (0.173) & (0.089) &  &  &  &  &  \\
\quad $\ln M_N$ & 0.170 & 0.243$^{**}$ & 0.652$^{***}$ & 0.510$^{***}$ & 1.311$^{***}$ &  &  &  &  \\*
 & (0.112) & (0.115) & (0.209) & (0.098) & (0.215) &  &  &  &  \\
\quad $E$ & 0.618$^{***}$ & 0.321$^{***}$ & 1.197$^{***}$ & 0.783$^{***}$ & 0.631$^{***}$ & 1.542$^{***}$ &  &  &  \\*
 & (0.115) & (0.107) & (0.233) & (0.108) & (0.130) & (0.179) &  &  &  \\
\quad $A_0$ & 0.224$^{**}$ & 0.111 & 1.151$^{***}$ & 0.308$^{***}$ & 0.463$^{***}$ & 0.581$^{***}$ & 1.597$^{***}$ &  &  \\*
 & (0.103) & (0.108) & (0.228) & (0.090) & (0.129) & (0.122) & (0.169) &  &  \\
\quad $P$ & 0.584$^{***}$ & 0.414$^{***}$ & 1.132$^{***}$ & 0.650$^{***}$ & 0.658$^{***}$ & 0.826$^{***}$ & 0.501$^{***}$ & 1.982$^{***}$ &  \\*
 & (0.124) & (0.123) & (0.230) & (0.102) & (0.140) & (0.135) & (0.131) & (0.198) &  \\
\quad $Y_0$ & 1.035$^{***}$ & 0.205 & 0.866$^{**}$ & 0.870$^{***}$ & 0.670$^{***}$ & 0.667$^{***}$ & 0.425$^{*}$ & 1.443$^{***}$ & 5.641$^{***}$ \\*
 & (0.227) & (0.197) & (0.362) & (0.169) & (0.225) & (0.216) & (0.227) & (0.277) & (0.655) \\
\end{longtable}
\noindent
\begin{minipage}{0.95\textwidth}
\vspace{0.5em}\footnotesize
\textit{Notes:} Initial condition at ages 5--6 (period 0): the two-component mixture $x \sim \pi_1 N(\mu_1, \Sigma_1) + (1-\pi_1) N(\mu_2, \Sigma_2)$ over $x = (\ln\theta_{C,0}, \ln\theta_{N,0}, \ln I_{-1}, \ln M_C, \ln M_N, E, A_0, P, Y_0)$.
$\ln\theta_{C,0}$ and $\ln\theta_{N,0}$ are log child cognitive and noncognitive skill, $\ln I_{-1}$ is log parental investment at ages 3--4 (the lag in the period-0 investment equation), and $\ln M_C$, $\ln M_N$ are log maternal cognitive and noncognitive skill, all on their anchored scales.
$E$ (mother's education at age 30), $A_0$ (mother's age) and $P$ (log permanent income) are standardized over the estimation sample (2,402 children), $Y_0$ (log current income) over the children with a period-0 report.
Panels B and C report the lower triangle of $\Sigma_k = L_kL_k'$, variances on the diagonal.
Standard errors in parentheses (delta method; provisional pending the bootstrap); n.d.: not defined; $^{***}p<0.01$, $^{**}p<0.05$, $^{*}p<0.10$.
\textit{AI disclosure:} table produced by code written with Claude Opus 5 (claude-opus-5) and Claude Sonnet 5 (claude-sonnet-5) and Claude Opus 5.5 (claude-opus-5-5), Anthropic, 2026-10-08 (shared formatting and delta-method helpers by Claude Opus 5 and Claude Sonnet 5; see git history); MLE/estimator/intertemporal/make\_it\_paper\_outputs.R::write\_it\_mixture().
\end{minipage}
\par
\endgroup
\end{landscape}

\begingroup
\csname singlespacing\endcsname 
\centering
\begin{longtable}{lcccc}
\caption{Initial Covariate Equations at Ages 5--6 (Sequential Probits)}
\label{tab:covariates_p4_it}\\
\toprule
 & Male & Mom married ($R_0$) & $K_0\ge 2$ & $K_0\ge 3 \mid K_0\ge 2$ \\
\midrule
\endfirsthead
\multicolumn{5}{l}{\textit{(continued from previous page)}}\\
\toprule
 & Male & Mom married ($R_0$) & $K_0\ge 2$ & $K_0\ge 3 \mid K_0\ge 2$ \\
\midrule
\endhead
\midrule
\multicolumn{5}{r}{\textit{continued on next page \ldots}}\\
\endfoot
\bottomrule
\endlastfoot
Intercept & -0.264$^{***}$ & 0.195 & -0.047 & -0.722$^{***}$ \\*
 & (0.093) & (0.121) & (0.096) & (0.130) \\
$\ln\theta_{C,0}$ & -0.095$^{*}$ & -0.150$^{**}$ & -0.055 & -0.012 \\*
 & (0.053) & (0.070) & (0.056) & (0.070) \\
$\ln\theta_{N,0}$ & -0.162$^{***}$ & 0.006 & 0.003 & 0.021 \\*
 & (0.028) & (0.034) & (0.028) & (0.035) \\
$\ln I_{-1}$ & 0.023 & 0.264$^{***}$ & -0.021 & -0.053 \\*
 & (0.042) & (0.064) & (0.043) & (0.049) \\
$\ln M_C$ & 0.078 & -0.027 & 0.106$^{*}$ & 0.117 \\*
 & (0.056) & (0.073) & (0.059) & (0.074) \\
$\ln M_N$ & 0.072$^{***}$ & -0.193$^{***}$ & -0.038 & -0.002 \\*
 & (0.028) & (0.037) & (0.029) & (0.037) \\
Mother's educ.\ ($E$) & -0.027 & -0.120$^{***}$ & 0.066$^{*}$ & 0.007 \\*
 & (0.036) & (0.045) & (0.037) & (0.050) \\
Mother's age ($A_0$) & 0.023 & -0.067 & -0.181$^{***}$ & -0.112$^{**}$ \\*
 & (0.036) & (0.048) & (0.037) & (0.052) \\
Log perm.\ inc.\ ($P$) & -0.044 & 0.542$^{***}$ & -0.015 & -0.080 \\*
 & (0.042) & (0.054) & (0.043) & (0.059) \\
Log curr.\ inc.\ ($Y_0$) & 0.013 & 0.513$^{***}$ & 0.011 & -0.068 \\*
 & (0.035) & (0.052) & (0.036) & (0.056) \\
Male &  & -0.001 & 0.018 & -0.031 \\*
 &  & (0.066) & (0.055) & (0.070) \\
Mom married ($R_0$) &  &  & 0.744$^{***}$ & 0.281$^{***}$ \\*
 &  &  & (0.072) & (0.105) \\
\end{longtable}
\noindent
\begin{minipage}{0.95\textwidth}
\vspace{0.5em}\footnotesize
\textit{Notes:} Sequential probits generating the discrete covariates at ages 5--6 (period 0), each conditioning on the period-0 mixture vector $x$ (Table~\ref{tab:mixture_it}) and on the covariates generated before it, all with standard normal errors: $\text{Male} = \mathbf{1}\{a^m + b^{m\prime}x + \varepsilon^m > 0\}$; $R_0 = \mathbf{1}\{a^R + b^{R\prime}x + g^R\,\text{Male} + \varepsilon^R > 0\}$; and children in the household $K_0 \in \{\le 1, 2, \ge 3\}$ (a count of 0 is recoded to 1) from two sequential binary probits, $\Pr(K_0 \ge 2) = \Phi(a^{K1} + b^{K1\prime}x + g^{K1}\,\text{Male} + h^{K1}R_0)$ and $\Pr(K_0 \ge 3 \mid K_0 \ge 2) = \Phi(a^{K2} + b^{K2\prime}x + g^{K2}\,\text{Male} + h^{K2}R_0)$. A blank cell means the equation does not contain that term.
The coordinates of $x$ enter on the scales of Table~\ref{tab:mixture_it}; Male and $R_0$ are 0/1 indicators.
Standard errors in parentheses (delta method; provisional pending the bootstrap); n.d.: not defined; $^{***}p<0.01$, $^{**}p<0.05$, $^{*}p<0.10$.
\textit{AI disclosure:} table produced by code written with Claude Opus 5 (claude-opus-5) and Claude Sonnet 5 (claude-sonnet-5) and Claude Opus 5.5 (claude-opus-5-5), Anthropic, 2026-10-08 (shared formatting and delta-method helpers by Claude Opus 5 and Claude Sonnet 5; see git history); MLE/estimator/intertemporal/make\_it\_paper\_outputs.R::write\_it\_covariate\_eqs().
\end{minipage}
\par
\endgroup

\bigskip

\begingroup
\csname singlespacing\endcsname 
\centering
\begin{longtable}{lcccc}
\caption{Covariate Laws of Motion}
\label{tab:covariate_transitions_it}\\
\toprule
 & 7--8 & 9--10 & 11--12 & 13--14 \\
\midrule
\endfirsthead
\multicolumn{5}{l}{\textit{(continued from previous page)}}\\
\toprule
 & 7--8 & 9--10 & 11--12 & 13--14 \\
\midrule
\endhead
\midrule
\multicolumn{5}{r}{\textit{continued on next page \ldots}}\\
\endfoot
\bottomrule
\endlastfoot
\multicolumn{5}{l}{\textit{Panel A: Mom married, $R_t$ (probit)}} \\*[0.5ex]
\quad Intercept & -1.387$^{***}$ & -1.297$^{***}$ & -1.376$^{***}$ & -1.153$^{***}$ \\*
 & (0.194) & (0.228) & (0.203) & (0.212) \\
\quad $\ln\theta_{C,t}$ & 0.020 & 0.027 & -0.022 & 0.110 \\*
 & (0.072) & (0.077) & (0.074) & (0.081) \\
\quad $\ln\theta_{N,t}$ & 0.028 & 0.023 & 0.022 & -0.015 \\*
 & (0.041) & (0.039) & (0.033) & (0.037) \\
\quad $\ln M_C$ & -0.032 & -0.078 & 0.003 & -0.057 \\*
 & (0.081) & (0.117) & (0.085) & (0.096) \\
\quad $\ln M_N$ & -0.048 & -0.067 & -0.068 & 0.048 \\*
 & (0.043) & (0.046) & (0.043) & (0.051) \\
\quad $\ln I_{t-1}$ & 0.086$^{*}$ & 0.039 & 0.074 & -0.027 \\*
 & (0.047) & (0.070) & (0.057) & (0.071) \\
\quad Log perm.\ inc.\ ($P$) & 0.562$^{***}$ & 0.573$^{***}$ & 0.563$^{***}$ & 0.414$^{***}$ \\*
 & (0.063) & (0.068) & (0.071) & (0.080) \\
\quad Log curr.\ inc., lagged ($Y_{t-1}$) & -0.116$^{**}$ & -0.106$^{**}$ & -0.034 & -0.052 \\*
 & (0.049) & (0.044) & (0.050) & (0.068) \\
\quad Mother's educ.\ ($E$) & -0.068 & -0.102$^{*}$ & -0.062 & -0.044 \\*
 & (0.055) & (0.060) & (0.057) & (0.063) \\
\quad Mother's age ($A_t$) & -0.055 & 0.090$^{*}$ & -0.021 & -0.140$^{**}$ \\*
 & (0.049) & (0.050) & (0.048) & (0.059) \\
\quad Mom married, lagged ($R_{t-1}$) & 2.460$^{***}$ & 2.698$^{***}$ & 2.713$^{***}$ & 2.493$^{***}$ \\*
 & (0.203) & (0.207) & (0.209) & (0.237) \\
\quad Ch.\ in HH, lagged ($K_{t-1}$) & 0.237$^{***}$ & 0.188$^{***}$ & 0.149$^{**}$ & 0.030 \\*
 & (0.082) & (0.071) & (0.073) & (0.083) \\
\quad Ch.\ $\times$ married, lagged ($K_{t-1}R_{t-1}$) & -0.115 & -0.176$^{*}$ & -0.200$^{**}$ & 0.042 \\*
 & (0.105) & (0.097) & (0.092) & (0.106) \\
\midrule
\multicolumn{5}{l}{\textit{Panel B: Log current income, $Y_t$ (linear, normal errors)}} \\*[0.5ex]
\quad Intercept & -0.208$^{***}$ & -0.247$^{***}$ & -0.212$^{***}$ & -0.135$^{**}$ \\*
 & (0.056) & (0.057) & (0.052) & (0.066) \\
\quad Log curr.\ inc., lagged ($Y_{t-1}$) & 0.257$^{***}$ & 0.165$^{***}$ & 0.288$^{***}$ & 0.263$^{***}$ \\*
 & (0.025) & (0.028) & (0.023) & (0.032) \\
\quad Log perm.\ inc.\ ($P$) & 0.317$^{***}$ & 0.393$^{***}$ & 0.347$^{***}$ & 0.325$^{***}$ \\*
 & (0.031) & (0.033) & (0.031) & (0.040) \\
\quad Mother's age ($A_t$) & 0.031 & 0.035 & 0.039$^{*}$ & -0.077$^{***}$ \\*
 & (0.022) & (0.023) & (0.022) & (0.028) \\
\quad Mom married ($R_t$) & 0.401$^{***}$ & 0.356$^{***}$ & 0.307$^{***}$ & 0.374$^{***}$ \\*
 & (0.049) & (0.051) & (0.046) & (0.059) \\
\quad $\ln M_C$ & -0.020 & -0.014 & 0.007 & 0.026 \\*
 & (0.035) & (0.036) & (0.034) & (0.045) \\
\quad $\ln M_N$ & 0.004 & 0.026 & 0.013 & 0.005 \\*
 & (0.019) & (0.020) & (0.019) & (0.025) \\
\quad Mother's educ.\ ($E$) & 0.057$^{**}$ & 0.041 & 0.047$^{*}$ & 0.047 \\*
 & (0.026) & (0.027) & (0.025) & (0.033) \\
\quad Ch.\ in HH, lagged ($K_{t-1}$) & -0.049$^{*}$ & -0.008 & 0.005 & -0.058$^{**}$ \\*
 & (0.025) & (0.024) & (0.020) & (0.026) \\
\quad $\sigma_Y$ & 0.795$^{***}$ & 0.838$^{***}$ & 0.752$^{***}$ & 0.888$^{***}$ \\*
 & (0.013) & (0.014) & (0.013) & (0.017) \\
\midrule
\multicolumn{5}{l}{\textit{Panel C: Change in children in household, $\Delta K_t$ (ordered probit)}} \\*[0.5ex]
\quad $\ln\theta_{C,t}$ & 0.046 & -0.066 & -0.048 & -0.089 \\*
 & (0.055) & (0.062) & (0.068) & (0.073) \\
\quad $\ln\theta_{N,t}$ & 0.005 & 0.003 & 0.039 & 0.025 \\*
 & (0.036) & (0.032) & (0.030) & (0.030) \\
\quad $\ln M_C$ & 0.024 & 0.036 & 0.038 & -0.053 \\*
 & (0.062) & (0.070) & (0.074) & (0.080) \\
\quad $\ln M_N$ & -0.057$^{*}$ & 0.019 & 0.013 & -0.056 \\*
 & (0.033) & (0.037) & (0.038) & (0.043) \\
\quad $\ln I_{t-1}$ & -0.040 & -0.081$^{*}$ & -0.072 & -0.036 \\*
 & (0.039) & (0.044) & (0.049) & (0.065) \\
\quad Log perm.\ inc.\ ($P$) & -0.006 & 0.006 & 0.054 & 0.012 \\*
 & (0.053) & (0.059) & (0.057) & (0.066) \\
\quad Log curr.\ inc., lagged ($Y_{t-1}$) & -0.067 & -0.033 & -0.050 & 0.016 \\*
 & (0.042) & (0.041) & (0.043) & (0.045) \\
\quad Mother's educ.\ ($E$) & 0.053 & 0.089$^{*}$ & -0.021 & 0.089 \\*
 & (0.042) & (0.047) & (0.050) & (0.057) \\
\quad Mother's age ($A_t$) & -0.252$^{***}$ & -0.200$^{***}$ & -0.181$^{***}$ & -0.206$^{***}$ \\*
 & (0.039) & (0.041) & (0.044) & (0.049) \\
\quad Mom married, lagged ($R_{t-1}$) & 0.032 & -0.113 & -0.344$^{*}$ & -0.459$^{**}$ \\*
 & (0.181) & (0.210) & (0.204) & (0.217) \\
\quad Mom married ($R_t$) & 0.554$^{***}$ & 0.662$^{***}$ & 0.457$^{***}$ & 0.468$^{***}$ \\*
 & (0.106) & (0.122) & (0.120) & (0.149) \\
\quad Ch.\ in HH, lagged ($K_{t-1}$) & -0.197$^{***}$ & -0.024 & -0.123$^{*}$ & -0.345$^{***}$ \\*
 & (0.073) & (0.080) & (0.066) & (0.059) \\
\quad Ch.\ $\times$ married, lagged ($K_{t-1}R_{t-1}$) & -0.049 & -0.171$^{*}$ & 0.003 & 0.151$^{*}$ \\*
 & (0.085) & (0.093) & (0.082) & (0.080) \\
\quad Threshold $\tau_1$ & -2.450$^{***}$ & -2.352$^{***}$ & -2.388$^{***}$ & -2.638$^{***}$ \\*
 & (0.222) & (0.214) & (0.189) & (0.192) \\
\quad Threshold $\tau_2$ & 0.877$^{***}$ & 1.220$^{***}$ & 1.153$^{***}$ & 1.059$^{***}$ \\*
 & (0.185) & (0.205) & (0.173) & (0.156) \\
\end{longtable}
\noindent
\begin{minipage}{0.95\textwidth}
\vspace{0.5em}\footnotesize
\textit{Notes:} Laws of motion of the time-varying covariates. Within period $t = 1,\dots,4$ the model generates $\theta_t \to R_t \to Y_t \to \Delta K_t \to \ln I_t$ (Table~\ref{tab:investment_all_it} for the last). Column headings are the child ages of period $t$, the period whose covariate is generated.
Panel A: $R_t = \mathbf{1}\{m^R_t + \varepsilon^R_t > 0\}$, $m^R_t = a^R_t + b^{R\prime}_t(\ln\theta_{C,t}, \ln\theta_{N,t}, \ln M_C, \ln M_N) + g^R_t\ln I_{t-1} + d^{R\prime}_t(P, Y_{t-1}, E, A_t, R_{t-1}, K_{t-1}, K_{t-1}R_{t-1})$.
Panel B: $Y_t = a^Y_t + \rho^Y_t Y_{t-1} + b^Y_t P + d^Y_t A_t + g^Y_t R_t + h^Y_{C,t}\ln M_C + h^Y_{N,t}\ln M_N + e^Y_t E + k^Y_t K_{t-1} + \varepsilon^Y_t$, $\varepsilon^Y_t \sim N(0, \sigma^2_{Y,t})$.
Panel C: $\Delta K_t = K_t - K_{t-1}$ in three categories $\{\le -1, 0, \ge +1\}$, $\Pr(\Delta K_t \le -1) = \Phi(\tau_{1,t} - m^K_t)$ and $\Pr(\Delta K_t \le 0) = \Phi(\tau_{2,t} - m^K_t)$, with $m^K_t = b^{K\prime}_t(\ln\theta_{C,t}, \ln\theta_{N,t}, \ln M_C, \ln M_N) + g^K_t\ln I_{t-1} + d^{K\prime}_t(P, Y_{t-1}, E, A_t, R_{t-1}, R_t, K_{t-1}, K_{t-1}R_{t-1})$ and no intercept (the thresholds absorb it). $\varepsilon^R_t$ and the ordered-probit error are standard normal.
Scaling: $R$ is a 0/1 indicator; $K$ (children in the household, a count of 0 recoded to 1) and $K\cdot R$ enter raw. $E$, $A$ and $P$ are standardized over the estimation sample (2,402 children) and $Y$ per period over the children observed in that period. $A_t = A_0 + 2t$ enters as standardized $A_0$ (the age shift is absorbed by the intercepts, and by the thresholds in Panel C). Latent regressors enter as log factors on their anchored scales.
Standard errors in parentheses (delta method; provisional pending the bootstrap); n.d.: not defined; $^{***}p<0.01$, $^{**}p<0.05$, $^{*}p<0.10$.
\textit{AI disclosure:} table produced by code written with Claude Opus 5 (claude-opus-5) and Claude Sonnet 5 (claude-sonnet-5) and Claude Opus 5.5 (claude-opus-5-5), Anthropic, 2026-10-08 (shared formatting and delta-method helpers by Claude Opus 5 and Claude Sonnet 5; see git history); MLE/estimator/intertemporal/make\_it\_paper\_outputs.R::write\_it\_covariate\_eqs().
\end{minipage}
\par
\endgroup

\begin{figure}
    \centering
    \includegraphics[width=\linewidth]{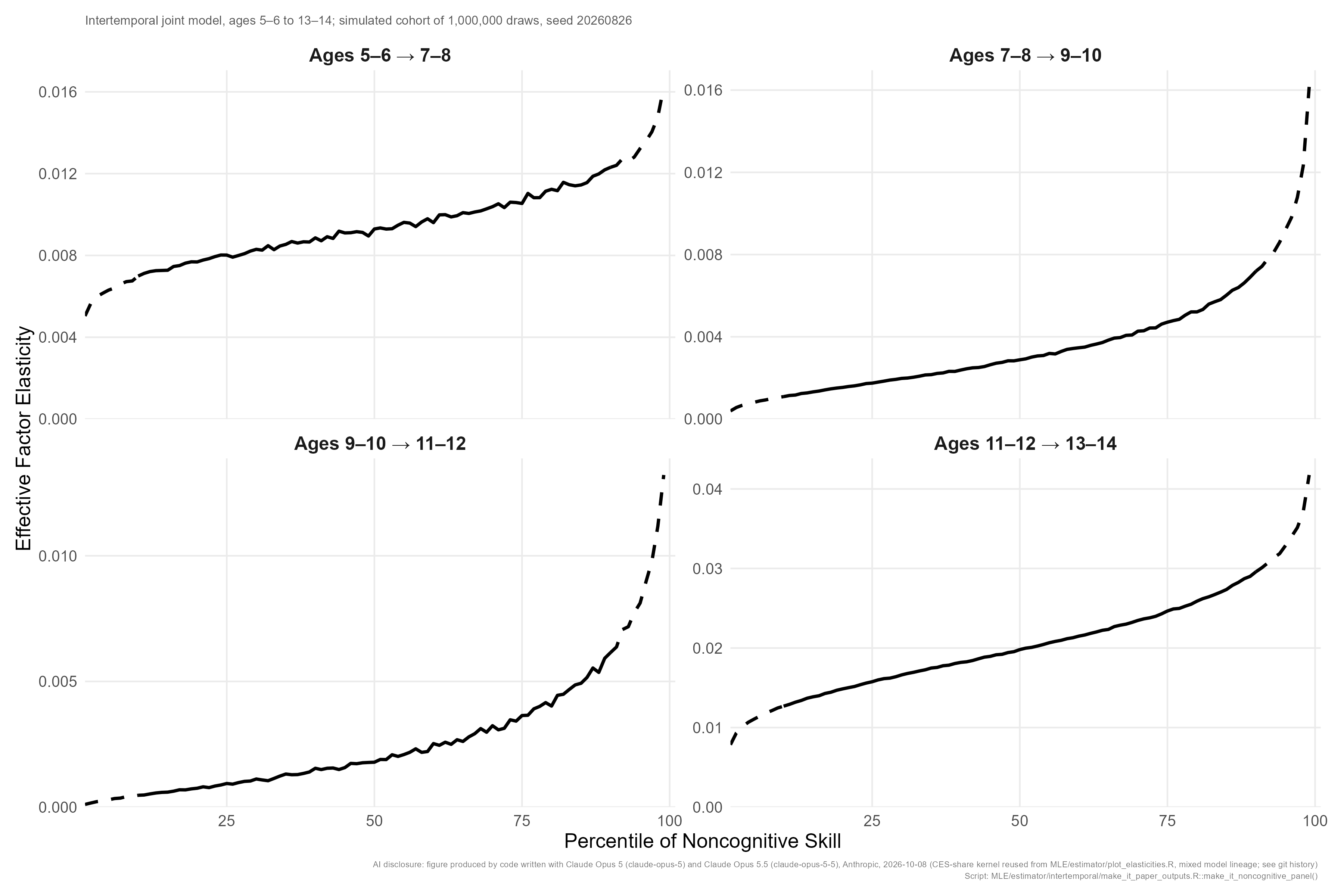}
    \caption{Noncognitive AMR to Investment in the Short Run}
    \label{fig:amr_noncog_short}
\end{figure}

\begin{figure}
    \centering
    \includegraphics[width=\linewidth]{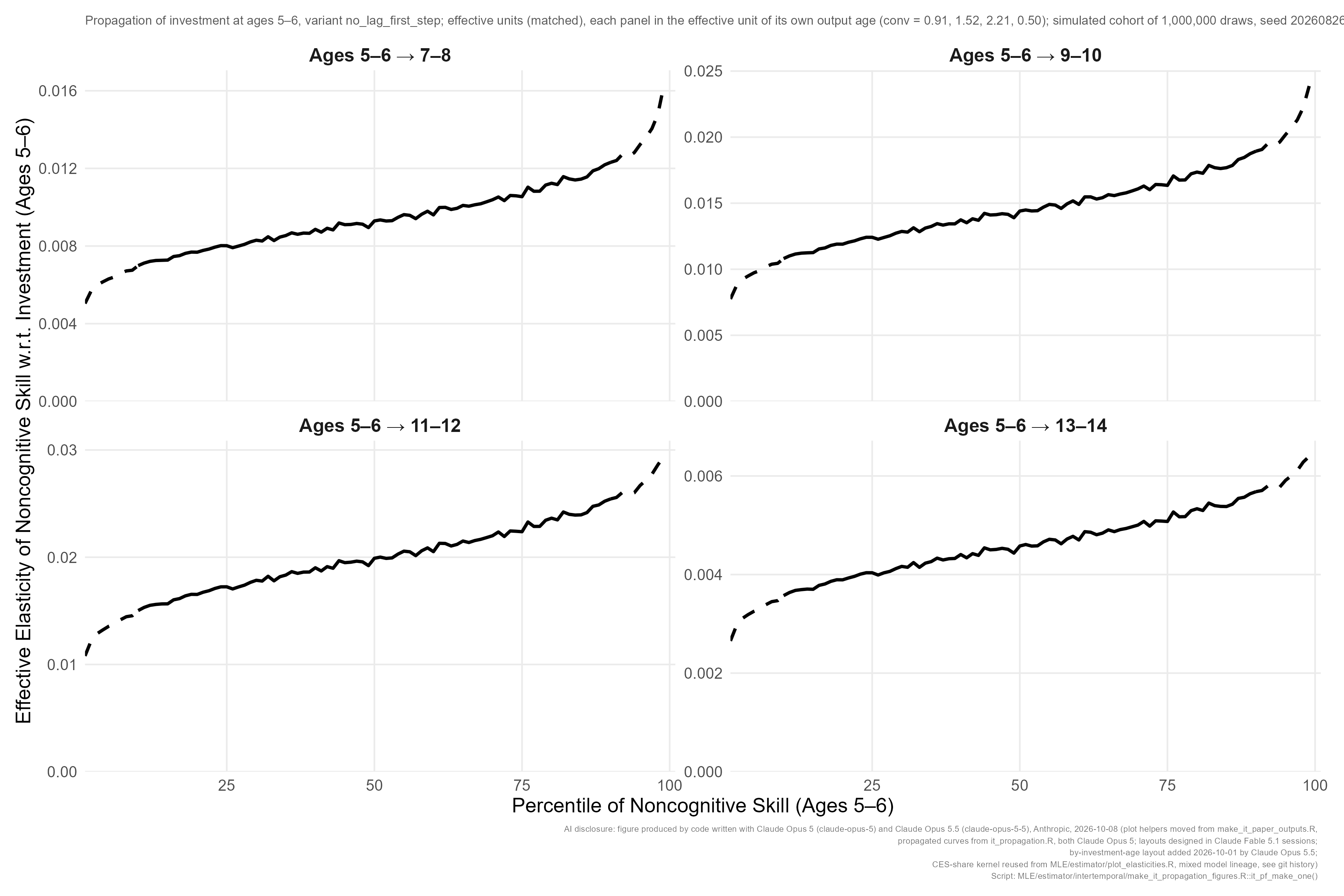}
    \caption{AMR to Investment at Ages 5-6 on Noncognitive Skill Across Childhood}
    \label{fig:amr_noncog_long}
\end{figure}

\begin{figure}
    \centering
    \includegraphics[width=\linewidth]{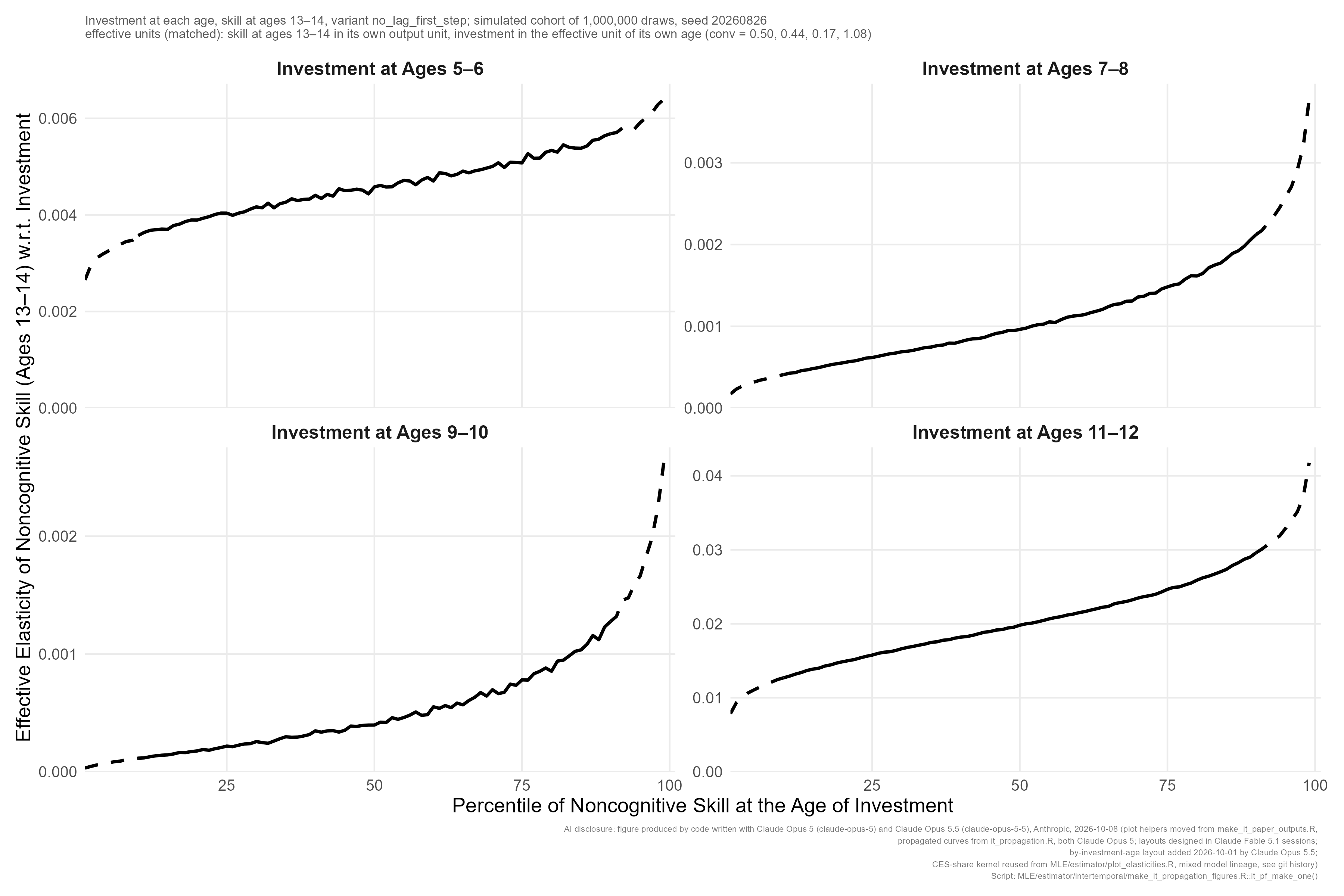}
    \caption{AMR to Investment at Different Ages on Cognitive Skill at Ages 13-14}
    \label{fig:amr_noncog_invage}
\end{figure}

\begin{figure}
    \centering
    \includegraphics[width=\linewidth]{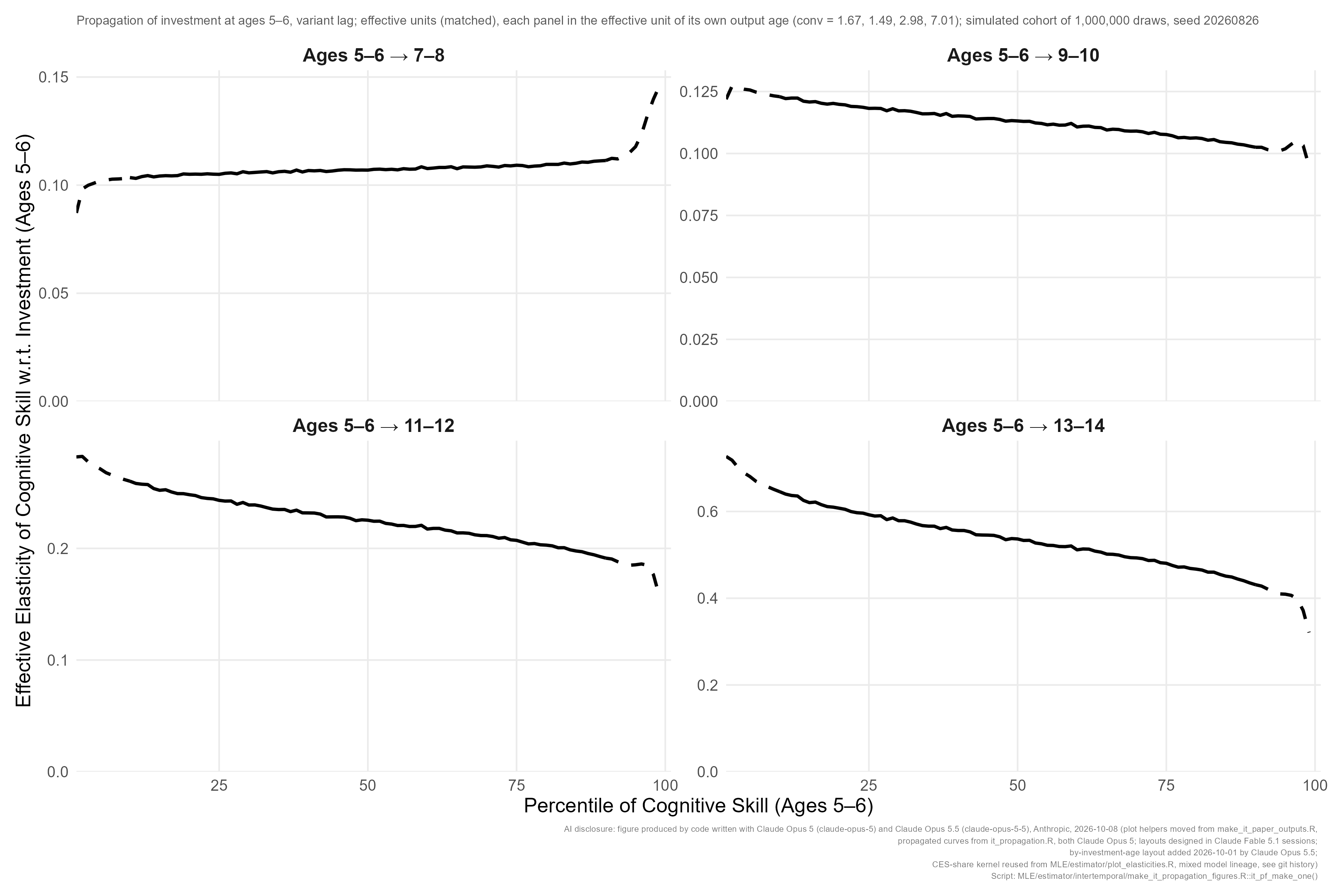}
    \caption{AMR to Investment at Ages 5-6 on Cognitive Skill Across Childhood, Full Investment Propagation}
    \label{fig:amr_cog_long_propI}
\end{figure}

\begin{figure}
    \centering
    \includegraphics[width=\linewidth]{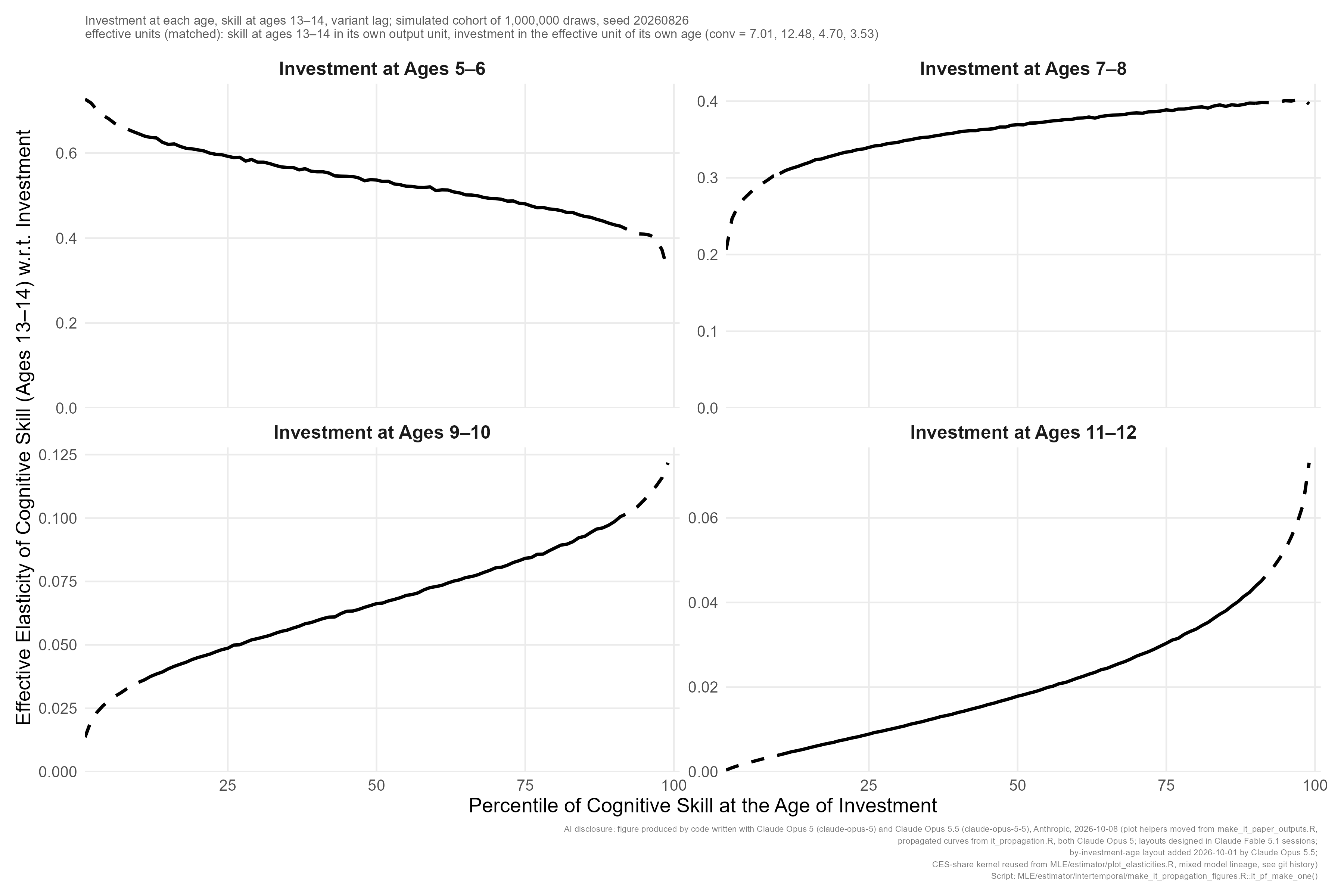}
    \caption{AMR to Investment at Different Ages on Cognitive Skill at Ages 13-14, Full Investment Propagation}
    \label{fig:amr_cog_invage_propI}
\end{figure}

\begin{figure}
    \centering
    \includegraphics[width=\linewidth]{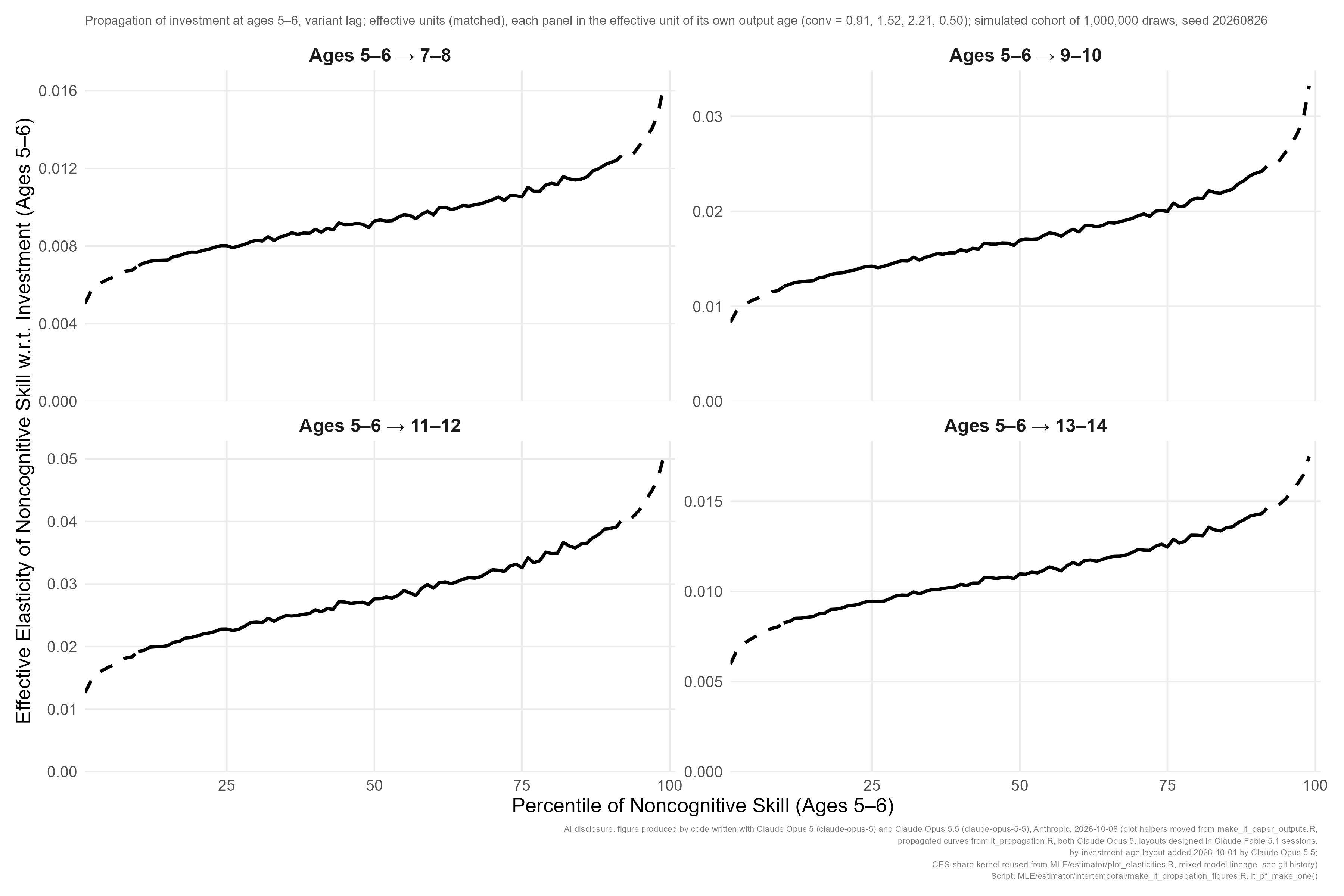}
    \caption{AMR to Investment at Ages 5-6 on Noncognitive Skill Across Childhood, Full Investment Propagation}
    \label{fig:amr_noncog_long_propI}
\end{figure}

\begin{figure}
    \centering
    \includegraphics[width=\linewidth]{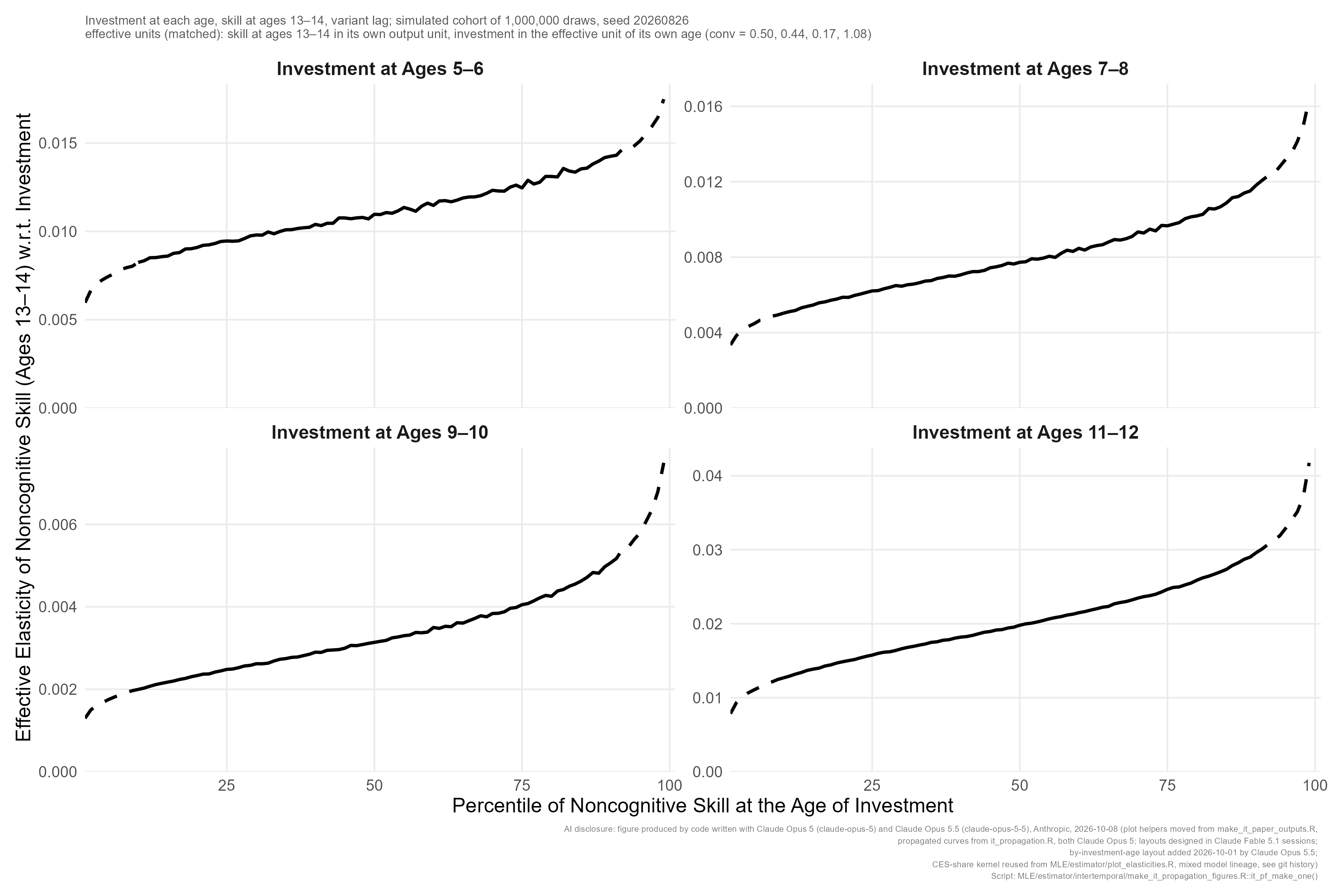}
    \caption{AMR to Investment at Different Ages on Cognitive Skill at Ages 13-14, Full Investment Propagation}
    \label{fig:amr_noncog_invage_propI}
\end{figure}

\section{Calculations and Likelihoods}

\subsection{The Likelihood of the Intertemporal Model}\label{sec:app:it_likelihood}

This appendix writes out the likelihood of the intertemporal model. As in Section \ref{sec:model_skill_form:amr_inv0_cog4}, $t=0$ is ages 5--6 and $t=4$ is ages 13--14, and $t=-1$ (ages 3--4) enters only through lagged investment. The sample is the $N=2{,}402$ firstborn children with at least one proxy for every factor at ages 5--6. For child $n$ the latent log factors are
\[
\mathbf{f}_n=\bigl(\{\ln\theta_{C,n,t},\ln\theta_{N,n,t}\}_{t=0}^{4},\ \{\ln M_{I,n,t}\}_{t=-1}^{4},\ \ln M_{C,n},\ \ln M_{N,n}\bigr),
\]
eighteen in all. The time-invariant covariates are the child's gender $Q_{M,n}$, maternal education at age 30 $Q_{E,n}$, maternal permanent income $W_{P,n}$, and maternal age at the NLSY79 interview data corresponding to when her child was aged 5-6 $Q_{A,n}$. The time-varying covariates are the mother's marital status $Q_{R,n,t}$, the number of the mother's biological, step, and adopted children living in the household $Q_{K,n,t}\geq 1$, the focal child (firstborn) included, and log household income $W_{n,t}$. $Q_E$, $Q_A$, $W_P$, and $W_t$ are standardized within period, while $Q_M$, $Q_R$, and $Q_K$ enter in levels. Gender, education, age, and permanent income are always observed. The other covariates are integrated over when they are missing. 

The initial condition is the nine-vector
\[
\begin{aligned}
\mathbf{x}_n&=(\mathbf{f}_{n,0}',Q_{E,n},Q_{A,n,0},W_{P,n},W_{n,0})',\\
\mathbf{f}_{n,0}&=(\ln\theta_{C,n,0},\ln\theta_{N,n,0},\ln M_{I,n,-1},\ln M_{C,n},\ln M_{N,n})',
\end{aligned}
\]
which follows a two-component mixture of normals as in Section \ref{sec:model_skill_form:covariates}, with weights $\pi_1=\pi$ and $\pi_2=1-\pi$. With the controls and inputs
\[
\begin{aligned}
Q_{n,t}&=(Q_{M,n},Q_{K,n,t},Q_{A,n,t},Q_{E,n},Q_{R,n,t},Q_{K,n,t}Q_{R,n,t})',\\
y_{n,t}&=(\theta_{C,n,t},\theta_{N,n,t},M_{I,n,t},M_{C,n},M_{N,n}),
\end{aligned}
\]
investment ($t=0,\dots,4$) and skills ($t=0,\dots,3$) have conditional means
\begin{equation}\label{eq:it_lik_means}
\begin{aligned}
    \mu^{I}_{n,t}&=\beta_{0,t}+\beta_{\theta_C,t}\ln\theta_{C,n,t}+\beta_{\theta_N,t}\ln\theta_{N,n,t}+\beta_{M_C,t}\ln M_{C,n}+\beta_{M_N,t}\ln M_{N,n}+\beta_{M_I,t}\ln M_{I,n,t-1}\\
    &\quad+Q_{n,t}'\beta_{Q,t}+\beta_{W_P,t}W_{P,n}+\beta_{W,t}W_{n,t},\\
    \mu^{\theta}_{i,n,t}&=\frac{\psi_{it}}{\rho_{it}}\ln\Bigl(\textstyle\sum_{j}\gamma_{i,j,t}\,y_{j,n,t}^{\rho_{it}/\phi_{i,j,t}}\Bigr)+\kappa_{0,i,t}+Q_{n,t}'\kappa_{it}+\kappa_{i,\eta,t}\,\eta_{n,t},\qquad \eta_{n,t}=\ln M_{I,n,t}-\mu^{I}_{n,t},
\end{aligned}
\end{equation}
and normal shocks with standard deviations $\sigma_{\eta,t}$ and $\sigma_{\nu,i,t}$. Marriage and the change in the number of children are probit and ordered probit, and income is linear with a normal error of standard deviation $\sigma_{W,t}$, with indices $\boldsymbol\alpha^{R\prime}_t\mathbf{r}^{R}_{n,t}$, $\boldsymbol\alpha^{K\prime}_t\mathbf{r}^{K}_{n,t}$, and $\boldsymbol\alpha^{W\prime}_t\mathbf{r}^{W}_{n,t}$ ($t=1,\dots,4$). At ages 5--6, gender, marital status, and the number of children are drawn from sequential probits with indices $\boldsymbol\alpha^{M\prime}\mathbf{r}^{M}_{n}$, $\boldsymbol\alpha^{R\prime}_0\mathbf{r}^{R}_{n,0}$, and $\boldsymbol\alpha^{K_2\prime}_0\mathbf{r}^{K}_{n,0}$ (two or more children) and $\boldsymbol\alpha^{K_3\prime}_0\mathbf{r}^{K}_{n,0}$ (three or more, given two or more). The regressors are
\begin{equation}\label{eq:it_lik_regressors}
\begin{aligned}
    \mathbf{r}^{M}_n&=(1,\mathbf{x}_n')',\qquad \mathbf{r}^{R}_{n,0}=(1,\mathbf{x}_n',Q_{M,n})',\qquad \mathbf{r}^{K}_{n,0}=(1,\mathbf{x}_n',Q_{M,n},Q_{R,n,0})',\\
    \mathbf{r}^{R}_{n,t}&=(1,\ln\theta_{C,n,t},\ln\theta_{N,n,t},\ln M_{C,n},\ln M_{N,n},\ln M_{I,n,t-1},\\
    &\qquad W_{P,n},W_{n,t-1},Q_{E,n},Q_{A,n,t},Q_{R,n,t-1},Q_{K,n,t-1},Q_{K,n,t-1}Q_{R,n,t-1})',\\
    \mathbf{r}^{W}_{n,t}&=(1,W_{n,t-1},W_{P,n},Q_{A,n,t},Q_{R,n,t},\ln M_{C,n},\ln M_{N,n},Q_{E,n},Q_{K,n,t-1})',\\
    \mathbf{r}^{K}_{n,t}&=(\ln\theta_{C,n,t},\ln\theta_{N,n,t},\ln M_{C,n},\ln M_{N,n},\ln M_{I,n,t-1},\\
    &\qquad W_{P,n},W_{n,t-1},Q_{E,n},Q_{A,n,t},Q_{R,n,t-1},Q_{R,n,t},Q_{K,n,t-1},Q_{K,n,t-1}Q_{R,n,t-1})'.
\end{aligned}
\end{equation}
The number of children enters the likelihood terms through categories. At ages 5--6 the category is $b_{n,0}=\min(Q_{K,n,0},3)$, so the two sequential probits describe one, two, or three or more children. At later ages, I model the change in the number of children in the house $Q_{K,n,t}-Q_{K,n,t-1}\in\{-1, 0, 1\}$ via an ordered probit model. A reported number of children enters every regressor at its actual count. An unreported one is drawn as a category, and the regressors then need a count for it: one, two, or three children at ages 5--6, and at later ages the previous period's count, reported or drawn, plus $-1$, $0$, or $+1$ for categories 1, 2, and 3, never falling below one child (this is enforced by post hoc flooring after the draw for change in children is made).

Let $\mathsf{z}_{n,j,t}$ be proxy $j$ for child $n$ at age $t$, $\mathcal{J}_{n,t}$ the proxies observed for child $n$ at $t$, $\mathcal{J}^{m}_{n}$ the observed maternal proxies, and $p(\mathsf{z}_{n,j,t}\mid f_{i(j),n,t})$ the measurement densities implied by equations (\ref{eq:basic_cont_meas}) and (\ref{eq:discrete_meas}), with period-specific age controls on the discrete proxies (the continuous proxies are residualized with respect to age via OLS before the MLE). Let $\mathbf{c}^{\mathrm{mis}}_n$ collect the unreported marital statuses and categories of the number of children and $\mathbf{W}^{\mathrm{mis}}_n$ the unreported incomes, $W_{n,0}$ included. Unreported values are latent variables integrated against their own laws of motion, so the likelihood of child $n$ is
\begingroup\allowdisplaybreaks
\begin{align}
    \mathcal{L}_n(\Theta)={}&\underbrace{\sum_{k=1}^{2}\pi_k}_{\substack{\text{mixture}\\ \text{component}}}
    \;\underbrace{\sum_{\mathbf{c}^{\mathrm{mis}}_n}}_{\substack{\text{unreported marriage}\\ \text{and number of children}}}
    \;\underbrace{\int\!\!\cdots\!\!\int}_{\substack{\text{latent factors and}\\ \text{unreported income}}}
    \;\underbrace{\phi_9\bigl(\mathbf{x}_n;\boldsymbol\mu^{(k)},\Sigma^{(k)}\bigr)}_{\substack{\text{initial condition}\\ \text{at ages 5--6}}}
    \notag\\[2pt]
    &\times\underbrace{\Phi\bigl((2Q_{M,n}-1)\,\boldsymbol\alpha^{M\prime}\mathbf{r}^{M}_n\bigr)\,
    \Phi\bigl((2Q_{R,n,0}-1)\,\boldsymbol\alpha^{R\prime}_0\mathbf{r}^{R}_{n,0}\bigr)}_{\text{gender and marital status at ages 5--6}}
    \notag\\[2pt]
    &\times\underbrace{\Phi\bigl((2\cdot 1\{b_{n,0}\geq 2\}-1)\,\boldsymbol\alpha^{K_2\prime}_0\mathbf{r}^{K}_{n,0}\bigr)\,
    \Phi\bigl((2\cdot 1\{b_{n,0}=3\}-1)\,\boldsymbol\alpha^{K_3\prime}_0\mathbf{r}^{K}_{n,0}\bigr)^{1\{b_{n,0}\geq 2\}}}_{\text{number of children at ages 5--6}}
    \notag\\[2pt]
    &\times\prod_{t=-1}^{4}\;\underbrace{\prod_{j\in\mathcal{J}_{n,t}}p\bigl(\mathsf{z}_{n,j,t}\mid f_{i(j),n,t}\bigr)}_{\substack{\text{skill and investment}\\ \text{proxies}}}
    \;\underbrace{\prod_{j\in\mathcal{J}^{m}_{n}}p\bigl(\mathsf{z}_{n,j}\mid f_{i(j),n}\bigr)}_{\text{maternal proxies}}
    \;\prod_{t=0}^{4}\underbrace{\frac{1}{\sigma_{\eta,t}}\,\phi\!\left(\frac{\ln M_{I,n,t}-\mu^{I}_{n,t}}{\sigma_{\eta,t}}\right)}_{\text{investment}}
    \notag\\[2pt]
    &\times\prod_{t=0}^{3}\prod_{i\in\{C,N\}}\underbrace{\frac{1}{\sigma_{\nu,i,t}}\,\phi\!\left(\frac{\ln\theta_{i,n,t+1}-\mu^{\theta}_{i,n,t}}{\sigma_{\nu,i,t}}\right)}_{\text{skill production}}
    \notag\\[2pt]
    &\times\prod_{t=1}^{4}\Biggl\{\underbrace{\Phi\bigl((2Q_{R,n,t}-1)\,\boldsymbol\alpha^{R\prime}_t\mathbf{r}^{R}_{n,t}\bigr)}_{\text{marital status}}
    \;\underbrace{\frac{1}{\sigma_{W,t}}\,\phi\!\left(\frac{W_{n,t}-\boldsymbol\alpha^{W\prime}_t\mathbf{r}^{W}_{n,t}}{\sigma_{W,t}}\right)}_{\text{income}}
    \notag\\[2pt]
    &\hphantom{{}\times\prod_{t=1}^{4}\Biggl\{}\times\underbrace{\Bigl[\Phi\bigl(\nabla^{K}_{c_{n,t},t}-\boldsymbol\alpha^{K\prime}_t\mathbf{r}^{K}_{n,t}\bigr)-\Phi\bigl(\nabla^{K}_{c_{n,t}-1,t}-\boldsymbol\alpha^{K\prime}_t\mathbf{r}^{K}_{n,t}\bigr)\Bigr]}_{\text{change in the number of children}}\Biggr\}
    \;d\mathbf{f}_n\,d\mathbf{W}^{\mathrm{mis}}_n,\label{eq:it_lik}
\end{align}\endgroup
and the log-likelihood is $\mathcal{LL}(\Theta)=\sum_{n=1}^{N}\ln\mathcal{L}_n(\Theta)$. The integral has 18 dimensions for a child with every income reported and up to 23 otherwise, and it has no closed form, so I evaluate it by importance sampling.

I evaluate the integral in equation (\ref{eq:it_lik}) by importance sampling, with a proposal that builds each draw in the order in which the model generates a child's history. Draw $s$ begins with the mixture component: $k_s=1$ if its first Sobol coordinate is below $q_1=q$ and $k_s=2$ otherwise, with $q_2=1-q$. The five coordinates of $\mathbf{f}_{n,0}$ are then drawn one at a time, in the order $\ln\theta_{C,n,0}$, $\ln\theta_{N,n,0}$, $\ln M_{I,n,-1}$, $\ln M_{C,n}$, $\ln M_{N,n}$, each given the earlier coordinates under component $k_s$. The observed initial covariates are not used in these draws and enter only through the weight term in equation (\ref{eq:it_lik_sim}). Next come an unreported $W_{n,0}$, drawn from its mixture conditional given the latent block and $(Q_{E,n},Q_{A,n},W_{P,n})$, an unreported $Q_{R,n,0}$, an unreported category $b_{n,0}$, and $\ln M_{I,n,0}$. Each later period $t=1,\dots,4$ follows the model's order within a period: $\ln\theta_{C,n,t}$ and $\ln\theta_{N,n,t}$, then an unreported $Q_{R,n,t}$, an unreported $W_{n,t}$, an unreported change in the number of children, and finally $\ln M_{I,n,t}$.

Unreported covariates are drawn from their laws of motion. Each latent factor is instead drawn from its distribution given everything drawn before it and given its own guiding proxies, so that fewer draws land where the child's likelihood is negligible. The guides of $\ln\theta_{C,n,t}$ are the three PIAT subtests at age $t$, and those of $\ln M_{C,n}$ are the six ASVAB subtests. The guides of $\ln\theta_{N,n,t}$, $\ln M_{I,n,t}$, and $\ln M_{N,n}$ are their anchor proxies: the BPI headstrong subscale, the number of books, and the Rosenberg item ``I feel I do not have much to be proud of.'' Because the draws follow the model's order, the distribution of each draw given everything before it is its own equation in the model. The identity $p(\vartheta\mid\text{past})\,p(\mathbf{z}^{g}\mid\vartheta)=p(\mathbf{z}^{g}\mid\text{past})\,p(\vartheta\mid\text{past},\mathbf{z}^{g})$ therefore cancels every transition density in equation (\ref{eq:it_lik}) against the proposal and replaces each guide's measurement density with its predictive density given past draws and observed proxies. Guides at age $t$ inform only the draws at age $t$, so later proxies affect earlier latents only through the weights. 

The observed initial covariates $\mathbf{x}^{o}_n$ are $Q_{E,n}$, $Q_{A,n,0}$, $W_{P,n}$, and $W_{n,0}$, without $W_{n,0}$ when it is unreported. Partition the mean and covariance of mixture component $k$ into the blocks of these observed coordinates ($o$) and of the latent block $\mathbf{f}_{n,0}$ (for brevity, denoted $f$). The mean and covariance of the observed coordinates conditional on the latent block are then
\[
\boldsymbol\mu^{(k)}_{o|f}(\mathbf{f})=\boldsymbol\mu^{(k)}_{o}+\Sigma^{(k)}_{of}(\Sigma^{(k)}_{ff})^{-1}(\mathbf{f}-\boldsymbol\mu^{(k)}_{f}),\qquad
\Sigma^{(k)}_{o|f}=\Sigma^{(k)}_{oo}-\Sigma^{(k)}_{of}(\Sigma^{(k)}_{ff})^{-1}\Sigma^{(k)}_{fo}.
\]

The objective the optimizer maximizes is then
\begingroup\allowdisplaybreaks
\begin{align}
    \widehat{\mathcal{LL}}(\Theta)={}&\sum_{n=1}^{N}\ln\frac{1}{S}\sum_{s=1}^{S}\exp\bigl(\ell_{n,s}(\Theta)\bigr),\notag\\[2pt]
    \ell_{n,s}(\Theta)={}&\underbrace{\ln\frac{\pi_{k_s}}{q_{k_s}}}_{\substack{\text{component}\\ \text{weight}}}
    +\underbrace{\ln\phi\bigl(\mathbf{x}^{o}_n;\,\boldsymbol\mu^{(k_s)}_{o|f}(\mathbf{f}_{n,0,s}),\,\Sigma^{(k_s)}_{o|f}\bigr)}_{\substack{\text{observed initial covariates}\\ \text{given the latent draw}}}
    \notag\\[2pt]
    &+\underbrace{\ln\Phi\bigl((2Q_{M,n}-1)\,\boldsymbol\alpha^{M\prime}\mathbf{r}^{M}_{n,s}\bigr)
    +d^{R}_{n,0}\ln\Phi\bigl((2Q_{R,n,0}-1)\,\boldsymbol\alpha^{R\prime}_0\mathbf{r}^{R}_{n,0,s}\bigr)}_{\text{gender and reported marital status at ages 5--6}}
    \notag\\[2pt]
    &+\underbrace{d^{K}_{n,0}\ln\Phi\bigl((2\cdot 1\{b_{n,0}\geq 2\}-1)\,\boldsymbol\alpha^{K_2\prime}_0\mathbf{r}^{K}_{n,0,s}\bigr)}_{\substack{\text{reported number of children at ages 5--6:}\\ \text{two or more children}}}
    \notag\\[2pt]
    &+\underbrace{d^{K}_{n,0}\,1\{b_{n,0}\geq 2\}\ln\Phi\bigl((2\cdot 1\{b_{n,0}=3\}-1)\,\boldsymbol\alpha^{K_3\prime}_0\mathbf{r}^{K}_{n,0,s}\bigr)}_{\text{reported three or more, given two or more}}
    \notag\\[2pt]
    &+\sum_{t=1}^{4}\Bigl[\underbrace{d^{R}_{n,t}\ln\Phi\bigl((2Q_{R,n,t}-1)\,\boldsymbol\alpha^{R\prime}_t\mathbf{r}^{R}_{n,t,s}\bigr)}_{\text{reported marital status}}
    +\underbrace{d^{W}_{n,t}\ln\Bigl(\frac{1}{\sigma_{W,t}}\phi\Bigl(\frac{W_{n,t}-\boldsymbol\alpha^{W\prime}_t\mathbf{r}^{W}_{n,t,s}}{\sigma_{W,t}}\Bigr)\Bigr)}_{\text{reported income}}
    \notag\\[2pt]
    &\hphantom{+\sum_{t=1}^{4}\Bigl[}
    +\underbrace{d^{K}_{n,t}\ln\Bigl(\Phi\bigl(\nabla^{K}_{c_{n,t,s},t}-\boldsymbol\alpha^{K\prime}_t\mathbf{r}^{K}_{n,t,s}\bigr)-\Phi\bigl(\nabla^{K}_{c_{n,t,s}-1,t}-\boldsymbol\alpha^{K\prime}_t\mathbf{r}^{K}_{n,t,s}\bigr)\Bigr)}_{\text{reported number of children}}\Bigr]
    \notag\\[2pt]
    &+\underbrace{\sum_{t=-1}^{4}\sum_{j\in\mathcal{J}_{n,t}\setminus\mathcal{J}^{g}_{n,t}}\ln p\bigl(\mathsf{z}_{n,j,t}\mid f_{i(j),n,t,s}\bigr)+\sum_{j\in\mathcal{J}^{m}_{n}\setminus\mathcal{J}^{g}_{n}}\ln p\bigl(\mathsf{z}_{n,j}\mid f_{i(j),n,s}\bigr)}_{\text{proxies not used as guides}}
    \notag\\[2pt]
    &+\underbrace{\sum_{\vartheta\in\mathcal{G}_n}\ln p\bigl(\mathbf{z}^{g}_{n,\vartheta}\mid m_{\vartheta,n,s},V_{\vartheta}\bigr).}_{\text{guides, at their predictive density given the past}}\label{eq:it_lik_sim}
\end{align}\endgroup
A subscript $s$ marks a quantity evaluated at draw $s$. The regressors, $\mu^{I}_{n,t}$, and $\mu^{\theta}_{i,n,t}$ use reported covariates where they exist and drawn values otherwise, so a reported outcome whose lag was unreported reweights the draws of that lag, and the category $c_{n,t,s}$ of a reported $Q_{K,n,t}$ varies across draws when $Q_{K,n,t-1}$ was drawn. The observed initial covariates $\mathbf{x}^{o}_n$ are $Q_{E,n}$, $Q_{A,n,0}$, $W_{P,n}$, and $W_{n,0}$, without $W_{n,0}$ when it is unreported, and their conditional moments given the latent block under mixture component $k$ are
\[
\boldsymbol\mu^{(k)}_{o|f}(\mathbf{f})=\boldsymbol\mu^{(k)}_{o}+\Sigma^{(k)}_{of}(\Sigma^{(k)}_{ff})^{-1}(\mathbf{f}-\boldsymbol\mu^{(k)}_{f}),\qquad \Sigma^{(k)}_{o|f}=\Sigma^{(k)}_{oo}-\Sigma^{(k)}_{of}(\Sigma^{(k)}_{ff})^{-1}\Sigma^{(k)}_{fo}.
\]
$\mathcal{G}_n$ is the set of latents whose guiding proxies child $n$ has; a latent whose guide is missing is drawn from its unconditional transition and keeps its ordinary proxy terms. $\mathcal{J}^{g}_{n,t}$ and $\mathcal{J}^{g}_{n}$ are the guides themselves. $m_{\vartheta,n,s}$ and $V_\vartheta$ are the mean and variance of latent $\vartheta$ given the draw's past: the conditional moments of each coordinate of $\mathbf{f}_{n,0}$ given the earlier coordinates under component $k_s$, $(\mu^{I}_{n,t,s},\sigma^2_{\eta,t})$ for investment, and $(\mu^{\theta}_{i,n,t,s},\sigma^2_{\nu,i,t})$ for skills. With continuous guides $\mathbf{z}^{g}=\mathbf{a}+\boldsymbol\lambda\vartheta+\boldsymbol\varepsilon$, $\boldsymbol\varepsilon\sim N(\mathbf{0},\operatorname{diag}(\boldsymbol\sigma^2))$, and a discrete guide $j$ observed in category $c$, the predictive densities are
\begin{equation}\label{eq:it_lik_guides}
\begin{aligned}
    p\bigl(\mathbf{z}^{g}\mid m,V\bigr)&=\phi_{J}\bigl(\mathbf{z}^{g};\,\mathbf{a}+\boldsymbol\lambda m,\,\operatorname{diag}(\boldsymbol\sigma^2)+V\boldsymbol\lambda\boldsymbol\lambda'\bigr),\\
    p\bigl(\dot z_{j}=c\mid m,V\bigr)&=\Phi\!\left(\frac{\nabla_{j,c}-X_{j}'\beta_{j}-\lambda_{j}m}{\sqrt{1+\lambda_j^2V}}\right)-\Phi\!\left(\frac{\nabla_{j,c-1}-X_{j}'\beta_{j}-\lambda_{j}m}{\sqrt{1+\lambda_j^2V}}\right).
\end{aligned}
\end{equation}
The draws are one unscrambled Sobol sequence of $S=2^{15}=32{,}768$ points in 54 dimensions, shared by every child and every parameter value, so equation (\ref{eq:it_lik_sim}) is a smooth function of $\Theta$ except through the drawn values of unreported discrete covariates. The allocation $q=29{,}392/32{,}768\approx 0.897$ is the starting value of $\pi$ rounded to the Sobol grid and stays fixed, so that $\pi$ enters only through the weight $\pi_{k_s}/q_{k_s}$ and the objective is differentiable in it. I maximize equation (\ref{eq:it_lik_sim}) with L-BFGS-B, using its analytic gradient.

\section{Proofs}\label{sec:proofs}

\subsection{Mechanism Behind Negative Slope}\label{sec:proofs:cov_var_explain}

\begin{theorem}[Log-skill derivative of the expected investment elasticity]
Let $\theta_{t+1}=\bigl(\theta_t^{-1}+I_t^{-1}\bigr)^{-1}$. Then the investment and skill elasticities of next-period skill are
$$
\varepsilon_I \equiv \frac{\partial \ln\theta_{t+1}}{\partial \ln I_t}=\frac{\theta_t}{I_t+\theta_t},
\qquad
\varepsilon_\theta \equiv \frac{\partial \ln\theta_{t+1}}{\partial \ln\theta_t}=\frac{I_t}{I_t+\theta_t},
$$
so that $\varepsilon_I+\varepsilon_\theta=1$. Suppose $(\ln I_t,\ln\theta_t)$ are jointly normal with $\Var(\ln\theta_t)\in(0,\infty)$, and define
$$
\beta_t \equiv \frac{\Cov(\ln I_t,\ln\theta_t)}{\Var(\ln\theta_t)}.
$$
Then $\E[\varepsilon_I\mid\theta_t]$ is everywhere differentiable in $\ln\theta_t$, with
$$
\frac{d}{d\ln\theta_t}\,\E\bigl[\varepsilon_I \bigm| \theta_t\bigr]
=
(1-\beta_t)\,\E\bigl[\varepsilon_I\,\varepsilon_\theta \bigm| \theta_t\bigr].
$$
\end{theorem}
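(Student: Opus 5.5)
The plan is to write the conditional expectation as a Gaussian integral in log coordinates and differentiate under the integral sign.

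\textbf{Elasticities.} Set $x=\ln\theta_t$ and $u=\ln I_t$. From $\ln\theta_{t+1}=-\ln(e^{-x}+e^{-u})$ we get
\[
\varepsilon_I=\frac{e^{-u}}{e^{-x}+e^{-u}}=\frac{\theta_t}{I_t+\theta_t}
\]
and symmetrically $\varepsilon_\theta=I_t/(I_t+\theta_t)$, so $\varepsilon_I+\varepsilon_\theta=1$. Writing $\varepsilon_I=\sigma(x-u)$ with $\sigma(z)=(1+e^{-z})^{-1}$ the logistic function, we have $\sigma'=\sigma(1-\sigma)$. This gives
\[
\partial\varepsilon_I/\partial x=\varepsilon_I\varepsilon_\theta,\qquad \partial\varepsilon_I/\partial u=-\varepsilon_I\varepsilon_\theta .
\]

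\textbf{Gaussian representation.} Joint normality gives the regression representation $u=\alpha+\beta_t x+e$, with $e\sim N(0,s^2)$ independent of $x$. Here $\alpha=\E[u]-\beta_t\E[x]$ and $s^2=\Var(u)-\beta_t^2\Var(x)\ge 0$, which covers the degenerate case $s=0$. Conditioning on $\theta_t$ is the same as conditioning on $x$. Therefore
\[
\E[\varepsilon_I\mid x]=h(x)\equiv\E_e\bigl[\sigma\bigl((1-\beta_t)x-\alpha-e\bigr)\bigr],
\]
an expectation over a distribution that does not depend on $x$.

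\textbf{Differentiation.} For each fixed $e$ the integrand is smooth in $x$, with derivative $(1-\beta_t)\sigma'(\cdot)$. Since $0<\sigma'\le 1/4$, the derivative is dominated by the constant $|1-\beta_t|/4$, which is integrable. Dominated convergence, via the mean value theorem applied to difference quotients, then lets us differentiate under the expectation, so $h$ is differentiable at every $x$ and
\[
h'(x)=(1-\beta_t)\,\E_e\bigl[\sigma'\bigl((1-\beta_t)x-\alpha-e\bigr)\bigr]=(1-\beta_t)\,\E[\varepsilon_I\varepsilon_\theta\mid x].
\]
Equivalently, by the chain rule, the total derivative $\partial_x\varepsilon_I+\beta_t\,\partial_u\varepsilon_I$ equals $(1-\beta_t)\varepsilon_I\varepsilon_\theta$. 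This is the stated identity.

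\textbf{Main obstacle.} The only delicate step is justifying the interchange of derivative and expectation, and the uniform bound on $\sigma'$ handles it. The essential point is to condition through the regression representation, so that $x$ enters only the integrand and never the integrating measure. Differentiating the conditional density directly would also work but is messier.
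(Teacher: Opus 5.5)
Your proposal is correct and follows essentially the same route as the paper. Both arguments write $\varepsilon_I$ as a logistic function of the log skill--investment gap, use the Gaussian regression representation so that $\ln\theta_t$ enters only the integrand, and differentiate under the expectation by dominated convergence with the uniform bound $|1-\beta_t|/4$ on $\sigma'$; the only cosmetic difference is that the paper uses the symmetry of $Z$ to flip the sign of the noise term, which you correctly note is unnecessary.
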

 
\begin{proof}
Write $L_I\equiv\ln I_t$, $L_\theta\equiv\ln\theta_t$, and define the log skill--investment ratio
$$
D \equiv L_\theta - L_I .
$$
The key structural observation is that the investment elasticity is a logistic function of $D$:
$$
\varepsilon_I
=
\frac{\theta_t}{I_t+\theta_t}
=
\frac{1}{1+e^{L_I-L_\theta}}
=
\sigma(D),
\qquad
\sigma(x)\equiv\frac{1}{1+e^{-x}},
$$
and the logistic function satisfies the identity $\sigma'=\sigma(1-\sigma)$ with $0<\sigma'\le \tfrac14$.
 
By joint normality and $\Var(L_\theta)>0$, the conditional distribution of $L_I$ given $L_\theta$ admits the linear representation
$$
L_I
=
\E[L_I]+\beta_t\bigl(L_\theta-\E[L_\theta]\bigr)+s\,Z,
\qquad
Z\sim N(0,1),\ Z\perp L_\theta,
$$
where $s^2\equiv\Var(L_I)-\beta_t^2\Var(L_\theta)\ge 0$. Hence, conditional on $L_\theta=u$,
$$
D
=
u-L_I
\;\overset{d}{=}\;
(1-\beta_t)\,u+c+sZ,
\qquad
c\equiv \beta_t\,\E[L_\theta]-\E[L_I],
$$
using the symmetry $-Z\overset{d}{=}Z$. Conditioning on $\theta_t$ is equivalent to conditioning on $L_\theta=\ln\theta_t$, so the object of interest is
$$
M(u)
\equiv
\E\bigl[\varepsilon_I \bigm| L_\theta=u\bigr]
=
\E\bigl[\sigma\bigl((1-\beta_t)u+c+sZ\bigr)\bigr].
$$
The conditional mean of $D$ moves one-for-one with $u$ up to the passthrough coefficient $(1-\beta_t)$; the theorem is now a one-line differentiation.
 
\emph{Interchange of derivative and expectation.} Fix $u\in\mathbb{R}$ and $h\neq 0$. By the mean value theorem applied to $\sigma$, for every realization of $Z$,
$$
\left|
\frac{\sigma\bigl((1-\beta_t)(u+h)+c+sZ\bigr)-\sigma\bigl((1-\beta_t)u+c+sZ\bigr)}{h}
\right|
\le
|1-\beta_t|\,\sup_{x\in\mathbb{R}}\sigma'(x)
=
\frac{|1-\beta_t|}{4}.
$$
The bound is a finite constant, free of $Z$, $u$, and $h$, hence trivially integrable; dominated convergence therefore yields
$$
M'(u)
=
(1-\beta_t)\,
\E\bigl[\sigma'\bigl((1-\beta_t)u+c+sZ\bigr)\bigr].
$$
 
\emph{Conclusion.} By the logistic identity, $\sigma'(D)=\sigma(D)\bigl(1-\sigma(D)\bigr)=\varepsilon_I\,\varepsilon_\theta$, and since $(1-\beta_t)u+c+sZ$ has the conditional distribution of $D$ given $L_\theta=u$,
$$
\frac{d}{d\ln\theta_t}\,\E\bigl[\varepsilon_I\bigm|\theta_t\bigr]
=
M'(\ln\theta_t)
=
(1-\beta_t)\,\E\bigl[\varepsilon_I\,\varepsilon_\theta\bigm|\theta_t\bigr].
$$
The second displayed form follows from $\varepsilon_I\varepsilon_\theta=\theta_t I_t/(I_t+\theta_t)^2$.
\end{proof}

\subsection{Proposition \ref{prop:translog_bias}}\label{sec:proofs:translog_bias}


\begin{proposition}[Omitted-quadratic bias in the translog-like interaction coefficient]\label{prop:translog_bias}
Let $x=\ln\theta_t$ and $y=\ln I_t$, collect the quadratic regressors as
$$
R=(1,\,x,\,y,\,xy,\,x^{2},\,y^{2})',
$$
and suppose the next-period outcome satisfies
$$
\ln\theta_{t+1}=g(x,y)+\eta .
$$
Maintain the following assumptions.
\begin{enumerate}
\item[(i)] \emph{Moments.} $\E[x^{4}]<\infty$ and $\E[y^{4}]<\infty$; consequently every component of $R$ has finite second moment.
\item[(ii)] \emph{Linear independence.} The regressors $1$, $x$, $y$, $xy$, $x^{2}$, $y^{2}$ are linearly independent as random variables: the only $a\in\mathbb{R}^{6}$ with $a'R=0$ almost surely is $a=0$. Equivalently, $\E[RR']$ has full rank, so there is no perfect multicollinearity among the quadratic regressors. 
\item[(iii)] \emph{Outcome and shock.} $\E\big[(\ln\theta_{t+1})^{2}\big]<\infty$; the shock $\eta$ is integrable and mean-independent of current inputs, $\E[\eta\mid x,y]=0$.
\end{enumerate}

Define the objects of interest as follows. Let $\beta_{\theta,I}^{\mathrm{full}}$ denote the coefficient on $xy$ in the population least-squares regression of $\ln\theta_{t+1}$ on $(1,x,y,xy,x^{2},y^{2})$ (the \emph{full translog} regression), and $\beta_{\theta,I}^{\mathrm{like}}$ the coefficient on $xy$ in the population regression of $\ln\theta_{t+1}$ on $(1,x,y,xy)$ (the \emph{translog-like} regression). Let
$$
P_{2}g(x,y)=c_{0}+c_{x}x+c_{y}y+c_{xy}\,xy+c_{xx}\,x^{2}+c_{yy}\,y^{2}
$$
be the best quadratic predictor of the technology, i.e.\ the population least-squares projection of $g$ onto polynomials of degree at most two in $(x,y)$; if $g$ is itself quadratic (an exact translog), then $P_{2}g=g$ and the $c$'s are its structural coefficients. For $j\in\{x,y\}$, let $\pi_{jj}$ denote the coefficient on $xy$ in the population regression of the omitted regressor $j^{2}$ on the included regressors $(1,x,y,xy)$. Finally, for any square-integrable random variable $w$, write $\widetilde{w}$ for the residual from the population regression of $w$ on $(1,x,y)$. Assumption (ii) guarantees that all of these coefficients are uniquely determined and that $\operatorname{Var}(\widetilde{xy})>0$.

Then:
\begin{enumerate}
\item[(a)] \emph{The full regression is clean:} $\beta_{\theta,I}^{\mathrm{full}}=c_{xy}$.
\item[(b)] \emph{The truncated regression is contaminated:}
$$
\beta_{\theta,I}^{\mathrm{like}}=c_{xy}+\pi_{xx}\,c_{xx}+\pi_{yy}\,c_{yy},
\qquad\text{hence}\qquad
\beta_{\theta,I}^{\mathrm{like}}-\beta_{\theta,I}^{\mathrm{full}}
=\pi_{xx}\,c_{xx}+\pi_{yy}\,c_{yy}.
$$
\item[(c)] \emph{Covariance form of the contamination weights:}
$$
\pi_{xx}=\frac{\operatorname{Cov}\!\big(\widetilde{xy},\,\widetilde{x^{2}}\big)}{\operatorname{Var}\!\big(\widetilde{xy}\big)},
\qquad
\pi_{yy}=\frac{\operatorname{Cov}\!\big(\widetilde{xy},\,\widetilde{y^{2}}\big)}{\operatorname{Var}\!\big(\widetilde{xy}\big)} .
$$
\item[(d)] \emph{Invariance to centering.} Let $\dot x=x-\E[x]$ and $\dot y=y-\E[y]$, and residualize centered quantities on $(1,\dot x,\dot y)$. Then
$$
\widetilde{\dot x\dot y}=\widetilde{xy},
\qquad
\widetilde{\dot x^{2}}=\widetilde{x^{2}},
\qquad
\widetilde{\dot y^{2}}=\widetilde{y^{2}}
\quad\text{almost surely,}
$$
and the coefficients $c_{xy}$, $c_{xx}$, $c_{yy}$, $\beta_{\theta,I}^{\mathrm{full}}$, $\beta_{\theta,I}^{\mathrm{like}}$, $\pi_{xx}$, and $\pi_{yy}$ are identical whether computed from $(x,y)$ or from $(\dot x,\dot y)$. In particular, every moment in (c) may be evaluated after demeaning the logs, which is convenient under distributional assumptions such as joint normality.
\end{enumerate}
\end{proposition}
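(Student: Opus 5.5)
The approach is to work entirely in the Hilbert space $L^{2}$ of square-integrable random variables with inner product $\E[uv]$, so that every population regression is an orthogonal projection. Assumption (i) places every component of $R$ in $L^{2}$. Assumption (ii) makes the projections onto $\mathrm{span}(R)$ and onto $\mathrm{span}(1,x,y,xy)$ have unique coefficient vectors, since the Gram matrices are nonsingular. Assumption (iii) puts $\ln\theta_{t+1}$ in $L^{2}$. Because $\eta$ is integrable, $g=\ln\theta_{t+1}-\eta$ is integrable, but I will need $g\in L^{2}$ in order to define $P_{2}g$. I will handle this by noting that $\E[\ln\theta_{t+1}\mid x,y]=g$ and that conditional expectation contracts $L^{2}$ norms, so $g\in L^{2}$ and hence $\eta\in L^{2}$.

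\textbf{Part (a).} The tower property gives $\E[\eta\, r(x,y)]=0$ for every $r\in\mathrm{span}(R)$. Therefore the projection of $\ln\theta_{t+1}$ onto $\mathrm{span}(R)$ equals the projection of $g$, which is $P_{2}g$. Uniqueness of coefficients then gives $\beta^{\mathrm{full}}_{\theta,I}=c_{xy}$.

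\textbf{Part (b).} I use iterated projections. Because $\mathrm{span}(1,x,y,xy)\subset\mathrm{span}(R)$, projecting $\ln\theta_{t+1}$ onto the smaller space is the same as first projecting onto $\mathrm{span}(R)$ and then onto the smaller space. So $\beta^{\mathrm{like}}$ is the $xy$-coefficient of the projection of $P_{2}g$. By linearity, the terms $c_{0}+c_{x}x+c_{y}y+c_{xy}xy$ project to themselves. The terms $c_{xx}x^{2}$ and $c_{yy}y^{2}$ contribute $c_{xx}\pi_{xx}$ and $c_{yy}\pi_{yy}$ to the $xy$-coefficient. Subtracting the result of (a) gives the difference formula.

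\textbf{Part (c).} This is Frisch--Waugh--Lovell in population form. The $xy$-coefficient from regressing $w$ on $(1,x,y,xy)$ equals the coefficient from regressing $w$ (or $\widetilde w$) on $\widetilde{xy}$, which is $\E[\widetilde{xy}\,\widetilde w]/\E[\widetilde{xy}^{2}]$. Since the residuals have mean zero (the constant is among the regressors), these moments are a covariance and a variance. I need $\operatorname{Var}(\widetilde{xy})>0$; if it were zero, $xy$ would be an almost-sure linear combination of $1,x,y$, contradicting (ii). I will write out the FWL step briefly: decompose the fitted value, and use orthogonality of $\widetilde{xy}$ to $(1,x,y)$.

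\textbf{Part (d).} The key observation is that $\dot x\dot y=xy-\E[y]x-\E[x]y+\E[x]\E[y]$, and similarly for $\dot x^{2}$ and $\dot y^{2}$. Each centered quantity therefore differs from the uncentered one by an element of $\mathrm{span}(1,x,y)=\mathrm{span}(1,\dot x,\dot y)$, so the residuals coincide almost surely. For the coefficients, the map from $(1,x,y,xy,x^{2},y^{2})$ to the centered basis is an invertible, block upper-triangular linear change of basis in which each quadratic term carries coefficient one on itself. The spans are equal, so the fitted values are identical. Re-expressing those fitted values in the new basis leaves the coefficients on the quadratic terms $xy,x^{2},y^{2}$ unchanged, because lower-order terms absorb the shifts. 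This applies to $P_{2}g$, to both regressions, and to the auxiliary regressions that define $\pi_{jj}$. Alternatively, $\pi_{jj}$ is invariant directly through (c).

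The main obstacle is bookkeeping rather than depth. The delicate point is justifying $g\in L^{2}$ and the orthogonality of $\eta$ under only integrability, which I will obtain through the conditional-expectation argument above.
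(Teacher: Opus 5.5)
Your proposal is correct and follows essentially the same route as the paper's proof. You establish $g\in L^{2}$ through the $L^{2}$-contraction (conditional Jensen) property of conditional expectation, use the tower property to replace the outcome by $g$, and read off $c_{xy}$ from the projection onto the quadratic span. You then get (b) by projecting $P_{2}g$ onto $\mathrm{span}(1,x,y,xy)$, which the paper writes as $b=\gamma_{1}+\Pi\gamma_{2}$, and (c) by Frisch--Waugh--Lovell. For (d) you use the same centering identities, and your block-triangular change of basis is the paper's ``collect monomials and use uniqueness of coordinates'' step; the paper also obtains the invariance of $\pi_{jj}$ through (c), as in your alternative.
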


\begin{proof}
The argument has four steps.

\medskip\noindent\emph{Step 0: integrability, and reduction from the outcome to the technology.}
Since $\ln\theta_{t+1}$ is square-integrable, conditional Jensen's inequality gives
$\E\big[\E[\ln\theta_{t+1}\mid x,y]^{2}\big]\le\E\big[(\ln\theta_{t+1})^{2}\big]<\infty$.
Taking conditional expectations in $\ln\theta_{t+1}=g(x,y)+\eta$ and using $\E[\eta\mid x,y]=0$ yields
$$
g(x,y)=\E[\ln\theta_{t+1}\mid x,y]\quad\text{a.s.},
$$
so $g(x,y)$ is square-integrable, and hence so is $\eta=\ln\theta_{t+1}-g(x,y)$. 

Next, $\eta$ is orthogonal to every quadratic regressor: each component $R_{k}$ of $R$ is square-integrable by (i) and is a function of $(x,y)$, so $R_{k}\eta$ is integrable by the Cauchy--Schwarz inequality and, by iterated expectations,
$$
\E[R_{k}\,\eta]=\E\big[R_{k}\,\E[\eta\mid x,y]\big]=0 .
$$
Hence for any subvector $S$ of $R$ we have $\E[S\,\ln\theta_{t+1}]=\E[S\,g(x,y)]$, while $\E[SS']$ is the same in both regressions; the population regression of $\ln\theta_{t+1}$ on $S$ therefore has exactly the same coefficients as the population regression of $g(x,y)$ on $S$. For the rest of the proof we may replace the outcome by $g$; in particular, the best quadratic predictor of the outcome coincides with $P_{2}g$.

\medskip\noindent\emph{Step 1: the full regression recovers the best quadratic predictor.}
Partition $R_{1}=(1,x,y,xy)'$ and $R_{2}=(x^{2},y^{2})'$, and let $\mathcal{Q}\subset L^{2}$ denote the linear span of the components of $R$: the space of polynomials of degree at most two in $(x,y)$, viewed as random variables. By (ii) the components of $R$ form a basis of $\mathcal{Q}$, and both $\E[RR']$ and $\E[R_{1}R_{1}']$ are invertible, so all projection coefficients below are uniquely determined. The full regression projects $g$ onto $\mathcal{Q}$, and the projection of $g$ onto $\mathcal{Q}$ is, by definition, $P_{2}g$. Reading off the (unique) coordinate on $xy$ gives $\beta_{\theta,I}^{\mathrm{full}}=c_{xy}$, which is (a).

\medskip\noindent\emph{Step 2: omitted-variable algebra for the truncated regression.}
Write $g=P_{2}g+\delta$ with $\delta\perp\mathcal{Q}$; in particular $\delta$ is orthogonal to every component of $R_{1}$, so $g$ and $P_{2}g$ have the same projection on $R_{1}$. Writing $P_{2}g=R_{1}'\gamma_{1}+R_{2}'\gamma_{2}$ with $\gamma_{1}=(c_{0},c_{x},c_{y},c_{xy})'$ and $\gamma_{2}=(c_{xx},c_{yy})'$, the translog-like coefficient vector is
$$
b=\E[R_{1}R_{1}']^{-1}\E\big[R_{1}\,P_{2}g\big]
=\E[R_{1}R_{1}']^{-1}\big(\E[R_{1}R_{1}']\,\gamma_{1}+\E[R_{1}R_{2}']\,\gamma_{2}\big)
=\gamma_{1}+\Pi\,\gamma_{2},
$$
where $\Pi:=\E[R_{1}R_{1}']^{-1}\E[R_{1}R_{2}']$. The two columns of the $4\times 2$ matrix $\Pi$ are precisely the coefficient vectors from the population regressions of $x^{2}$ and of $y^{2}$ on $R_{1}$; its row corresponding to the regressor $xy$ is therefore $(\pi_{xx},\pi_{yy})$ by the definition of $\pi_{jj}$. Reading off the $xy$-coordinate of $b$ gives
$$
\beta_{\theta,I}^{\mathrm{like}}=c_{xy}+\pi_{xx}\,c_{xx}+\pi_{yy}\,c_{yy},
$$
and subtracting (a) yields the bias formula, proving (b).

\medskip\noindent\emph{Step 3: covariance form of the contamination weights.}
Fix $j\in\{x,y\}$ and apply the Frisch--Waugh--Lovell theorem to the regression of $j^{2}$ on $\big((1,x,y),\,xy\big)$: the coefficient on $xy$ equals the slope of the simple regression of the residualized dependent variable $\widetilde{j^{2}}$ on the residualized regressor $\widetilde{xy}$,
$$
\pi_{jj}
=\frac{\E\big[\widetilde{xy}\;\widetilde{j^{2}}\big]}{\E\big[\widetilde{xy}^{\,2}\big]} .
$$
Because the first-stage regression includes a constant, both residuals have mean zero, so the numerator equals $\operatorname{Cov}\big(\widetilde{xy},\widetilde{j^{2}}\big)$ and the denominator equals $\operatorname{Var}\big(\widetilde{xy}\big)$; the latter is strictly positive by (ii), since $xy$ does not lie in the span of $(1,x,y)$. This proves (c). 

\medskip\noindent\emph{Step 4: invariance to centering.}
With $\dot x=x-\E[x]$ and $\dot y=y-\E[y]$ we have the exact identities
$$
\dot x\dot y=xy-\E[y]\,x-\E[x]\,y+\E[x]\E[y],
\qquad
\dot x^{2}=x^{2}-2\E[x]\,x+\E[x]^{2},
\qquad
\dot y^{2}=y^{2}-2\E[y]\,y+\E[y]^{2},
$$
so each centered second-order monomial equals its raw counterpart plus an element of $\operatorname{span}\{1,x,y\}$, and consequently
$$
\operatorname{span}\{1,\dot x,\dot y\}=\operatorname{span}\{1,x,y\},
\qquad
\operatorname{span}\{1,\dot x,\dot y,\dot x\dot y\}=\operatorname{span}\{1,x,y,xy\},
\qquad
\operatorname{span}\{1,\dot x,\dot y,\dot x\dot y,\dot x^{2},\dot y^{2}\}=\mathcal{Q}.
$$
Two consequences follow.

First, the residuals in (c) are literally unchanged. Residualizing on $(1,\dot x,\dot y)$ is residualizing on the same subspace as $(1,x,y)$, and the discrepancies $\dot x\dot y-xy$ and $\dot j^{\,2}-j^{2}$ lie in that subspace, so residualization annihilates them: $\widetilde{\dot x\dot y}=\widetilde{xy}$ and $\widetilde{\dot j^{\,2}}=\widetilde{j^{2}}$ almost surely. By (c), the weights $\pi_{xx}$ and $\pi_{yy}$ are therefore identical in the raw and centered parameterizations.

Second, because the regressor spans coincide, each projection of $g$---translog-like or full---is the same random variable whether computed from raw or centered regressors; only its coordinates can differ between the two bases. Substituting the identities above into the centered representation of the common fitted value and collecting monomials, the only second-order monomial contributed by $\dot x\dot y$ is $xy$ (with unit weight), and likewise $\dot x^{2}$ contributes only $x^{2}$ and $\dot y^{2}$ only $y^{2}$; every remaining correction has degree at most one and lies in $\operatorname{span}\{1,x,y\}$. Uniqueness of coordinates in the basis $(1,x,y,xy)$---respectively $(1,x,y,xy,x^{2},y^{2})$---then forces the centered coefficient on $\dot x\dot y$ to equal the raw coefficient on $xy$, and in the full case likewise for $\dot x^{2}$ and $\dot y^{2}$. Hence $\beta_{\theta,I}^{\mathrm{like}}$, $\beta_{\theta,I}^{\mathrm{full}}$, and $(c_{xy},c_{xx},c_{yy})$ are invariant to centering, and (d) follows.
\end{proof}

\end{document}